\newtheorem{theorem}{Theorem}
\newtheorem{lemma}{Lemma}
\newtheorem{corollary}{Corollary}
\newtheorem{definition}{Definition}
\newtheorem{proposition}{Proposition}
\begin{document}
\title{Qubit State Discrimination using Post-measurement Information}
\author{Donghoon Ha}
\affiliation{Department of Applied Mathematics and Institute of Natural Sciences, Kyung Hee University, Yongin 17104, Republic of Korea}
\author{Jeong San Kim}
\email{freddie1@khu.ac.kr}
\affiliation{Department of Applied Mathematics and Institute of Natural Sciences, Kyung Hee University, Yongin 17104, Republic of Korea}
\author{Younghun Kwon}
\email{yyhkwon@hanyang.ac.kr}
\affiliation{Department of Applied Physics, Center for Bionano Intelligence Education and Research, Hanyang University, Ansan 15588, Republic of Korea}
%%%%%%%%%%%%%%%%%%%%%%%%%
%       Abstract        %
%%%%%%%%%%%%%%%%%%%%%%%%%
\begin{abstract}
We consider the optimal discrimination of nonorthogonal qubit states with post-measurement information and 
provide an analytic structure of the optimal measurements. We also show that there is always a null optimal measurement when post-measurement information is given.
Further, in discriminating four states using post-measurement information, we analytically provide the optimal probability of correct guessing and show that the uniqueness of optimal measurement is equivalent to the non-existence of non-null optimal measurement with post-measurement information.
\end{abstract}
\maketitle
%%%%%%%%%%%%%%%%%%%%%%%%%

%%%%%%%%%%%%%%%%%%%%%%%%%
%        Section        %
%%%%%%%%%%%%%%%%%%%%%%%%%
\section{Introduction}
Whereas orthogonal quantum states can be perfectly discriminated in quantum physics, it is not generally true for nonorthogonal states\cite{ref:chef2000,ref:barn20091,ref:berg2010,ref:bae2015}. For these reasons, various measurement strategies have been studied 
for optimal discrimination of nonorthogonal states, such as \emph{minimum-error discrimination}(ME), unambiguous discrimination, and maximum-confidence discrimination\cite{ref:hels1976,ref:ivan1987,ref:diek1988,ref:pere1988,ref:jaeg1995,ref:ha2015,ref:crok2006}.
ME is a discrimination scheme to minimize the average error probability without inconclusive results.
Although a necessary and sufficient condition for realizing a minimum-error measurement in general cases is well known\cite{ref:hole1979,ref:yuen1975,ref:elda20031,ref:barn20092}, the general solution for ME is not yet known except for ME of two states, symmetric states, and qubit states\cite{ref:hels1976,ref:hunt2004,ref:sams2009,ref:jafa2011,ref:ha2013,ref:ha2014,ref:ban1997,ref:chou2003,ref:bae20131,ref:bae20132}.
In some cases, ME can be performed without the help of measurement, simply by guessing the state with the greatest prior probability is prepared\cite{ref:hunt2003}.\\
%%%%%%%%%%%%%%%%%%%%%%%%%
\indent When the post-measurement information about the prepared subensemble is available,
some nonorthogonal states can be perfectly discriminated\cite{ref:akib2019}. However, in general, nonorthogonal qubit states cannot be perfectly discriminated even with post-measurement information about the prepared subensemble. Therefore, for the case of qubit state, it is important to investigate \emph{minimizing the average error probability with post-measurement information}(MEPI)\cite{ref:ball2008,ref:gopa2010,ref:carm2018}. Also, it is meaningful since MEPI is known to have a relation with the incompatibility of measurements\cite{ref:carm2018,ref:hein2016,ref:carm2019,ref:skrz20191,ref:uola2019,ref:skrz20192}.\\
%%%%%%%%%%%%%%%%%%%%%%%%%
\indent MEPI problem can be understood in view of ME problem;
a MEPI of quantum state ensemble consisting of $m$ subensembles with $n_{1},\ldots,n_{m}$ states 
can be translated into a ME of quantum state ensemble with $\prod_{b=1}^{m}n_{b}$ states
by modifying the states and prior probabilities in the original ensemble\cite{ref:gopa2010}.
This approach can be useful for characterizing MEPI of qubit states 
because useful properties and analytical results for ME of qubit states are already well known\cite{ref:hunt2004,ref:ha2013,ref:ha2014,ref:bae20131,ref:bae20132}.\\
%%%%%%%%%%%%%%%%%%%%%%%%%
\indent In this paper, we analyze MEPI of nonorthogonal qubit states and 
provide an analytic structure of the optimal measurements based on the analysis of some ME problem.
We first show that a null optimal measurement exists for any MEPI of qubit states.
We also analytically provide a necessary and sufficient condition that pre-measurement information is strictly more favorable than post-measurement information when all subensembles have two states.
Moreover, we characterize the optimal measurements by classifying MEPI into the two cases if it is possible or not without the help of measurement. In particular, for the case where the ensemble consists of two subensembles with two states and pre-measurement information is strictly more favorable than post-measurement information, we analytically provide the optimal probability of correct guessing. In this case, we further show that the uniqueness of optimal measurement is equivalent to the non-existence of non-null MEPI measurement.\\
%%%%%%%%%%%%%%%%%%%%%%%%%
\indent This paper is organized as follows: 
In Sect.~\ref{sec:gamq}, we review and analyze ME of qubit states.
By applying the analysis for ME of qubit states to the ME problem associated with MEPI problem, we provide
our results for MEPI of qubit states in Sect.~\ref{sec:gamf}. 
In Sect.~\ref{sec:conc}, we conclude our results.
%%%%%%%%%%%%%%%%%%%%%%%%%

%%%%%%%%%%%%%%%%%%%%%%%%%
%        Section        %
%%%%%%%%%%%%%%%%%%%%%%%%%
\section{Minimum-error Discrimination of Qubit States}\label{sec:gamq}
In two-level quantum systems (qubit), a state is expressed by a density operator on two-dimensional complex Hilbert space $\mathcal{H}$ and
a measurement with a finite outcome set $\Omega=\{1,\ldots,n\}$ is described by a \emph{positive operator valued measure}(POVM)  $\mathcal{M}$,
which is composed of $n$ positive semidefinite operators $M_{1},\ldots,M_{n}$ satisfying $\sum_{i\in\Omega}M_{i}=\mathbbm{1}$.
Here, $\mathbbm{1}$ is the identity operator on $\mathcal{H}$.
We say that $i\in\Omega$ is null(non-null) if $M_{i}\in\Omega$ is zero(non-zero). 
We also say that $\mathcal{M}$ is \emph{null} if it has at least one null outcome, otherwise \emph{non-null}.
\\
%%%%%%%%%%%%%%%%%%%%%%%%%
\indent In this section, we consider ME of \emph{qubit} state ensemble
$\mathcal{E}=\{\eta_{i},\rho_{i}\}_{i\in\Omega}$ in which the qubit state $\rho_{i}$ is prepared with the probability $\eta_{i}$.
We specify $\eta_{1}$ as the greatest prior probability to reduce the repetitive representation, that is,
\begin{equation}
\eta_{1}\geqslant\eta_{2},\ldots,\eta_{n}.
\end{equation}
A qubit state $\rho_{i}$ can be described using the Bloch vector $\bm{\nu}_{i}$ in the three-dimensional real space $\mathbb{R}^{3}$,
\begin{equation*}
\begin{array}{c}
\rho_{i}=\frac{1}{2}(\mathbbm{1}+\bm{\nu}_{i}\cdot\bm{\sigma}),\ i\in\Omega.
\end{array}
\end{equation*}
where $\bm{\sigma}$ is the Pauli matrices $(\sigma_{X},\sigma_{Y},\sigma_{Z})$. \\
%%%%%%%%%%%%%%%%%%%%%%%%%
\indent Given a qubit state ensemble $\mathcal{E}=\{\eta_{i},\rho_{i}\}_{i\in\Omega}$
and distinct $i,j\in\Omega$, points in $\mathbb{R}^{3}$, which has $|\eta_{i}-\eta_{j}|$ as distance difference from $\eta_{i}\bm{\nu}_{i}$ and $\eta_{j}\bm{\nu}_{j}$, form a hyperboloid of two sheets.
The one hyperboloid sheet consists of points ${\bm v}\in\mathbb{R}^{3}$
satisfying $\varphi_{i}({\bm v})=\varphi_{j}({\bm v})$, where
\begin{equation}
\varphi_{i}({\bm v})=\eta_{i}+\|\eta_{i}\bm{\nu}_{i}-{\bm v}\|,\ i\in\Omega.
\end{equation}
Here $\|\cdot\|$ is the Euclidean norm. The sheet divides $\mathbb{R}^{3}$ into two sets $\{\bm{v}\in\mathbb{R}^{3}:\varphi_{i}({\bm v})\geqslant\varphi_{j}({\bm v})\}$ and $\{\bm{v}\in\mathbb{R}^{3}:\varphi_{i}({\bm v})<\varphi_{j}({\bm v})\}$.
We use the following definitions to express various conditions in ME of $\mathcal{E}$.
%%%%%%%%%%%%%%%%%%%%%%%%%

%%%%%%%%%%%%%%%%%%%%%%%%%
\begin{definition}\label{def:hpm}
For each $\mathsf{S}\subseteq\Omega$, 
\begin{equation}
\begin{array}{rcl}
\mathcal{Z}_{\mathsf{S}}:=\{{\bm v}\in\mathcal{P}_{\mathsf{S}}&:&\varphi_{i}({\bm v})=\varphi_{j}({\bm v})\ \forall (i,j)\in\mathsf{S}\times\mathsf{S},\\
&&\varphi_{i}({\bm v})\geqslant\varphi_{j}({\bm v})\ \forall (i,j)\in\mathsf{S}\times(\Omega-\mathsf{S})\},
\end{array}
\end{equation}
where $\mathcal{P}_{\mathsf{S}}$ is the relative interior of the convex hull of $\{\eta_{i}\bm{\nu}_{i}\}_{i\in\mathsf{S}}$, that is,
\begin{equation}
\begin{array}{c}
\mathcal{P}_{\mathsf{S}}=\{\sum_{i\in\mathsf{S}}c_{i}\eta_{i}\bm{\nu}_{i}\,:\, c_{i}>0\,\forall i\in\mathsf{S},
\  \sum_{i\in\mathsf{S}}c_{i}=1\ \}.
\end{array}
\end{equation}
Also, we denote by $\mathcal{Z}$ the union of all $\mathcal{Z}_{\mathsf{S}}$, that is, $\mathcal{Z}=\bigcup_{\mathsf{S}\subseteq\Omega}\mathcal{Z}_{\mathsf{S}}$.
\end{definition}
%%%%%%%%%%%%%%%%%%%%%%%%%

%%%%%%%%%%%%%%%%%%%%%%%%%
\indent In ME of $\mathcal{E}$, we use a POVM $\mathcal{M}=\{M_{i}\}_{i\in\Omega}$ as a measurement such that
the prepared state is guessed to be $\rho_{i}$ for each $i\in\Omega$.  
Then, the maximal average probability of correctly guessing the given qubit state is
\begin{equation}
p_{\rm guess}=\max_{\mathcal{M}}\sum_{i\in\Omega}\eta_{i}{\rm Tr}[\rho_{i}M_{i}].
\end{equation}
ME is the task of finding optimal measurements that provides $p_{\rm guess}$, called the \emph{guessing probability}.
We also note that the guessing probability in ME cannot be less than the greatest prior probability, that is,
\begin{equation}
p_{\rm guess}\geqslant\eta_{1}.
\end{equation}
%%%%%%%%%%%%%%%%%%%%%%%%%
\indent In some cases, ME can be performed without the help of measurement\cite{ref:hunt2003}; the guessing probability can be obtained by taking the state with the greatest prior probability as the prepared state, that is, $p_{\rm guess}=\eta_{1}$. Even for that case, nontrivial optimal measurements can possibly exist.
The following proposition provides a necessary and sufficient condition for $\mathcal{M}$ to be optimal
when $p_{\rm guess}=\eta_{1}$.
The proof of Proposition~\ref{pro:nomec} is given in Appendix~\ref{app:hnome}.
%%%%%%%%%%%%%%%%%%%%%%%%%

%%%%%%%%%%%%%%%%%%%%%%%%%
\begin{proposition}\label{pro:nomec}
For ME of qubit state ensemble $\mathcal{E}$,
\begin{enumerate}[label={\rm(\alph*)}]
\item $p_{\rm guess}=\eta_{1}$ if and only if
$\mathcal{Z}_{\{1\}}$ is not an empty set $\varnothing$ or, equivalently,
\begin{equation}
\epsilon_{i}\geqslant\lambda_{i}\ \forall i\in\Omega,
\end{equation}
where
\begin{equation}
\epsilon_{i}=\eta_{1}-\eta_{i},\ \lambda_{i}=\|\eta_{1}\bm{\nu}_{1}-\eta_{i}\bm{\nu}_{i}\|,\ i\in\Omega.
\end{equation}
\item When $p_{\rm guess}=\eta_{1}$, a POVM $\mathcal{M}$ is optimal if and only if
\begin{equation}\label{eq:trime}
\forall i\neq1,\ 
M_{i}\ \propto\ \left\{
\begin{array}{ccc}
\mathbbm{1}+\frac{\eta_{i}\bm{\nu}_{i}-\eta_{1}\bm{\nu}_{1}}{\|\eta_{i}\bm{\nu}_{i}-\eta_{1}\bm{\nu}_{1}\|}\cdot\bm{\sigma}
&,&\epsilon_{i}=\lambda_{i},\\
0&,&\epsilon_{i}>\lambda_{i}.
\end{array}
\right.
\end{equation}
\end{enumerate}
\end{proposition}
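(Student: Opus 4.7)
The plan is to invoke the Lagrange-dual (Holevo-Yuen-Kennedy-Lax) characterization of ME optimality cited in the introduction: $\mathcal{M}=\{M_{i}\}$ is optimal if and only if there exists a Hermitian operator $K$ with $K\succeq\eta_{i}\rho_{i}$ for every $i\in\Omega$ and $(K-\eta_{i}\rho_{i})M_{i}=0$ for every $i$, and strong duality gives $p_{\rm guess}={\rm Tr}(K)=\min\{{\rm Tr}(K'):K'\succeq\eta_{i}\rho_{i}\ \forall i\}$. On a qubit, the positivity condition for $\frac{1}{2}(a\mathbbm{1}+\bm{b}\cdot\bm{\sigma})$ reduces to the Bloch inequality $a\geqslant\|\bm{b}\|$, which converts every operator inequality above into a scalar relation in $\epsilon_{i}$ and $\lambda_{i}$.

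For part (a), I would first note the ``if'' direction: the inequality $p_{\rm guess}\geqslant\eta_{1}$ is immediate (guess $\rho_{1}$ without measuring), so it suffices to exhibit a dual witness with trace $\eta_{1}$. The natural candidate $K=\eta_{1}\rho_{1}$ has trace $\eta_{1}$, and the Bloch criterion applied to $K-\eta_{i}\rho_{i}=\frac{1}{2}(\epsilon_{i}\mathbbm{1}+(\eta_{1}\bm{\nu}_{1}-\eta_{i}\bm{\nu}_{i})\cdot\bm{\sigma})$ shows it is feasible precisely when $\epsilon_{i}\geqslant\lambda_{i}$ for every $i$. For the ``only if'' direction, any dual-optimal $K^{*}$ with ${\rm Tr}(K^{*})=\eta_{1}$ must satisfy $K^{*}\succeq\eta_{1}\rho_{1}$, and a PSD operator with trace $0$ must vanish, forcing $K^{*}=\eta_{1}\rho_{1}$; the remaining constraints then deliver $\epsilon_{i}\geqslant\lambda_{i}$. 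The equivalence with $\mathcal{Z}_{\{1\}}\neq\varnothing$ is a direct rewriting because $\mathcal{P}_{\{1\}}=\{\eta_{1}\bm{\nu}_{1}\}$, so the defining inequality $\varphi_{1}(\eta_{1}\bm{\nu}_{1})\geqslant\varphi_{j}(\eta_{1}\bm{\nu}_{1})$ reads $\eta_{1}\geqslant\eta_{j}+\lambda_{j}$.

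For part (b), once $K=\eta_{1}\rho_{1}$ is pinned down as the unique dual optimum, complementary slackness $(\eta_{1}\rho_{1}-\eta_{i}\rho_{i})M_{i}=0$ determines the form of each $M_{i}$ with $i\neq 1$. The slack operator has eigenvalues $\frac{1}{2}(\epsilon_{i}\pm\lambda_{i})$, so it is strictly positive definite when $\epsilon_{i}>\lambda_{i}$ (forcing $M_{i}=0$) and rank one when $\epsilon_{i}=\lambda_{i}$, with kernel spanned by the $+1$-eigenvector of $\frac{\eta_{i}\bm{\nu}_{i}-\eta_{1}\bm{\nu}_{1}}{\lambda_{i}}\cdot\bm{\sigma}$. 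In the latter case $M_{i}$ must be a nonnegative scalar multiple of the corresponding rank-one projector, which is exactly the Bloch form in (\ref{eq:trime}). For the converse, any POVM of this form satisfies the slackness identity automatically and therefore yields $\sum_{i}\eta_{i}{\rm Tr}[\rho_{i}M_{i}]=\eta_{1}{\rm Tr}[\rho_{1}\sum_{i}M_{i}]=\eta_{1}$, hence is optimal.

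The main technical subtlety is securing attainment and uniqueness of the dual optimum; once that is in hand, the remainder is a short Bloch-sphere computation. A subsidiary point worth flagging in the writeup is that $M_{1}$ is not constrained directly by (\ref{eq:trime}): it is fixed by the POVM normalization $M_{1}=\mathbbm{1}-\sum_{i\neq 1}M_{i}$, and the magnitudes of the non-null $M_{i}$ ($i\neq 1$) must be small enough that $M_{1}\succeq 0$. This residual freedom is precisely what the proportionality symbol in (\ref{eq:trime}) encodes, and it is also what makes nontrivial optimal measurements possible even though $p_{\rm guess}$ is already achieved by guessing.
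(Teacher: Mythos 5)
Your proposal is correct and follows essentially the same route as the paper: both rest on the Holevo--Yuen SDP duality with the dual witness $K=\eta_{1}\rho_{1}$ (the paper writes it as $s=\eta_{1}$, $\bm{v}=\eta_{1}\bm{\nu}_{1}$ in the Bloch-parametrized KKT system of Appendix A) and extract part (b) from complementary slackness against the slack operators, which are full rank when $\epsilon_{i}>\lambda_{i}$ and rank one when $\epsilon_{i}=\lambda_{i}$. The only cosmetic differences are that you work directly with operators rather than the Bloch-coordinate variables, and you derive uniqueness of the dual optimum from the trace-zero-PSD argument where the paper cites it.
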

%%%%%%%%%%%%%%%%%%%%%%%%%

%%%%%%%%%%%%%%%%%%%%%%%%%
\indent When $p_{\rm guess}=\eta_{1}$, if $\epsilon_{i}>\lambda_{i}$ for all $i\neq1$,
the optimal measurement has only $1\in\Omega$ as a non-null outcome,
and $M_{1}$ becomes the identity operator. That is, the optimal measurement in this case is trivial and unique.
However, if $\epsilon_{i}=\lambda_{i}$ for some $i\neq1$, 
the optimal measurement can have a non-null outcome other than 1,
which implies a non-trivial optimal measurement.
Therefore, when $p_{\rm guess}=\eta_{1}$, a nontrivial optimal measurement exists if and only if
$\epsilon_{i}=\lambda_{i}$ for some $i\neq1$.\\
%%%%%%%%%%%%%%%%%%%%%%%%%
\indent Now, let us consider the qubit state ensemble $\mathcal{E}$ in which a measurement is a necessary requirement for ME, that is, $p_{\rm guess}>\eta_{1}$.  The following proposition shows that finding $\mathsf{S}\subseteq\Omega$ with $\mathcal{Z}_{\mathsf{S}}\neq\varnothing$ and an element of $\mathcal{Z}$ is directly related to obtaining optimal measurements, where $\varnothing$ is the empty set.
The proof of Proposition~\ref{pro:htp} is given in Appendix~\ref{app:hnome}. 
%%%%%%%%%%%%%%%%%%%%%%%%%

%%%%%%%%%%%%%%%%%%%%%%%%%
\begin{proposition}\label{pro:htp}
For ME of qubit state ensemble $\mathcal{E}$, 
\begin{enumerate}[label={\rm(\alph*)}]
\item $\mathcal{Z}$ is always a single-element set $\{\bm{v}\}$ and $p_{\rm guess}=\varphi_{i}(\bm{v})$ for all $i$ in $\mathsf{S}$ with $\mathcal{Z}_{\mathsf{S}}\neq\varnothing$.
\end{enumerate}
When $p_{\rm guess}>\eta_{1}$,
\begin{enumerate}[label={\rm(\alph*)},resume]
\item there is an optimal measurement having $\mathsf{S}\subseteq\Omega$ as the set of all non-null outcomes if and only if $\mathcal{Z}_{\mathsf{S}}\neq\varnothing$.
\item Moreover, a POVM $\mathcal{M}$ having $\mathsf{S}$ as the set of all non-null outcomes is optimal if and only if
\begin{equation}\label{eq:mem}
M_{i}\propto\mathbbm{1}+\frac{\eta_{i}\bm{\nu}_{i}-\bm{v}}{\|\eta_{i}\bm{\nu}_{i}-\bm{v}\|}\cdot\bm{\sigma}
\ \ \forall i\in\mathsf{S}.
\end{equation}
\end{enumerate}
\end{proposition}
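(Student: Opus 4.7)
The plan is to combine the Holevo-Yuen-Kennedy-Lax (HYKL) necessary and sufficient conditions for minimum-error optimality with the Bloch-vector parameterization of $2\times 2$ Hermitian operators. Recall that a POVM $\mathcal{M}=\{M_i\}_{i\in\Omega}$ is optimal if and only if there exists a Hermitian $K$ on $\mathcal{H}$ with $K\geqslant\eta_i\rho_i$ and $(K-\eta_i\rho_i)M_i=0$ for every $i\in\Omega$, in which case $p_{\rm guess}={\rm Tr}[K]$. On a two-dimensional Hilbert space I write $K=\tfrac{1}{2}(p\mathbbm{1}+\bm w\cdot\bm\sigma)$ with $p\in\mathbb R$ and $\bm w\in\mathbb R^{3}$; computing the eigenvalues of $K-\eta_i\rho_i=\tfrac{1}{2}[(p-\eta_i)\mathbbm{1}+(\bm w-\eta_i\bm\nu_i)\cdot\bm\sigma]$ shows that $K\geqslant\eta_i\rho_i$ is equivalent to $p\geqslant\varphi_i(\bm w)$, while ${\rm Tr}[K]=p$. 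The dual problem therefore reduces to minimizing $\max_{i\in\Omega}\varphi_i(\bm w)$ over $\bm w\in\mathbb R^{3}$, with $p_{\rm guess}$ the minimal value; I write $\bm v$ for an optimizer.

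For part (c) and the forward direction of (b), the slackness condition $(K-\eta_i\rho_i)M_i=0$ with $M_i\neq 0$ forces $K-\eta_i\rho_i$ to be singular (hence $p=\varphi_i(\bm v)$) and makes $M_i$ proportional to the projector onto the one-dimensional kernel of $K-\eta_i\rho_i$, which is the eigenspace of $(\bm v-\eta_i\bm\nu_i)\cdot\bm\sigma$ with eigenvalue $-\|\bm v-\eta_i\bm\nu_i\|$; this yields exactly formula~(\ref{eq:mem}). Writing $M_i=\alpha_i(\mathbbm{1}+\bm u_i\cdot\bm\sigma)$ with $\bm u_i=(\eta_i\bm\nu_i-\bm v)/\|\eta_i\bm\nu_i-\bm v\|$, well defined because $\|\eta_i\bm\nu_i-\bm v\|=p_{\rm guess}-\eta_i>0$ whenever $p_{\rm guess}>\eta_1$, the identity $\sum_{i\in\mathsf S}M_i=\mathbbm{1}$ splits into $\sum_{i\in\mathsf S}\alpha_i=1$ and $\sum_{i\in\mathsf S}\alpha_i\bm u_i=\bm 0$; the second rewrites $\bm v$ as a strict convex combination of $\{\eta_i\bm\nu_i\}_{i\in\mathsf S}$ and, combined with $\varphi_i(\bm v)=p_{\rm guess}$ on $\mathsf S$ and the dual feasibility $\varphi_j(\bm v)\leqslant p_{\rm guess}$ for $j\in\Omega-\mathsf S$, proves $\bm v\in\mathcal Z_{\mathsf S}$. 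For the converse direction of (b), I start from $\bm v\in\mathcal Z_{\mathsf S}$ with $\bm v=\sum_{i\in\mathsf S}c_i\eta_i\bm\nu_i$, $c_i>0$; setting $\alpha_i\propto c_i\|\eta_i\bm\nu_i-\bm v\|$ normalized so $\sum\alpha_i=1$, a direct check confirms that the resulting $\mathcal M$ together with $K=\tfrac{1}{2}(p\mathbbm{1}+\bm v\cdot\bm\sigma)$ satisfies all the HYKL conditions.

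The main obstacle is the uniqueness claim in (a). If $\bm v_1,\bm v_2\in\mathcal Z$ were distinct, both would minimize $\max_i\varphi_i$, so by convexity the midpoint $\bm v_m=(\bm v_1+\bm v_2)/2$ is a minimizer too, and some $i_0$ satisfies $\varphi_{i_0}(\bm v_m)=p_{\rm guess}$. Convexity of $\varphi_{i_0}$ together with $\varphi_{i_0}(\bm v_k)\leqslant p_{\rm guess}$ then forces $\varphi_{i_0}(\bm v_1)=\varphi_{i_0}(\bm v_2)=p_{\rm guess}$ and the triangle equality
\begin{equation*}
\|\eta_{i_0}\bm\nu_{i_0}-\bm v_m\|=\tfrac{1}{2}\bigl(\|\eta_{i_0}\bm\nu_{i_0}-\bm v_1\|+\|\eta_{i_0}\bm\nu_{i_0}-\bm v_2\|\bigr).
\end{equation*}
Since the two Euclidean vectors $\eta_{i_0}\bm\nu_{i_0}-\bm v_1$ and $\eta_{i_0}\bm\nu_{i_0}-\bm v_2$ share the common length $p_{\rm guess}-\eta_{i_0}$, the rigidity case of the triangle inequality forces them to coincide, giving $\bm v_1=\bm v_2$ and the desired contradiction. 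The remaining assertion $p_{\rm guess}=\varphi_i(\bm v)$ for $i\in\mathsf S$ with $\mathcal Z_{\mathsf S}\neq\varnothing$ then follows from the equivalence between $\mathcal Z_{\mathsf S}\neq\varnothing$ and the existence of an optimal POVM with support $\mathsf S$ established above.
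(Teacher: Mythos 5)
Your proof is correct and, for parts (b) and (c), follows essentially the same route as the paper: the Holevo--Yuen--Kennedy--Lax conditions you invoke are exactly the KKT system the paper writes down in Appendix~A, your translation of $K\succeq\eta_i\rho_i$ into $p\geqslant\varphi_i(\bm{w})$ is the paper's conditions {\bf(D0)}--{\bf(D2)}, the rank-one kernel argument reproduces the paper's derivation of Eq.~\eqref{eq:mem} from {\bf(C0)}, and your splitting of $\sum_{i\in\mathsf{S}}M_i=\mathbbm{1}$ into $\sum\alpha_i=1$, $\sum\alpha_i\bm{u}_i=\bm{0}$ to place $\bm{v}$ in $\mathcal{P}_{\mathsf{S}}$ is the paper's computation \eqref{eq:cietiv}. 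The genuine difference is in part (a): the paper does not prove uniqueness of the dual optimizer $\bm{v}^{\star}$ but imports it from the cited literature, and then shows every element of $\mathcal{Z}$ coincides with $\bm{v}^{\star}$; you instead give a self-contained argument via convexity of the dual objective $\max_i\varphi_i$ and the rigidity case of the triangle inequality applied to two vectors of equal length $p_{\rm guess}-\eta_{i_0}$. That argument is sound (including the degenerate case $p_{\rm guess}=\eta_{i_0}$, where both vectors vanish) and makes the proposition self-contained where the paper is not. Two small points you gloss over, which the paper treats explicitly as the case ``$\eta_{k}=s$'': when $\bm{v}=\eta_k\bm{\nu}_k$ for some $k$ your unit vectors $\bm{u}_i$ are undefined, so for the unconditional claim in (a) and for the converse of (b) you should note that this degenerate configuration forces $p_{\rm guess}=\eta_1$ (by weak duality against the trivial POVM $M_k=\mathbbm{1}$) and is therefore excluded whenever $p_{\rm guess}>\eta_1$.
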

%%%%%%%%%%%%%%%%%%%%%%%%%

%%%%%%%%%%%%%%%%%%%%%%%%%
\indent When $p_{\rm guess}>\eta_{1}$, all optimal POVMs are characterized by the \emph{single} element of $\mathcal{Z}$.
Proposition~\ref{pro:htp} tells the following  three facts.
First, if $\mathcal{Z}_{\mathsf{S}}$ is empty, 
there is no optimal measurement having $\mathsf{S}$ as the set of all non-null outcomes.
This implies that all optimal measurements are null(that is, $M_{i}=0$ for some $i\in\Omega$) if $\mathcal{Z}_{\Omega}$ is empty.
Second, if $\mathcal{Z}_{\mathsf{S}}$ is non-empty and $\mathcal{Z}\subseteq\mathcal{P}_{\mathsf{S}'}$ for some $\mathsf{S}'\subseteq\mathsf{S}$, then $\mathcal{Z}_{\mathsf{S}'}$ is also non-empty.
This implies that,
if the \emph{affine} dimension\cite{ref:boyd2004,ref:affine} of $\{\eta_{i}\bm{\nu}_{i}\}_{i\in\Omega}$ is $D$,
the guessing probability can be obtained by the detection of $D+1$ qubit states \cite{ref:ha2014}. Third, if $\mathcal{Z}_{\mathsf{S}}$ is non-empty and $\{\eta_{i}\bm{\nu}_{i}\}_{i\in\mathsf{S}}$ forms a simplex with affine dimension $|\mathsf{S}|-1$, the optimal measurement having $\mathsf{S}$ as the set of all non-null outcomes is unique.\\
%%%%%%%%%%%%%%%%%%%%%%%%%
\indent Condition \eqref{eq:mem} is a necessary and sufficient condition for
a POVM $\mathcal{M}$ having $\mathsf{S}$ as the set of all non-null outcomes to be optimal when $p_{\rm guess}>\eta_{1}$.
Therefore, when $p_{\rm guess}>\eta_{1}$, all optimal measurements are obtained in the following three steps.
The first step is to distinguish whether $\mathcal{Z}_{\mathsf{S}}$ is empty or nonempty for each $\mathsf{S}$ with $|\mathsf{S}|\geqslant2$.
Note that, for $p_{\rm guess}>\eta_{1}$, there exists no subset $\mathsf{S}\subseteq\Omega$ with $|\mathsf{S}|=1$ and $\mathcal{Z}_{\sf S}\neq\varnothing$ because all optimal measurements have more than one non-null outcome.
The second step is to find out what the single element of $\mathcal{Z}$ is.
The final step is to get a POVM $\mathcal{M}$ that satisfies Condition \eqref{eq:mem}. 
%%%%%%%%%%%%%%%%%%%%%%%%%

%%%%%%%%%%%%%%%%%%%%%%%%%
%        Section        %
%%%%%%%%%%%%%%%%%%%%%%%%%
\section{Main result: MEPI of Qubit States}\label{sec:gamf}
In this section, we consider MEPI of \emph{qubit} state ensemble, 
\begin{equation}\label{eq:mens}
\mathcal{E}=\bigcup_{b\in\mathsf{B}}\{\eta_{ib},\rho_{ib}\}_{i\in\mathsf{A}_{b}}, 
\end{equation}
where 
\begin{equation}
\begin{array}{lcll}
\mathsf{B}&=&\{1,2,\ldots,m\},& m\geqslant 2,\\
\mathsf{A}_{b}&=&\{1,2,\ldots,n_{b}\},& n_{b}\geqslant 2.
\end{array}
\end{equation}
The ensemble $\mathcal{E}$ consists of $m$ subensembles(we use the term ``subensemble'' regardless of the normalization of prior probability),
\begin{equation}
\mathcal{E}_{b}=\{\eta_{ib},\rho_{ib}\}_{i\in\mathsf{A}_{b}},\ b\in\mathsf{B}.
\end{equation}
Without loss of generality, we assume that
\begin{equation}\label{eq:asmpp}
\eta_{1b}\geqslant\eta_{2b},\ldots,\eta_{n_{b}b}\ \forall b\in\mathsf{B}.
\end{equation}
%%%%%%%%%%%%%%%%%%%%%%%%%
\indent The classical information $b\in\mathsf{B}$ of the prepared subensemble is provided after a measurement is performed.
We use a POVM $\mathcal{M}=\{M_{\bm{\omega}}\}_{\bm{\omega}\in\Omega}$ to describe a measurement,
where $\Omega$ is the Cartesian product of $\mathsf{A}_{1}$,$\mathsf{A}_{2}$,\ldots,$\mathsf{A}_{m}$, that is,
\begin{equation}
\Omega=\mathsf{A}_{1}\times\mathsf{A}_{2}\times\cdots\times\mathsf{A}_{m}.
\end{equation}
Each outcome $\bm{\omega}=(\omega_{1},\ldots,\omega_{m})\in\Omega$ means that the prepared state is guessed to be $\rho_{\omega_{1}1}$,\,$\rho_{\omega_{2}2}$,\,\ldots, or $\rho_{\omega_{m}m}$ according to post-measurement information $b=1$,\,$2$,\,\ldots, or $m$, respectively.\\
%%%%%%%%%%%%%%%%%%%%%%%%%
\indent The maximal average probability of correctly guessing the prepared qubit state is 
\begin{equation}\label{eq:pcpg}
p_{\rm guess}^{\rm post}
=\max_{\mathcal{M}}\sum_{b\in\mathsf{B}}\sum_{i\in\mathsf{A}_{b}}\sum_{\substack{\bm{\omega}\in\Omega\\ \omega_{b}=i}}\eta_{ib}{\rm Tr}[\rho_{ib}M_{\bm{\omega}}]
=\max_{\mathcal{M}}\sum_{\bm{\omega}\in\Omega}\tilde{\eta}_{\bm{\omega}}{\rm Tr}[\tilde{\rho}_{\bm{\omega}}M_{\bm{\omega}}],
\end{equation}
where $\tilde{\eta}_{\bm{\omega}}$ and $\tilde{\rho}_{\bm{\omega}}$ are positive numbers and density operators, respectively, such that
\begin{equation}\label{eq:tetrd}
\tilde{\eta}_{\bm{\omega}}=\sum_{b\in\mathsf{B}}\eta_{\omega_{b}b},\ \tilde{\rho}_{\bm{\omega}}=\frac{\sum_{b\in\mathsf{B}}\eta_{\omega_{b}b}\rho_{\omega_{b}b}}{\sum_{b'\in\mathsf{B}}\eta_{\omega_{b'}b'}},\ \bm{\omega}\in\Omega.
\end{equation}
MEPI of $\mathcal{E}$ is the task of finding optimal measurements that provides $p_{\rm guess}^{\rm post}$. From Eq.~\eqref{eq:pcpg}, we can see that $p_{\rm guess}^{\rm post}$ is the guessing probability of qubit state ensemble,
\begin{equation}
\tilde{\mathcal{E}}=\{\tilde{\eta}_{\bm{\omega}},\tilde{\rho}_{\bm{\omega}}\}_{\bm{\omega}\in\Omega}.
\end{equation}
where $\{\tilde{\eta}_{\bm{\omega}}\}_{\bm{\omega}\in\Omega}$ is not normalized, that is, $\sum_{\bm{\omega}\in\Omega}\tilde{\eta}_{\bm{\omega}}>1$.
Therefore, MEPI of $\mathcal{E}$ is equivalent to ME of $\tilde{\mathcal{E}}$;
a POVM $\mathcal{M}$ is optimal for MEPI of $\mathcal{E}$ if and only if  it is optimal for ME of $\tilde{\mathcal{E}}$\cite{ref:gopa2010}. To distinguish between optimal measurements for ME and MEPI,
we use ME and MEPI measurements, respectively.
We also note that the assumption in \eqref{eq:asmpp} implies that $\tilde{\eta}_{\bm{1}}$ is the greatest prior probability of $\tilde{\mathcal{E}}$ and a lower bound of $p_{\rm guess}^{\rm post}$, that is,
\begin{equation}\label{eq:apgte1}
p_{\rm guess}^{\rm post}\geqslant\tilde{\eta}_{\bm{1}}\geqslant \tilde{\eta}_{\bm{\omega}}\,\forall \bm{\omega}\in\Omega,
\end{equation}
where 
\begin{equation}
\bm{1}=(1,1,\ldots,1).
\end{equation}
%%%%%%%%%%%%%%%%%%%%%%%%%

%%%%%%%%%%%%%%%%%%%%%%%%%
%      Subsection       %
%%%%%%%%%%%%%%%%%%%%%%%%%
\subsection{Null MEPI measurement}
\indent Similar to ME, we use Bloch representation of qubit states as
\begin{equation}
\begin{array}{lcll}
\rho_{ib}&=&\frac{1}{2}(\mathbbm{1}+\bm{\nu}_{ib}\cdot\bm{\sigma}),& b\in\mathsf{B},i\in\mathsf{A}_{b},\\[1mm]
\tilde{\rho}_{\bm{\omega}}&=&\frac{1}{2}(\mathbbm{1}+\tilde{\bm{\nu}}_{\bm{\omega}}\cdot\bm{\sigma}),&\bm{\omega}\in\Omega.
\end{array}
\end{equation}
Then, 
\begin{equation}\label{eq:benten}
\sum_{b\in\mathsf{B}}\eta_{\omega_{b}b}\bm{\nu}_{\omega_{b}b}=\tilde{\eta}_{\bm{\omega}}\tilde{\bm{\nu}}_{\bm{\omega}}=:\tilde{\bm{\mu}}_{\bm{\omega}} \quad\forall \bm{\omega}\in\Omega.
\end{equation}
When $m=n_{1}=n_{2}=2$, the affine dimension $D$ of $\{\tilde{\bm{\mu}}_{\bm{\omega}}\}_{\bm{\omega}\in\Omega}$ is less than three because
\begin{equation}\label{eq:mn1n2d2}
\begin{array}{rcl}
\tilde{\bm{\mu}}_{(2,1)}-\tilde{\bm{\mu}}_{(1,1)}=\tilde{\bm{\mu}}_{(2,2)}-\tilde{\bm{\mu}}_{(1,2)}=\eta_{21}\bm{\nu}_{21}-\eta_{11}\bm{\nu}_{11},\\
\tilde{\bm{\mu}}_{(1,2)}-\tilde{\bm{\mu}}_{(1,1)}=\tilde{\bm{\mu}}_{(2,2)}-\tilde{\bm{\mu}}_{(2,1)}=\eta_{22}\bm{\nu}_{22}-\eta_{12}\bm{\nu}_{12}.
\end{array}
\end{equation}
Thus, in this case, ME of $\tilde{\mathcal{E}}$ is possible without detecting every state\cite{ref:ha2014}, and there exists a MEPI measurement of $\mathcal{E}$ that is null(that is, $M_{\bm{\omega}}=0$ for some $\bm{\omega}\in\Omega$).\\
%%%%%%%%%%%%%%%%%%%%%%%%%
\indent Other than $m=n_{1}=n_{2}=2$, the number of all outcomes, that is, $|\Omega|=\prod_{b\in\mathsf{B}}n_{b}$, is greater than four, and
a null MEPI measurement of $\mathcal{E}$ exists because, for any ME of more than four qubit states, there is a ME measurement that is null\cite{ref:ha2014,ref:davi1978,ref:hunt2004}.
%%%%%%%%%%%%%%%%%%%%%%%%%

%%%%%%%%%%%%%%%%%%%%%%%%%
\begin{corollary}\label{cor:null}
For any qubit state ensemble $\mathcal{E}$, a null MEPI measurement of $\mathcal{E}$ exists.
\end{corollary}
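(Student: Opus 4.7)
The plan is to exploit the correspondence, recalled just before the corollary, between MEPI of $\mathcal{E}$ and ME of the (unnormalized) qubit ensemble $\tilde{\mathcal{E}}=\{\tilde{\eta}_{\bm{\omega}},\tilde{\rho}_{\bm{\omega}}\}_{\bm{\omega}\in\Omega}$, and to produce an optimal POVM for $\tilde{\mathcal{E}}$ whose set of non-null outcomes is strictly smaller than $\Omega$. I would split into two cases according to the size of $|\Omega|$.

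In the case $m=n_{1}=n_{2}=2$, so $|\Omega|=4$, the identities in \eqref{eq:mn1n2d2} show that the four points $\tilde{\bm{\mu}}_{\bm{\omega}}\in\mathbb{R}^{3}$ are the vertices of a parallelogram, hence have affine dimension $D\leqslant 2$. If $p_{\rm guess}^{\rm post}=\tilde{\eta}_{\bm{1}}$, Proposition~\ref{pro:nomec}(b) already supplies the null optimal POVM with $M_{\bm{1}}=\mathbbm{1}$ and $M_{\bm{\omega}}=0$ for $\bm{\omega}\neq\bm{1}$. If instead $p_{\rm guess}^{\rm post}>\tilde{\eta}_{\bm{1}}$, I would invoke the consequence of Proposition~\ref{pro:htp} emphasized in the text: letting $\bm{v}$ be the unique element of $\mathcal{Z}$, Carath\'eodory's theorem expresses $\bm{v}$ as a convex combination of at most $D+1\leqslant 3$ of the $\tilde{\bm{\mu}}_{\bm{\omega}}$, so the corresponding index set $\mathsf{S}'$ satisfies $\mathcal{Z}_{\mathsf{S}'}\neq\varnothing$ and Proposition~\ref{pro:htp}(b)--(c) yields an optimal POVM whose non-null outcomes lie in $\mathsf{S}'$, which is null since $|\mathsf{S}'|<|\Omega|$.

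In the complementary case, either $m\geqslant 3$ or some $n_{b}\geqslant 3$, so $|\Omega|=\prod_{b}n_{b}\geqslant 5$. The vectors $\tilde{\bm{\mu}}_{\bm{\omega}}$ live in $\mathbb{R}^{3}$, hence have affine dimension $D\leqslant 3$. The same dichotomy on $p_{\rm guess}^{\rm post}$ as above (trivial POVM via Proposition~\ref{pro:nomec}(b), or the Carath\'eodory/Proposition~\ref{pro:htp} route) delivers an optimal POVM supported on at most $D+1\leqslant 4$ outcomes, strictly fewer than $|\Omega|\geqslant 5$, and therefore null.

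The one subtlety I would verify carefully is that Propositions~\ref{pro:nomec} and~\ref{pro:htp} remain applicable to $\tilde{\mathcal{E}}$ despite $\sum_{\bm{\omega}}\tilde{\eta}_{\bm{\omega}}>1$; but the statements, the sets $\mathcal{Z}_{\mathsf{S}}$, and the Carath\'eodory reduction depend only on the Bloch-geometric data $\tilde{\eta}_{\bm{\omega}}$ and $\tilde{\eta}_{\bm{\omega}}\tilde{\bm{\nu}}_{\bm{\omega}}$, which are unaffected by the overall normalization, so the transfer is immediate. This is really the only place where one might worry, and no additional estimate is needed.
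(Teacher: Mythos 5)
Your proof is correct and follows essentially the same route as the paper: the same case split between $|\Omega|=4$ (where the parallelogram identities \eqref{eq:mn1n2d2} force affine dimension $D\leqslant 2$) and $|\Omega|\geqslant 5$ (where $D\leqslant 3$ trivially), combined with the fact that at most $D+1$ non-null outcomes suffice. The only difference is cosmetic: where the paper cites the literature for the ``$D+1$ outcomes suffice'' step, you derive it internally from Proposition~\ref{pro:nomec}(b), Proposition~\ref{pro:htp}, and Carath\'eodory's theorem, which is a valid and self-contained substitute.
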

%%%%%%%%%%%%%%%%%%%%%%%%%

%%%%%%%%%%%%%%%%%%%%%%%%%
%      Subsection       %
%%%%%%%%%%%%%%%%%%%%%%%%%
\subsection{Upper bound of $p_{\rm guess}^{\rm post}$}
\indent When the classical information $b\in\mathsf{B}$ of the prepared subensemble is known prior to perform a measurement, the maximal average probability of correctly guessing the prepared qubit state, $p_{\rm guess}^{\rm prior}$, can be obtained by performing ME measurement of $\mathcal{E}_{b}$ according to the pre-measurement information $b$. In other words, 
\begin{equation}\label{eq:pgmepr}
p_{\rm guess}^{\rm prior}=\sum_{b\in\mathsf{B}}p_{b}^{\mbox{\rm\tiny ME}},
\end{equation}
where $p_{b}^{\mbox{\rm\tiny ME}}$ is the guessing probability of $\mathcal{E}_{b}$, that is,
\begin{equation}
p_{b}^{\mbox{\rm\tiny ME}}=\max_{\mathcal{M}_{b}}\sum_{i\in{\sf A}_{b}}\eta_{ib}{\rm Tr}[\rho_{ib}M_{ib}],\ b\in\mathsf{B}.
\end{equation}
Here, $\mathcal{M}_{b}$ is a POVM with $n_{b}$ elements $M_{ib}$ indicating the detection of $\rho_{ib}$.\\
%%%%%%%%%%%%%%%%%%%%%%%%%
\indent Guessing the prepared qubit state using a POVM $\mathcal{M}$ and post-measurement information is equivalent to
guessing the prepared state by performing a POVM $\mathcal{M}_{b}$ consisting of
\begin{equation}\label{eq:mibmo}
M_{ib}=\sum_{\substack{\bm{\omega}\in\Omega\\ \omega_{b}=i}}M_{\bm{\omega}},\ i\in\mathsf{A}_{b},
\end{equation}
according to pre-measurement information $b\in\mathsf{B}$. Given an arbitrary POVM $\mathcal{M}_{b}$ with $n_{b}$ elements $M_{1b},\ldots,M_{n_{b}b}$ for each $b\in\mathsf{B}$, $m$ POVMs $\mathcal{M}_{1},\ldots,\mathcal{M}_{m}$ are called \emph{compatible}
if there is a POVM $\mathcal{M}$ satisfying Eq.~\eqref{eq:mibmo} for all $b\in\mathsf{B}$;
otherwise, they are called \emph{incompatible}\cite{ref:hein2016}. Thus, $p_{\rm guess}^{\rm post}$ is upper bounded as
\begin{equation}\label{eq:pgpgp}
p_{\rm guess}^{\rm post}\leqslant p_{\rm guess}^{\rm prior},
\end{equation}
where the equality holds if and only if there are $m$ ME measurements of $\mathcal{E}_{1},\ldots,\mathcal{E}_{m}$ that are compatible\cite{ref:carm2018}.\\
%%%%%%%%%%%%%%%%%%%%%%%%%
\indent Obviously, a POVM with the identity operator is compatible with any POVM;
therefore, $p_{\rm guess}^{\rm post}=p_{\rm guess}^{\rm prior}$
if $p_{b}^{\mbox{\rm\tiny ME}}=\eta_{1b}$ for some $b\in\mathsf{B}$.
Moreover, POVMs with the same elements are compatible; thus,
$p_{\rm guess}^{\rm post}=p_{\rm guess}^{\rm prior}$ if there are $m$ optimal POVMs for MEs of $\mathcal{E}_{1},\ldots,\mathcal{E}_{m}$ that have the same elements. The following lemma provides a sufficient condition for $p_{\rm guess}^{\rm post}<p_{\rm guess}^{\rm prior}$.
%%%%%%%%%%%%%%%%%%%%%%%%%

%%%%%%%%%%%%%%%%%%%%%%%%%
\begin{lemma}\label{lem:ppsc}
Suppose that, for some $b,b'\in\mathsf{B}$ with $b\neq b'$, the ME measurements for $\mathcal{E}_{b}$ and $\mathcal{E}_{b'}$ are unique and consist of rank-one elements $\{M_{ib}\}_{i\in\mathsf{A}_{b}}$,\,$\{M_{jb'}\}_{j\in\mathsf{A}_{b'}}$
such that $M_{ib}\not\propto M_{jb'}$ for all $i\in\mathsf{A}_{b}$ and all $j\in\mathsf{A}_{b'}$.
Then, $p_{\rm guess}^{\rm post}<p_{\rm guess}^{\rm prior}$.
\end{lemma}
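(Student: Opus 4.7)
The plan is to exploit the characterization recalled just before the lemma: $p_{\rm guess}^{\rm post}=p_{\rm guess}^{\rm prior}$ holds if and only if there exist $m$ ME measurements of $\mathcal{E}_{1},\ldots,\mathcal{E}_{m}$ that are compatible in the sense of Eq.~\eqref{eq:mibmo}. By hypothesis, the ME measurements for $\mathcal{E}_{b}$ and $\mathcal{E}_{b'}$ are unique, so any would-be compatible family must contain the specified POVMs $\{M_{ib}\}_{i\in\mathsf{A}_{b}}$ and $\{M_{jb'}\}_{j\in\mathsf{A}_{b'}}$. It therefore suffices to prove that these two POVMs are incompatible, i.e., that no joint POVM $\{M_{\bm{\omega}}\}_{\bm{\omega}\in\Omega}$ can marginalize to both of them through Eq.~\eqref{eq:mibmo}.

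Suppose, for contradiction, that such a joint $\{M_{\bm{\omega}}\}$ does exist. For each $i\in\mathsf{A}_{b}$ we then have $M_{ib}=\sum_{\bm{\omega}:\omega_{b}=i}M_{\bm{\omega}}$, which decomposes a rank-one positive semidefinite operator as a sum of positive semidefinite operators. The standard L\"owner-domination observation---that $0\leqslant A\leqslant M_{ib}$ forces $A$ to be supported on the one-dimensional range of $M_{ib}$---then implies that every term $M_{\bm{\omega}}$ appearing in the sum is a nonnegative scalar multiple of $M_{ib}$. Running the same argument on the $b'$-marginal shows that every $M_{\bm{\omega}}$ with $\omega_{b'}=j$ is a nonnegative scalar multiple of $M_{jb'}$.

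Now fix any $\bm{\omega}\in\Omega$ and set $i=\omega_{b}$, $j=\omega_{b'}$. The previous step forces $M_{\bm{\omega}}$ to be proportional simultaneously to $M_{ib}$ and to $M_{jb'}$; because $M_{ib}\not\propto M_{jb'}$ by hypothesis, this is only possible if $M_{\bm{\omega}}=0$. Since $\bm{\omega}$ was arbitrary, summing over those outcomes with $\omega_{b}=i$ yields $M_{ib}=0$, contradicting the assumption that $M_{ib}$ is rank one. The incompatibility of $\{M_{ib}\}$ and $\{M_{jb'}\}$, and hence the strict inequality $p_{\rm guess}^{\rm post}<p_{\rm guess}^{\rm prior}$, follow. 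The only step I expect to need care in writing out is the L\"owner-domination fact underlying the proportionality claim; the remaining work is a direct application of the hypotheses and of the compatibility correspondence already set up in Eqs.~\eqref{eq:pgpgp}--\eqref{eq:mibmo}.
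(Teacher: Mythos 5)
Your proposal is correct and follows essentially the same route as the paper's proof: assume compatibility, use the rank-one hypothesis to force each joint POVM element $M_{\bm{\omega}}$ to be proportional simultaneously to $M_{\omega_b b}$ and $M_{\omega_{b'}b'}$, and conclude from $M_{ib}\not\propto M_{jb'}$ that all elements vanish, contradicting completeness. The only difference is cosmetic (you phrase the final contradiction via the rank-one marginal rather than POVM completeness, and you spell out the L\"owner-domination step the paper leaves implicit).
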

%%%%%%%%%%%%%%%%%%%%%%%%%
\begin{proof}
Assume that $p_{\rm guess}^{\rm post}=p_{\rm guess}^{\rm prior}$ in which
Eq.~\eqref{eq:mibmo} holds for $b,b'\in\mathsf{B}$.
Then, ${\rm rank}(M_{ib})={\rm rank}(M_{jb'})=1$ implies that $M_{\bm{\omega}}\propto M_{ib}$ for all $\bm{\omega}\in\Omega$ with $\omega_{b}=i$ and $M_{\bm{\omega}}\propto M_{jb'}$ for all $\bm{\omega}\in\Omega$ with $\omega_{b'}=j$. Thus, $M_{ib}\not\propto M_{jb'}$ $\forall i,j$ means $M_{ib}=M_{jb'}=0$ $\forall i,j$
which contradicts 
the POVM completeness.
Therefore, $p_{\rm guess}^{\rm post}<p_{\rm guess}^{\rm prior}$.
\end{proof}
%%%%%%%%%%%%%%%%%%%%%%%%%

%%%%%%%%%%%%%%%%%%%%%%%%%
\indent In the case of $n_{b}=2$ and $p_{b}^{\mbox{\rm\tiny ME}}>\eta_{1b}$ for all $b\in\mathsf{B}$, from Helstrom bound\cite{ref:hels1976} or Proposition~\ref{pro:htp}, the optimal POVM elements for ME of $\mathcal{E}_{b}$ are uniquely determined as follows:
\begin{equation}\label{eq:m1bm2b}
M_{1b}=\frac{1}{2}(\mathbbm{1}+\hat{\bm{\mu}}_{2b}\cdot\bm{\sigma}),\
M_{2b}=\frac{1}{2}(\mathbbm{1}-\hat{\bm{\mu}}_{2b}\cdot\bm{\sigma}),\ b\in\mathsf{B},
\end{equation}
where 
\begin{equation}\label{eq:hmuib}
\hat{\bm{\mu}}_{ib}=\frac{\eta_{ib}\bm{\nu}_{ib}-\eta_{1b}\bm{\nu}_{1b}}{\|\eta_{ib}\bm{\nu}_{ib}-\eta_{1b}\bm{\nu}_{1b}\|},\ b\in\mathsf{B}.
\end{equation}
Thus, if $\hat{\bm{\mu}_{2b}}\times\hat{\bm{\mu}}_{2b'}=\bm{0}$ for all $b,b'\in\mathsf{B}$, then $p_{\rm guess}^{\rm post}=p_{\rm guess}^{\rm prior}$ because $m$ optimal POVMs for MEs of $\mathcal{E}_{1},\ldots,\mathcal{E}_{m}$ have the same elements; however, if $\hat{\bm{\mu}}_{2b}\times\hat{\bm{\mu}}_{2b'}\neq\bm{0}$ for some $b,b'\in\mathsf{B}$ with $b\neq b'$, then $p_{\rm guess}^{\rm post}>p_{\rm guess}^{\rm prior}$ from Lemma~\ref{lem:ppsc}.
%%%%%%%%%%%%%%%%%%%%%%%%%

%%%%%%%%%%%%%%%%%%%%%%%%%
\begin{corollary}\label{cor:nscpe}
When $n_{b}=2$ for all $b\in\mathsf{B}$, $p_{\rm guess}^{\rm post}<p_{\rm guess}^{\rm prior}$ if and only if $\epsilon_{2b}<\lambda_{2b}$ for all $b\in\mathsf{B}$ and
$\hat{\bm{\mu}}_{2b}\times\hat{\bm{\mu}}_{2b'}\neq\bm{0}$
for some $b,b'\in\mathsf{B}$ with $b\neq b'$.
\end{corollary}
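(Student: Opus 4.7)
The plan is to establish the biconditional by two routes: Lemma~\ref{lem:ppsc} for sufficiency, and a contrapositive argument that explicitly exhibits compatible POVMs for necessity. The machinery I would lean on is already in place: Proposition~\ref{pro:nomec}(a) identifies when the measurement-free strategy is optimal, and Proposition~\ref{pro:htp}(c) (or equivalently Helstrom) pins down the unique optimal ME POVM as the rank-one projective measurement in \eqref{eq:m1bm2b} whenever $p_{b}^{\mbox{\rm\tiny ME}}>\eta_{1b}$.

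For sufficiency, assume $\epsilon_{2b}<\lambda_{2b}$ for every $b$ and $\hat{\bm{\mu}}_{2b}\times\hat{\bm{\mu}}_{2b'}\neq\bm{0}$ for some $b\neq b'$. Proposition~\ref{pro:nomec}(a) gives $p_{b}^{\mbox{\rm\tiny ME}}>\eta_{1b}$ for all $b$, so the unique optimal POVM for each $\mathcal{E}_{b}$ is the rank-one projective measurement in \eqref{eq:m1bm2b}. The nonzero cross-product means $\hat{\bm{\mu}}_{2b}$ and $\hat{\bm{\mu}}_{2b'}$ are non-parallel unit vectors, so none of the four projectors $M_{1b},M_{2b},M_{1b'},M_{2b'}$ is proportional to another. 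The hypotheses of Lemma~\ref{lem:ppsc} are then met and $p_{\rm guess}^{\rm post}<p_{\rm guess}^{\rm prior}$ follows at once.

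For necessity I would argue the contrapositive in two subcases. If $\epsilon_{2b}\geqslant\lambda_{2b}$ for some $b$, then by Proposition~\ref{pro:nomec} the trivial POVM $M_{1b}=\mathbbm{1}$, $M_{2b}=0$ is optimal for $\mathcal{E}_{b}$; since $\mathbbm{1}$ is compatible with any other POVM, the remark preceding Lemma~\ref{lem:ppsc} gives $p_{\rm guess}^{\rm post}=p_{\rm guess}^{\rm prior}$. Otherwise $\epsilon_{2b}<\lambda_{2b}$ for all $b$ while $\hat{\bm{\mu}}_{2b}\times\hat{\bm{\mu}}_{2b'}=\bm{0}$ for every pair, so each $\hat{\bm{\mu}}_{2b}=s_{b}\hat{\bm{\mu}}_{21}$ with $s_{b}\in\{+1,-1\}$ and the unique optimal POVMs of \eqref{eq:m1bm2b} all coincide with $\{M_{11},M_{21}\}$ up to relabeling of outcomes. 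I would then write down a two-outcome joint POVM: let $\bm{\omega}_{+}\in\Omega$ have $(\bm{\omega}_{+})_{b}=1$ if $s_{b}=+1$ and $2$ if $s_{b}=-1$, let $\bm{\omega}_{-}$ be the opposite choice in every coordinate, and put $M_{\bm{\omega}_{+}}=M_{11}$, $M_{\bm{\omega}_{-}}=M_{21}$, and $M_{\bm{\omega}}=0$ for all other $\bm{\omega}$. A sign-by-sign check verifies that the marginals \eqref{eq:mibmo} reproduce each $\{M_{1b},M_{2b}\}$, witnessing compatibility and forcing equality in \eqref{eq:pgpgp}.

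The main obstacle is the second subcase of the necessity direction: one must recognize that parallelism up to sign still yields compatibility even when the POVMs differ by a label swap, and then pick the correct two outcomes $\bm{\omega}_{\pm}$ on which to concentrate all the mass. Once those outcomes are identified via the signs $s_{b}$, verifying the marginals is routine bookkeeping, but the combinatorial choice of which single outcome realizes each marginal is where the real content of this direction lies.
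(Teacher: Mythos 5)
Your proof is correct and follows essentially the same route as the paper: Lemma~\ref{lem:ppsc} for the strict inequality, and the compatibility of the identity POVM (respectively, of optimal POVMs sharing the same elements) for the two subcases of the converse. The only difference is that you write out explicitly the joint POVM on the outcomes $\bm{\omega}_{\pm}$ witnessing compatibility when the $\hat{\bm{\mu}}_{2b}$ agree only up to sign, a bookkeeping detail the paper leaves implicit in its remark that POVMs with the same elements are compatible.
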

%%%%%%%%%%%%%%%%%%%%%%%%%

%%%%%%%%%%%%%%%%%%%%%%%%%
%      Subsection       %
%%%%%%%%%%%%%%%%%%%%%%%%%
\subsection{MEPI for $p_{\rm guess}^{\rm post}=\tilde{\eta}_{\bm{1}}$}
From Inequalities \eqref{eq:apgte1} and \eqref{eq:pgpgp}, $p_{\rm guess}^{\rm post}$ has the following upper and lower bounds,
\begin{equation}
\tilde{\eta}_{\bm{1}}\leqslant p_{\rm guess}^{\rm post}\leqslant p_{\rm guess}^{\rm prior}. 
\end{equation}
Thus, if $p_{b}^{\mbox{\rm\tiny ME}}=\eta_{1b}$ for all $b\in\mathsf{B}$, then $p_{\rm guess}^{\rm post}=\tilde{\eta}_{\bm{1}}$ in which MEPI of $\mathcal{E}$ is possible without the help of measurement;
the prepared state is guessed to be $\rho_{11}$,\,$\rho_{12}$,\,\ldots, or $\rho_{1m}$ according to post-measurement information $b=1$,\,$2$,\,\ldots, or $m$, respectively.\\
%%%%%%%%%%%%%%%%%%%%%%%%%
\indent From Proposition~\ref{pro:nomec}, we can see that 
$p_{\rm guess}^{\rm post}=\tilde{\eta}_{\bm{1}}$ if and only if 
\begin{equation}\label{eq:cgpe1}
\tilde{\epsilon}_{\bm{\omega}}\geqslant \tilde{\lambda}_{\bm{\omega}}\ \forall \bm{\omega}\in\Omega, 
\end{equation}
where
\begin{equation}
\tilde{\epsilon}_{\bm{\omega}}=\tilde{\eta}_{\bm{1}}-\tilde{\eta}_{\bm{\omega}},\ \tilde{\lambda}_{\bm{\omega}}=\|\tilde{\bm{\mu}}_{\bm{1}}-\tilde{\bm{\mu}}_{\bm{\omega}}\|,\ \bm{\omega}\in\Omega.
\end{equation}
Also, we can see that $p_{b}^{\mbox{\rm\tiny ME}}=\eta_{1b}$ if and only if 
\begin{equation}\label{eq:cpbe1}
\epsilon_{ib}\geqslant \lambda_{ib}\ \forall i\in\mathsf{A}_{b}, 
\end{equation}
where
\begin{equation}\label{eq:eiblib}
\epsilon_{ib}=\eta_{1b}-\eta_{ib},\ \lambda_{ib}=\|\eta_{1b}\bm{\nu}_{1b}-\eta_{ib}\bm{\nu}_{ib}\|,\ b\in\mathsf{B},i\in\mathsf{A}_{b}.
\end{equation}
The following lemma shows that 
$p_{\rm guess}^{\rm post}=\tilde{\eta}_{\bm{1}}$ is equivalent to $p_{b}^{\mbox{\rm\tiny ME}}=\eta_{1b}$ for all $b\in\mathsf{B}$.
%%%%%%%%%%%%%%%%%%%%%%%%%

%%%%%%%%%%%%%%%%%%%%%%%%%
\begin{lemma}\label{lem:pgpte1}
For MEPI of qubit state ensemble $\mathcal{E}$, $p_{\rm guess}^{\rm post}=\tilde{\eta}_{\bm{1}}$ if and only if $\epsilon_{ib}\geqslant \lambda_{ib}$ holds for all $b,i$.
When $p_{\rm guess}^{\rm post}=\tilde{\eta}_{\bm{1}}$, a POVM $\mathcal{M}$ is optimal if and only if
\begin{equation}\label{eq:pemo1}
\forall\bm{\omega}\neq\bm{1},\
M_{\bm{\omega}}\ \propto\ \left\{
\begin{array}{ccl}
\mathbbm{1}+\frac{\tilde{\bm{\mu}}_{\bm{\omega}}-\tilde{\bm{\mu}}_{\bm{1}}}{\|\tilde{\bm{\mu}}_{\bm{\omega}}-\tilde{\bm{\mu}}_{\bm{1}}\|}\cdot\bm{\sigma}
&,&\tilde{\epsilon}_{\bm{\omega}}=\tilde{\lambda}_{\bm{\omega}},\\
0&,&\tilde{\epsilon}_{\bm{\omega}}>\tilde{\lambda}_{\bm{\omega}}.
\end{array}
\right.
\end{equation}
\end{lemma}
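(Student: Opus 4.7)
The plan is to reduce the proof to a direct application of Proposition~\ref{pro:nomec} to the derived qubit ensemble $\tilde{\mathcal{E}}$. Since MEPI of $\mathcal{E}$ is equivalent to ME of $\tilde{\mathcal{E}}$ and $\tilde{\eta}_{\bm{1}}$ is the largest prior probability of $\tilde{\mathcal{E}}$ by \eqref{eq:apgte1}, Proposition~\ref{pro:nomec}(a) already gives us that $p_{\rm guess}^{\rm post}=\tilde{\eta}_{\bm{1}}$ is equivalent to Condition~\eqref{eq:cgpe1}, namely $\tilde{\epsilon}_{\bm{\omega}}\geqslant\tilde{\lambda}_{\bm{\omega}}$ for every $\bm{\omega}\in\Omega$. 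Hence the first assertion reduces to showing that \eqref{eq:cgpe1} is equivalent to \eqref{eq:cpbe1}.

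For that key equivalence, I would first rewrite the tilde-quantities in terms of the original ensemble. From \eqref{eq:tetrd} together with $\epsilon_{1b}=0$ one has $\tilde{\epsilon}_{\bm{\omega}}=\sum_{b\in\mathsf{B}}\epsilon_{\omega_{b}b}$, and from \eqref{eq:benten} one has $\tilde{\bm{\mu}}_{\bm{1}}-\tilde{\bm{\mu}}_{\bm{\omega}}=\sum_{b\in\mathsf{B}}(\eta_{1b}\bm{\nu}_{1b}-\eta_{\omega_{b}b}\bm{\nu}_{\omega_{b}b})$, so the triangle inequality yields $\tilde{\lambda}_{\bm{\omega}}\leqslant\sum_{b\in\mathsf{B}}\lambda_{\omega_{b}b}$. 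The forward implication is then immediate: if $\epsilon_{ib}\geqslant\lambda_{ib}$ for all $b,i$, then summing over $b$ with $\omega_{b}$ in place of $i$ gives $\tilde{\epsilon}_{\bm{\omega}}\geqslant\sum_{b}\lambda_{\omega_{b}b}\geqslant\tilde{\lambda}_{\bm{\omega}}$. For the converse, for any fixed $b\in\mathsf{B}$ and $i\in\mathsf{A}_{b}$, I specialize to the outcome $\bm{\omega}$ with $\omega_{b}=i$ and $\omega_{b'}=1$ for all $b'\neq b$: only the $b$-th summand survives in both $\tilde{\epsilon}_{\bm{\omega}}$ and $\tilde{\bm{\mu}}_{\bm{1}}-\tilde{\bm{\mu}}_{\bm{\omega}}$, producing $\tilde{\epsilon}_{\bm{\omega}}=\epsilon_{ib}$ and $\tilde{\lambda}_{\bm{\omega}}=\lambda_{ib}$, so \eqref{eq:cgpe1} forces $\epsilon_{ib}\geqslant\lambda_{ib}$.

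Finally, the characterization \eqref{eq:pemo1} of optimal POVMs drops out of Proposition~\ref{pro:nomec}(b) applied to $\tilde{\mathcal{E}}$, once we note via \eqref{eq:benten} that $\tilde{\eta}_{\bm{\omega}}\tilde{\bm{\nu}}_{\bm{\omega}}=\tilde{\bm{\mu}}_{\bm{\omega}}$, so that the numerator and denominator appearing in \eqref{eq:trime} for the ensemble $\tilde{\mathcal{E}}$ become exactly the expressions $\tilde{\bm{\mu}}_{\bm{\omega}}-\tilde{\bm{\mu}}_{\bm{1}}$ and $\|\tilde{\bm{\mu}}_{\bm{\omega}}-\tilde{\bm{\mu}}_{\bm{1}}\|$ used in \eqref{eq:pemo1}. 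No step here presents a real obstacle; the only mild subtlety is choosing, in the converse direction of the equivalence, the correct family of outcomes $\bm{\omega}$ that isolates each single pair $(b,i)$ and collapses $\tilde{\epsilon}_{\bm{\omega}},\tilde{\lambda}_{\bm{\omega}}$ down to $\epsilon_{ib},\lambda_{ib}$.
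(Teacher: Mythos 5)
Your proposal is correct and follows essentially the same route as the paper: reduce to Proposition~\ref{pro:nomec} applied to $\tilde{\mathcal{E}}$, use $\tilde{\epsilon}_{\bm{\omega}}=\sum_{b}\epsilon_{\omega_{b}b}$ together with the triangle inequality $\tilde{\lambda}_{\bm{\omega}}\leqslant\sum_{b}\lambda_{\omega_{b}b}$ for one direction, and specialize to outcomes with $\omega_{b}=i$, $\omega_{b'}=1$ ($b'\neq b$) for the other. The only cosmetic difference is that you prove the direction ``$\epsilon_{ib}\geqslant\lambda_{ib}\ \forall b,i\Rightarrow p_{\rm guess}^{\rm post}=\tilde{\eta}_{\bm{1}}$'' directly from the summed inequalities, whereas the paper cites its earlier sandwich argument $\tilde{\eta}_{\bm{1}}\leqslant p_{\rm guess}^{\rm post}\leqslant p_{\rm guess}^{\rm prior}$; both rest on the same estimates.
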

%%%%%%%%%%%%%%%%%%%%%%%%%
\begin{proof}
From the definitions of $\tilde{\eta}_{\bm{\omega}}$ and $\tilde{\bm{\mu}}_{\bm{\omega}}$ in \eqref{eq:tetrd} and \eqref{eq:benten}, we have
\begin{equation}\label{eq:tetlt}
\tilde{\epsilon}_{\bm{\omega}}=\sum_{b\in\mathsf{B}}\epsilon_{\omega_{b}b},\
\tilde{\lambda}_{\bm{\omega}}
\leqslant\sum_{b\in\mathsf{B}}\lambda_{\omega_{b}b}\ \forall\bm{\omega}\in\Omega.
\end{equation}
Thus, if Inequality \eqref{eq:cgpe1} holds for all $\bm{\omega}\in\Omega$ or, equivalently, $p_{\rm guess}^{\rm post}=\tilde{\eta}_{\bm{1}}$, then Inequality \eqref{eq:cpbe1} holds for all $b,i$
because $\tilde{\epsilon}_{\bm{\omega}}=\epsilon_{ib}$ and $\tilde{\lambda}_{\bm{\omega}}=\lambda_{ib}$ for all $\bm{\omega}\in\Omega$ such that $\omega_{b}=i$ for some $b\in\mathsf{B}$ and $\omega_{b'}=1$ for all $b'\neq b$.
Since the converse has already been proved, $p_{\rm guess}^{\rm post}=\tilde{\eta}_{\bm{1}}$ if and only if $\epsilon_{ib}\geqslant\lambda_{ib}$ for all $b,i$. In addition, 
directly from Proposition~\ref{pro:nomec}, we can see that 
Condition \eqref{eq:pemo1} 
is a necessary and sufficient condition for a POVM $\mathcal{M}$ to be optimal when $p_{\rm guess}^{\rm post}=\tilde{\eta}_{\bf 1}$.
\end{proof}
%%%%%%%%%%%%%%%%%%%%%%%%%

%%%%%%%%%%%%%%%%%%%%%%%%%
\indent For $p_{\rm guess}^{\rm post}=\tilde{\eta}_{\bm{1}}$ and $\bm{\omega}\in\Omega$, if $\tilde{\epsilon}_{\bm{\omega}}=\tilde{\lambda}_{\bm{\omega}}$, then
\begin{equation}\label{eq:eltll}
\epsilon_{\omega_{b}b}=\lambda_{\omega_{b}b}\ \forall b,\ \tilde{\lambda}_{\bm{\omega}}=\sum_{b\in\mathsf{B}}\lambda_{\omega_{b}b} 
\end{equation}
because
\begin{equation}
\sum_{b\in\mathsf{B}}\lambda_{\omega_{b}b}\leqslant\sum_{b\in\mathsf{B}}\epsilon_{\omega_{b}b}=\tilde{\epsilon}_{\bm{\omega}}=\tilde{\lambda}_{\bm{\omega}}\leqslant\sum_{b\in\mathsf{B}}\lambda_{\omega_{b}b},
\end{equation}
where the first and last inequalities follow from the inequalities in \eqref{eq:cpbe1} and \eqref{eq:tetlt}, respectively. Conversely, if Eq.~\eqref{eq:eltll} holds, $\tilde{\epsilon}_{\bm{\omega}}=\tilde{\lambda}_{\bm{\omega}}$ because
\begin{equation}
\tilde{\epsilon}_{\bm{\omega}}=\sum_{b\in\mathsf{B}}\epsilon_{\omega_{b}b}=
\sum_{b\in\mathsf{B}}\lambda_{\omega_{b}b}=\tilde{\lambda}_{\bm{\omega}}.
\end{equation}
Therefore, $\tilde{\epsilon}_{\bm{\omega}}=\tilde{\lambda}_{\bm{\omega}}$ is equivalent to Eq.~\eqref{eq:eltll} when $p_{\rm guess}^{\rm post}=\tilde{\eta}_{\bm{1}}$.
The last condition of Eq.~\eqref{eq:eltll} is equivalent to that
all unit vectors $\hat{\bm{\mu}}_{\omega_{b}b}$ with $\omega_{b}\neq1$ are the same, that is, for all $b\in\mathsf{B}$ with $\omega_{b}\neq1$,
\begin{equation}
\hat{\bm{\mu}}_{\omega_{b}b}
=\frac{\tilde{\bm{\mu}}_{\bm{\omega}}-\tilde{\bm{\mu}}_{\bm{1}}}{\|\tilde{\bm{\mu}}_{\bm{\omega}}-\tilde{\bm{\mu}}_{\bm{1}}\|},
\end{equation}
where $\hat{\bm{\mu}}_{ib}$ is defined in \eqref{eq:hmuib}.
%%%%%%%%%%%%%%%%%%%%%%%%%

%%%%%%%%%%%%%%%%%%%%%%%%%
\begin{corollary}\label{cor:opmnb2}
For MEPI of $\mathcal{E}$ with 
$n_{b}=2$ for all $b\in\mathsf{B}$ and $p_{\rm guess}^{\rm post}=\tilde{\eta}_{\bm{1}}$, 
a POVM $\mathcal{M}$ is optimal if and only if
\begin{equation}
\forall\bm{\omega}\neq\bm{1},\ 
M_{\bm{\omega}}\propto\left\{
\begin{array}{ccl}
\mathbbm{1}+\hat{\bm{\mu}}\cdot\bm{\sigma}&,&\epsilon_{2b}=\lambda_{2b}\,\forall\omega_{b}\neq1\ \mbox{\rm and}\\[1mm]
&&\exists\hat{\bm{\mu}}\in\mathbb{R}^{3}\ \mbox{\rm such that}\ \hat{\bm{\mu}}=\hat{\bm{\mu}}_{2b} \,\forall \omega_{b}\neq1, \\[1mm]
0&,&\mbox{\rm otherwise.}
\end{array}
\right.
\end{equation}
\end{corollary}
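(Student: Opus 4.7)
The plan is to derive the corollary as a direct specialization of Lemma~\ref{lem:pgpte1} combined with the observations developed between the lemma and the corollary about when the equality $\tilde{\epsilon}_{\bm{\omega}} = \tilde{\lambda}_{\bm{\omega}}$ holds. Since $p_{\rm guess}^{\rm post}=\tilde{\eta}_{\bm{1}}$ is assumed, Lemma~\ref{lem:pgpte1} already tells us that a POVM $\mathcal{M}$ is optimal if and only if, for every $\bm{\omega}\neq\bm{1}$, we have $M_{\bm{\omega}}\propto \mathbbm{1}+\frac{\tilde{\bm{\mu}}_{\bm{\omega}}-\tilde{\bm{\mu}}_{\bm{1}}}{\|\tilde{\bm{\mu}}_{\bm{\omega}}-\tilde{\bm{\mu}}_{\bm{1}}\|}\cdot\bm{\sigma}$ when $\tilde{\epsilon}_{\bm{\omega}}=\tilde{\lambda}_{\bm{\omega}}$ and $M_{\bm{\omega}}=0$ otherwise. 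So the task reduces to rewriting the trigger condition $\tilde{\epsilon}_{\bm{\omega}}=\tilde{\lambda}_{\bm{\omega}}$ in terms of the per-subensemble quantities $\epsilon_{ib},\lambda_{ib},\hat{\bm{\mu}}_{ib}$.

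Next, I would invoke the equivalence already established just above the corollary: under $p_{\rm guess}^{\rm post}=\tilde{\eta}_{\bm{1}}$, the equality $\tilde{\epsilon}_{\bm{\omega}}=\tilde{\lambda}_{\bm{\omega}}$ holds if and only if both $\epsilon_{\omega_{b}b}=\lambda_{\omega_{b}b}$ for all $b\in\mathsf{B}$ and $\tilde{\lambda}_{\bm{\omega}}=\sum_{b\in\mathsf{B}}\lambda_{\omega_{b}b}$, the latter in turn being equivalent to the statement that the unit vectors $\hat{\bm{\mu}}_{\omega_{b}b}$ are identical for all $b$ with $\omega_{b}\neq 1$. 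Specializing to $n_{b}=2$, for any $b$ with $\omega_{b}=1$ one has trivially $\epsilon_{1b}=\lambda_{1b}=0$, so the first condition collapses to requiring $\epsilon_{2b}=\lambda_{2b}$ for exactly those $b$ with $\omega_{b}\neq 1$. The second condition becomes the existence of a common unit vector $\hat{\bm{\mu}}\in\mathbb{R}^{3}$ with $\hat{\bm{\mu}}=\hat{\bm{\mu}}_{2b}$ for all such $b$.

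Finally, I would note that when these two conditions are satisfied, the direction appearing in \eqref{eq:pemo1} coincides with $\hat{\bm{\mu}}$: using \eqref{eq:benten} one has $\tilde{\bm{\mu}}_{\bm{\omega}}-\tilde{\bm{\mu}}_{\bm{1}}=\sum_{b:\omega_{b}\neq 1}(\eta_{2b}\bm{\nu}_{2b}-\eta_{1b}\bm{\nu}_{1b})=\bigl(\sum_{b:\omega_{b}\neq 1}\lambda_{2b}\bigr)\hat{\bm{\mu}}$, so $(\tilde{\bm{\mu}}_{\bm{\omega}}-\tilde{\bm{\mu}}_{\bm{1}})/\|\tilde{\bm{\mu}}_{\bm{\omega}}-\tilde{\bm{\mu}}_{\bm{1}}\|=\hat{\bm{\mu}}$. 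Hence $M_{\bm{\omega}}\propto\mathbbm{1}+\hat{\bm{\mu}}\cdot\bm{\sigma}$ in the nontrivial case, matching the corollary's expression. Conversely, when either of the two conditions fails, $\tilde{\epsilon}_{\bm{\omega}}>\tilde{\lambda}_{\bm{\omega}}$ and Lemma~\ref{lem:pgpte1} forces $M_{\bm{\omega}}=0$, which is the ``otherwise'' branch.

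There is no real obstacle here: the argument is bookkeeping that specializes Lemma~\ref{lem:pgpte1} to $n_{b}=2$. The one spot requiring care is verifying that the direction $\hat{\bm{\mu}}$ produced by the ``common unit vector'' condition indeed agrees with the direction $(\tilde{\bm{\mu}}_{\bm{\omega}}-\tilde{\bm{\mu}}_{\bm{1}})/\|\tilde{\bm{\mu}}_{\bm{\omega}}-\tilde{\bm{\mu}}_{\bm{1}}\|$ used in \eqref{eq:pemo1}; this is a short computation using \eqref{eq:benten} and \eqref{eq:hmuib} as indicated above.
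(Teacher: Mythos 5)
Your argument is correct and follows exactly the route the paper intends: Corollary~\ref{cor:opmnb2} is stated without a separate proof precisely because it is the specialization to $n_{b}=2$ of Lemma~\ref{lem:pgpte1} together with the equivalence, established in the paragraph preceding the corollary, between $\tilde{\epsilon}_{\bm{\omega}}=\tilde{\lambda}_{\bm{\omega}}$ and the pair of conditions $\epsilon_{\omega_{b}b}=\lambda_{\omega_{b}b}$ for all $b$ plus a common unit vector $\hat{\bm{\mu}}=\hat{\bm{\mu}}_{\omega_{b}b}$ for all $\omega_{b}\neq1$. Your additional check that this common $\hat{\bm{\mu}}$ coincides with the direction $(\tilde{\bm{\mu}}_{\bm{\omega}}-\tilde{\bm{\mu}}_{\bm{1}})/\|\tilde{\bm{\mu}}_{\bm{\omega}}-\tilde{\bm{\mu}}_{\bm{1}}\|$ appearing in \eqref{eq:pemo1} is exactly the bookkeeping the paper leaves implicit.
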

%%%%%%%%%%%%%%%%%%%%%%%%%

%%%%%%%%%%%%%%%%%%%%%%%%%
%      Subsection       %
%%%%%%%%%%%%%%%%%%%%%%%%%
\subsection{MEPI for $p_{\rm guess}^{\rm post}>\tilde{\eta}_{\bm{1}}$}
\indent In order to consider the case of $p_{\rm guess}^{\rm post}>\tilde{\eta}_{\bm{1}}$,  
we redefine $\mathcal{Z}_{\mathsf{S}}$ of Definition~\ref{def:hpm} suitable for ME of $\tilde{\mathcal{E}}$.
%%%%%%%%%%%%%%%%%%%%%%%%%

%%%%%%%%%%%%%%%%%%%%%%%%%
\begin{definition}\label{def:zss}
For each $\mathsf{S}\subseteq\Omega$, 
\begin{equation}\label{eq:dzss}
\begin{array}{rcl}
\mathcal{Z}_{\mathsf{S}}:=\{\bm{v}\in\mathcal{P}_{\mathsf{S}}&:&\varphi_{\bm{\omega}}(\bm{v})=\varphi_{\bm{\omega}'}(\bm{v})\ \forall(\bm{\omega},\bm{\omega}')\in\mathsf{S}\times\mathsf{S},\\
&&\varphi_{\bm{\omega}}(\bm{v})\geqslant\varphi_{\bm{\omega}'}(\bm{v})\ \forall(\bm{\omega},\bm{\omega}')\in\mathsf{S}\times(\Omega-\mathsf{S})\},
\end{array}
\end{equation}
where
\begin{equation}\label{eq:defvps}
\begin{array}{ccl}
\varphi_{\bm{\omega}}({\bm v})&=&\tilde{\eta}_{\bm{\omega}}+\|\tilde{\bm{\mu}}_{\bm{\omega}}-{\bm v}\|,\ \bm{\omega}\in\Omega,\\
&&\\
\mathcal{P}_{\mathsf{S}}&=&\{\sum_{\bm{\omega}\in\mathsf{S}}c_{\bm{\omega}}\tilde{\bm{\mu}}_{\bm{\omega}}\,:\,
c_{\bm{\omega}}>0\ \forall \bm{\omega}\in\mathsf{S},\ \sum_{\bm{\omega}\in\mathsf{S}}c_{\bm{\omega}}=1\}.
\end{array}
\end{equation}
\noindent
We also use $\mathcal{Z}$ to denote the union of all $\mathcal{Z}_{\mathsf{S}}$, that is, $\mathcal{Z}=\bigcup_{\mathsf{S}\subseteq\Omega}\mathcal{Z}_{\mathsf{S}}$.
\end{definition}
%%%%%%%%%%%%%%%%%%%%%%%%%

%%%%%%%%%%%%%%%%%%%%%%%%%
\indent From Propositions~\ref{pro:htp}, we can obtain the following lemma, showing that all MEPI measurements of $\mathcal{E}$ are characterized by the \emph{single} element of $\mathcal{Z}$ when $p_{\rm guess}^{\rm post}>\tilde{\eta}_{\bm{1}}$.
%%%%%%%%%%%%%%%%%%%%%%%%%

%%%%%%%%%%%%%%%%%%%%%%%%%
\begin{lemma}\label{lem:optmea}
For MEPI of qubit state ensemble $\mathcal{E}$,
$\mathcal{Z}$ is always a single-element set $\{\bm{v}\}$.
If $\mathcal{Z}_{\mathsf{S}}$ is nonempty, 
\begin{equation}
p_{\rm guess}^{\rm post}=\varphi_{\bm{\omega}}(\bm{v})\ \ \forall \bm{\omega}\in\mathsf{S}.
\end{equation}
When $p_{\rm guess}^{\rm post}>\tilde{\eta}_{\bm{1}}$,
there is an optimal measurement having $\mathsf{S}\subseteq\Omega$ as the set of all non-null outcomes if and only if $\mathcal{Z}_{\mathsf{S}}$ is nonempty.
Moreover, a POVM $\mathcal{M}$ having $\mathsf{S}$ as the set of all non-null outcomes is optimal if and only if 
\begin{equation}\label{eq:mepic}
M_{\bm{\omega}}\propto
\mathbbm{1}+\frac{\tilde{\bm{\mu}}_{\bm{\omega}}-{\bm v}}{\|\tilde{\bm{\mu}}_{\bm{\omega}}-{\bm v}\|}\cdot\bm{\sigma}
\ \ \forall\bm{\omega}\in\mathsf{S}.
\end{equation}
\end{lemma}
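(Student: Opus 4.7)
The plan is to reduce this statement to a direct application of Proposition~\ref{pro:htp}, exploiting the equivalence between MEPI of $\mathcal{E}$ and ME of the auxiliary ensemble $\tilde{\mathcal{E}}=\{\tilde{\eta}_{\bm{\omega}},\tilde{\rho}_{\bm{\omega}}\}_{\bm{\omega}\in\Omega}$ that was established in Eq.~\eqref{eq:pcpg}. Under this equivalence, $p_{\rm guess}^{\rm post}$ is the guessing probability of $\tilde{\mathcal{E}}$, a POVM is a MEPI measurement of $\mathcal{E}$ iff it is a ME measurement of $\tilde{\mathcal{E}}$, and by \eqref{eq:apgte1} the role of the ``greatest prior probability'' $\eta_{1}$ in Proposition~\ref{pro:htp} is played here by $\tilde{\eta}_{\bm{1}}$.

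Next, I would verify that Definition~\ref{def:zss} is precisely Definition~\ref{def:hpm} rewritten for $\tilde{\mathcal{E}}$: the indices $i\in\Omega$ are replaced by $\bm{\omega}\in\Omega$, the quantities $\eta_{i}\bm{\nu}_{i}$ are replaced by $\tilde{\bm{\mu}}_{\bm{\omega}}=\tilde{\eta}_{\bm{\omega}}\tilde{\bm{\nu}}_{\bm{\omega}}$ (see Eq.~\eqref{eq:benten}), and $\varphi_{\bm{\omega}}$ in \eqref{eq:defvps} takes the same form as the $\varphi_{i}$ in Definition~\ref{def:hpm}. The only nominal difference is that the prior probabilities $\{\tilde{\eta}_{\bm{\omega}}\}$ are not normalized, but Proposition~\ref{pro:htp} and its supporting geometry never invoke $\sum_{i}\eta_{i}=1$; they depend only on the weighted Bloch vectors and on the function $\varphi$, both of which are defined identically in the present setting.

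With this translation in hand, each conclusion of the lemma falls straight out of the corresponding clause of Proposition~\ref{pro:htp}. Part~(a) of that proposition gives that $\mathcal{Z}$ is a singleton $\{\bm{v}\}$ and that $p_{\rm guess}^{\rm post}=\varphi_{\bm{\omega}}(\bm{v})$ for every $\bm{\omega}$ in any $\mathsf{S}$ with $\mathcal{Z}_{\mathsf{S}}\neq\varnothing$. Part~(b), applied with the hypothesis $p_{\rm guess}^{\rm post}>\tilde{\eta}_{\bm{1}}$ replacing $p_{\rm guess}>\eta_{1}$, yields the equivalence between the existence of an optimal measurement with non-null set exactly $\mathsf{S}$ and the non-emptiness of $\mathcal{Z}_{\mathsf{S}}$. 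Finally, part~(c) translates verbatim into Condition~\eqref{eq:mepic}, with $\tilde{\bm{\mu}}_{\bm{\omega}}$ in place of $\eta_{i}\bm{\nu}_{i}$.

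The only step that requires genuine care is checking that no hidden normalization assumption lurks in Proposition~\ref{pro:htp}; this will be the main thing to verify against the appendix proof. Once that is confirmed, the entire lemma follows by an indexed rewrite of Proposition~\ref{pro:htp}, and no additional argument specific to the post-measurement-information setting is needed.
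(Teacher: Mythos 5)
Your proposal is correct and follows the paper's own route: the paper derives Lemma~\ref{lem:optmea} exactly by applying Proposition~\ref{pro:htp} to the ME problem for $\tilde{\mathcal{E}}$, with no additional argument. Your extra check that the KKT-based proof of Proposition~\ref{pro:htp} never uses normalization of the prior probabilities is the right (and only) point of care, and it indeed holds since conditions {\bf(P1)}--{\bf(D2)} involve only the POVM weights and the vectors $\eta_{i}\bm{\nu}_{i}$.
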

%%%%%%%%%%%%%%%%%%%%%%%%%

%%%%%%%%%%%%%%%%%%%%%%%%%
\begin{figure*}[!tt]
\centerline{
\includegraphics*[bb=0 0 1680 450,scale=0.3]{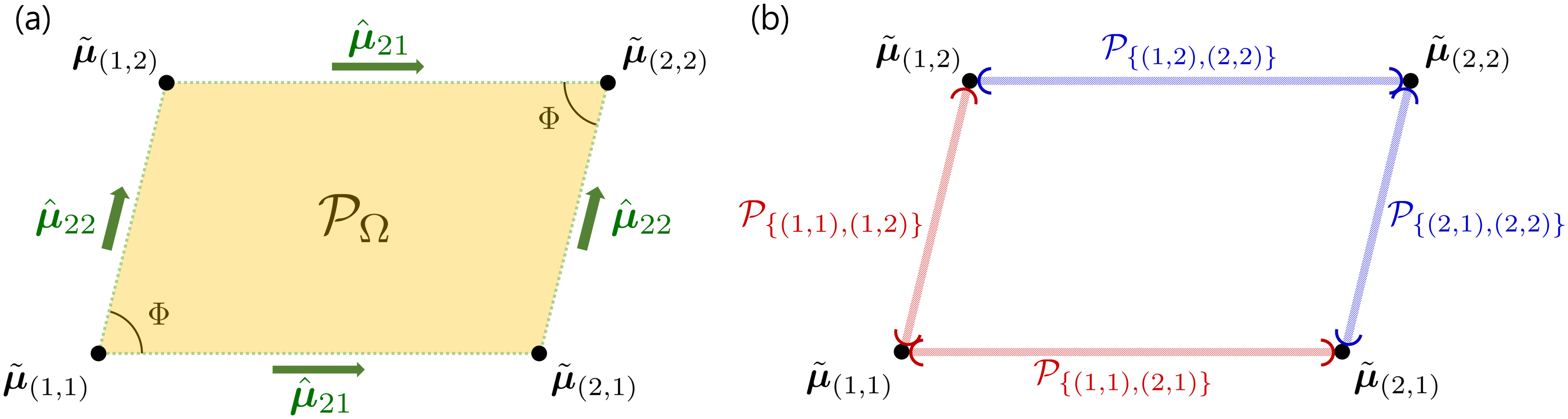}
}
\caption{In the case of $m=n_{1}=n_{2}=2$ and $p_{\rm guess}^{\rm post}<p_{\rm guess}^{\rm prior}$, four vectors $\{\tilde{\bm{\mu}}_{\bm{\omega}}\}_{\bm{\omega}\in\Omega}$ in $\mathbb{R}^{3}$ forms a parallelogram with nonempty interior $\mathcal{P}_{\Omega}$(yellow in (a)).
For $\mathcal{Z}\not\subseteq\mathcal{P}_{\Omega}$, the single element of $\mathcal{Z}$ is in one of two edges $\mathcal{P}_{\{(1,1),(1,2)\}}$ and $\mathcal{P}_{\{(1,1),(2,1)\}}$(red in (b)), but it cannot be in one of two edges $\mathcal{P}_{\{(1,2),(2,2)\}}$ and $\mathcal{P}_{\{(2,1),(2,2)\}}$(blue in (b)).
}\label{fig:para1}
\end{figure*}
%%%%%%%%%%%%%%%%%%%%%%%%%
\indent For $p_{\rm guess}^{\rm post}>\tilde{\eta}_{\bf 1}$, there is no subset $\mathsf{S}\subseteq\Omega$ with $|\mathsf{S}|=1$ and $\mathcal{Z}_{\sf S}\neq\varnothing$ because all MEPI measurements have more than one non-null outcome. Therefore, $\mathcal{Z}\not\subseteq\mathcal{P}_{\sf S}$ for all ${\sf S}\subseteq\Omega$ with $|\mathsf{S}|=1$.\\
%%%%%%%%%%%%%%%%%%%%%%%%%
\indent For example, let us consider the case of $m=n_{1}=n_{2}=2$ and $p_{\rm guess}^{\rm post}<p_{\rm guess}^{\rm prior}$. From Corollary~\ref{cor:nscpe}, Lemma~\ref{lem:pgpte1}, and Eq.~\eqref{eq:mn1n2d2}, $p_{\rm guess}^{\rm post}>\tilde{\eta}_{\bf 1}$ and the polygon formed by $\{\tilde{\bm{\mu}}_{\bm{\omega}}\}_{\bm{\omega}\in\Omega}$ is a \emph{parallelogram} with nonempty interior $\mathcal{P}_{\Omega}$ illustrated in Fig.~\ref{fig:para1}.
The interior angle between two unit vector $\hat{\bm{\mu}}_{21}$ and $\hat{\bm{\mu}}_{22}$ defined in \eqref{eq:hmuib} is $\Phi\in(0,\pi)$ satisfying
\begin{equation}
\hat{\bm{\mu}}_{21}\cdot\hat{\bm{\mu}}_{22}=\cos\Phi.
\end{equation}
Thus, we can classify this MEPI into two cases if the single element of $\mathcal{Z}$ is in $\mathcal{P}_{\Omega}$ or not.
\\
%%%%%%%%%%%%%%%%%%%%%%%%%
\indent Before analyzing two cases $\mathcal{Z}\not\subseteq\mathcal{P}_{\Omega}$ and $\mathcal{Z}\subseteq\mathcal{P}_{\Omega}$, respectively, let us consider the uniqueness of MEPI measurement. When $m=n_{1}=n_{2}=2$ and $p_{\rm guess}^{\rm post}<p_{\rm guess}^{\rm prior}$, Lemma~\ref{lem:optmea} implies that,
for each $\mathsf{S}\subsetneq\Omega$ with $|\mathsf{S}|\geqslant2$,
the MEPI measurement having $\mathsf{S}$ as the set of all non-null outcomes is unique
because $\{\tilde{\bm{\mu}}_{\bm{\omega}}\}_{\bm{\omega}\in\mathsf{S}}$ forms a simplex with affine dimension $|\mathsf{S}|-1$. 
Since all convex combination of two different MEPI measurements are also MEPI measurements, if a non-null MEPI measurement does not exist(that is, $\mathcal{Z}_{\Omega}=\varnothing$), there cannot be more than one
$\mathsf{S}\subseteq\Omega$ satisfying $\mathcal{Z}_{\mathsf{S}}\neq\varnothing$; thus, the MEPI measurement is unique.
However, if a non-null MEPI measurements exists(that is, $\mathcal{Z}_{\Omega}\neq\varnothing$), the MEPI measurement is not unique because a null MEPI measurement also exists from Corollary~\ref{cor:null}. 
%%%%%%%%%%%%%%%%%%%%%%%%%

%%%%%%%%%%%%%%%%%%%%%%%%%
\begin{corollary}\label{cor:muni}
When $m=n_{1}=n_{2}=2$ and $p_{\rm guess}^{\rm post}<p_{\rm guess}^{\rm prior}$, 
the MEPI measurement is unique if and only if a non-null MEPI measurement does not exist or, equivalently, $\mathcal{Z}_{\Omega}=\varnothing$.
\end{corollary}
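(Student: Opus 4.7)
The plan is as follows. The hypothesis $p_{\rm guess}^{\rm post}<p_{\rm guess}^{\rm prior}$ together with Corollary~\ref{cor:nscpe} and Lemma~\ref{lem:pgpte1} forces $p_{\rm guess}^{\rm post}>\tilde{\eta}_{\bm{1}}$, so Lemma~\ref{lem:optmea} governs every MEPI measurement. As the paragraph preceding the corollary already observes, the parallelogram formed by $\{\tilde{\bm{\mu}}_{\bm{\omega}}\}_{\bm{\omega}\in\Omega}$ is non-degenerate (using $\hat{\bm{\mu}}_{21}\times\hat{\bm{\mu}}_{22}\neq\bm{0}$ from Corollary~\ref{cor:nscpe}), so for any proper subset $\mathsf{S}\subsetneq\Omega$ with $|\mathsf{S}|\geqslant 2$ the points $\{\tilde{\bm{\mu}}_{\bm{\omega}}\}_{\bm{\omega}\in\mathsf{S}}$ form a simplex of affine dimension $|\mathsf{S}|-1$. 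Consequently, Lemma~\ref{lem:optmea} makes the MEPI measurement having $\mathsf{S}$ as the set of all non-null outcomes unique; this is the key structural input I would use in both directions.

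For the ``only if'' direction I would argue by the contrapositive. If $\mathcal{Z}_{\Omega}\neq\varnothing$, then Lemma~\ref{lem:optmea} yields a non-null MEPI measurement, whereas Corollary~\ref{cor:null} produces a null MEPI measurement. These two are necessarily distinct, so the MEPI measurement fails to be unique.

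For the ``if'' direction I would assume $\mathcal{Z}_{\Omega}=\varnothing$ and suppose toward contradiction that $\mathcal{M}_{1}\neq\mathcal{M}_{2}$ are distinct MEPI measurements with respective sets of non-null outcomes $\mathsf{S}_{1},\mathsf{S}_{2}$. Consider the convex combinations $\mathcal{M}_{t}=(1-t)\mathcal{M}_{1}+t\mathcal{M}_{2}$ for $t\in(0,1)$; each is again an MEPI measurement, and its set of non-null outcomes is exactly $\mathsf{S}_{1}\cup\mathsf{S}_{2}$, because a sum of two positive semidefinite operators vanishes only when both summands do. The assumption $\mathcal{Z}_{\Omega}=\varnothing$ precludes $\mathsf{S}_{1}\cup\mathsf{S}_{2}=\Omega$, so $\mathsf{S}_{1}\cup\mathsf{S}_{2}\subsetneq\Omega$, and the uniqueness just established forces $\mathcal{M}_{t}$ to be the same POVM for every $t\in(0,1)$. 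Since $\mathcal{M}_{t}$ is affine in $t$, this constancy immediately gives $\mathcal{M}_{1}=\mathcal{M}_{2}$, a contradiction.

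Essentially nothing in this argument is hard; the preceding lemmas do the heavy lifting. The only delicate observation is that the set of non-null outcomes of $\mathcal{M}_{t}$ is literally $\mathsf{S}_{1}\cup\mathsf{S}_{2}$ for every $t\in(0,1)$, which is what enables the rigidity argument. The main obstacle, such as it is, is confirming that $\mathsf{S}_{1}\cup\mathsf{S}_{2}$ cannot equal $\Omega$ under the standing assumption; this is precisely where $\mathcal{Z}_{\Omega}=\varnothing$ must be invoked.
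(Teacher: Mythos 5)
Your proof is correct and follows essentially the same route as the paper: the ``only if'' direction combines Lemma~\ref{lem:optmea} with Corollary~\ref{cor:null} exactly as the paper does, and the ``if'' direction rests on the same two ingredients the paper uses, namely that a convex combination of two distinct MEPI measurements is an MEPI measurement whose non-null outcome set is the union of the two supports (which $\mathcal{Z}_{\Omega}=\varnothing$ forces to be a proper subset of $\Omega$), and that the simplex structure of $\{\tilde{\bm{\mu}}_{\bm{\omega}}\}_{\bm{\omega}\in\mathsf{S}}$ for proper $\mathsf{S}$ makes the optimal measurement with that support unique. Your affine-in-$t$ constancy step is just a slightly more explicit packaging of the paper's observation that there can then be at most one $\mathsf{S}$ with $\mathcal{Z}_{\mathsf{S}}\neq\varnothing$.
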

%%%%%%%%%%%%%%%%%%%%%%%%%

%%%%%%%%%%%%%%%%%%%%%%%%%
\indent Let us now analyze MEPI of two cases, $\mathcal{Z}\not\subseteq\mathcal{P}_{\Omega}$ and $\mathcal{Z}\subseteq\mathcal{P}_{\Omega}$.
If $\mathcal{Z}\not\subseteq\mathcal{P}_{\Omega}$, then
$\mathcal{Z}\subseteq\mathcal{P}_{\mathsf{S}}$ for only one of the following four $\mathsf{S}$'s:
\begin{equation}\label{eq:sbpg}
\{(1,1),(1,2)\},\,\{(1,1),(2,1)\},\,\{(1,2),(2,2)\},\,\{(2,1),(2,2)\}.
\end{equation}
In other words, the single element of $\mathcal{Z}$ is in one of the edges illustrated in Fig.~\ref{fig:para1}.
Thus, $\mathcal{Z}_{\mathsf{S}}$ is nonempty for only one $\mathsf{S}$ in Eq.~\eqref{eq:sbpg}.
The following theorem provides a necessary and sufficient condition that $\mathcal{Z}_{\mathsf{S}}$ is nonempty for one $\mathsf{S}$ in Eq.~\eqref{eq:sbpg} when $m=n_{1}=n_{2}=2$ and $p_{\rm guess}^{\rm post}<p_{\rm guess}^{\rm prior}$.
The proof of Theorem~\ref{thm:twom} is given in Appendix~\ref{app:thmp}. 
%%%%%%%%%%%%%%%%%%%%%%%%%

%%%%%%%%%%%%%%%%%%%%%%%%%
\begin{theorem}\label{thm:twom}
For MEPI of $\mathcal{E}$ with
$m=n_{1}=n_{2}=2$ and $p_{\rm guess}^{\rm post}<p_{\rm guess}^{\rm prior}$, the followings are true:
\begin{equation}\label{eq:notsq}
\begin{array}{l}
\mathcal{Z}_{\{(1,1),(1,2)\}}\neq\varnothing \ \Leftrightarrow\ 
\frac{\epsilon_{21}+\lambda_{21}\cos\Phi}{\lambda_{21}+\epsilon_{21}}\geqslant\frac{\lambda_{21}-\epsilon_{21}}{\lambda_{22}-\epsilon_{22}},
\frac{\epsilon_{21}-\lambda_{21}\cos\Phi}{\lambda_{21}-\epsilon_{21}}\geqslant\frac{\lambda_{21}+\epsilon_{21}}{\lambda_{22}+\epsilon_{22}},\\[1.5mm]
\mathcal{Z}_{\{(1,1),(2,1)\}}\neq\varnothing \ \Leftrightarrow\ 
\frac{\epsilon_{22}+\lambda_{22}\cos\Phi}{\lambda_{22}+\epsilon_{22}}\geqslant\frac{\lambda_{22}-\epsilon_{22}}{\lambda_{21}-\epsilon_{21}}, 
\frac{\epsilon_{22}-\lambda_{22}\cos\Phi}{\lambda_{22}-\epsilon_{22}}\geqslant\frac{\lambda_{22}+\epsilon_{22}}{\lambda_{21}+\epsilon_{21}},\\[1.5mm]
\mathcal{Z}_{\{(2,1),(2,2)\}}=\mathcal{Z}_{\{(1,2),(2,2)\}}=\varnothing,
\end{array}
\end{equation}
where $\epsilon_{ib}$ and $\lambda_{ib}$ are defined in \eqref{eq:eiblib}, and
\begin{equation}\label{eq:pc0110}
p_{\rm guess}^{\rm post}=\left\{
\begin{array}{clc}
\eta_{11}+\frac{1}{2}(\eta_{12}+\eta_{22}+\lambda_{22})&,&\mathcal{Z}_{\{(1,1),(1,2)\}}\neq\varnothing,\\[1mm]
\eta_{12}+\frac{1}{2}(\eta_{11}+\eta_{21}+\lambda_{21})&,&\mathcal{Z}_{\{(1,1),(2,1)\}}\neq\varnothing.\\
\end{array}
\right.
\end{equation}
\end{theorem}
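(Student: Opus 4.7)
The plan is to parametrize each of the four edge sets $\mathsf{S}$ in \eqref{eq:sbpg} and decide membership in $\mathcal{Z}_{\mathsf{S}}$ by direct computation using the parallelogram coordinates. Writing $\bm{a}=\eta_{21}\bm{\nu}_{21}-\eta_{11}\bm{\nu}_{11}$ and $\bm{b}=\eta_{22}\bm{\nu}_{22}-\eta_{12}\bm{\nu}_{12}$, so that $\|\bm{a}\|=\lambda_{21}$, $\|\bm{b}\|=\lambda_{22}$, $\bm{a}\cdot\bm{b}=\lambda_{21}\lambda_{22}\cos\Phi$, relation \eqref{eq:mn1n2d2} identifies the four parallelogram vertices as $\tilde{\bm{\mu}}_{(\omega_{1},\omega_{2})}=\tilde{\bm{\mu}}_{(1,1)}+(\omega_{1}-1)\bm{a}+(\omega_{2}-1)\bm{b}$, and every prior-probability difference $\tilde{\eta}_{\bm{\omega}}-\tilde{\eta}_{\bm{\omega}'}$ collapses to a signed sum of $\epsilon_{21}$ and $\epsilon_{22}$.

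For the ``red'' edge $\mathsf{S}=\{(1,1),(1,2)\}$, set $\bm{v}=(1-s)\tilde{\bm{\mu}}_{(1,1)}+s\tilde{\bm{\mu}}_{(1,2)}$ and impose $\varphi_{(1,1)}(\bm{v})=\varphi_{(1,2)}(\bm{v})$. Since $\tilde{\eta}_{(1,1)}-\tilde{\eta}_{(1,2)}=\epsilon_{22}$ and $\|\tilde{\bm{\mu}}_{(1,2)}-\tilde{\bm{\mu}}_{(1,1)}\|=\lambda_{22}$, this fixes $s=(\lambda_{22}-\epsilon_{22})/(2\lambda_{22})\in(0,1)$, and substitution gives $\varphi_{(1,1)}(\bm{v})=\eta_{11}+\frac{1}{2}(\eta_{12}+\eta_{22}+\lambda_{22})$, the announced value of $p_{\rm guess}^{\rm post}$ by Lemma~\ref{lem:optmea}. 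The two remaining requirements $\varphi_{(1,1)}(\bm{v})\geqslant\varphi_{(2,b)}(\bm{v})$ take the form $L\geqslant\|\bm{v}-\tilde{\bm{\mu}}_{(2,b)}\|$ with $L\geqslant 0$; using $\bm{v}-\tilde{\bm{\mu}}_{(2,1)}=s\bm{b}-\bm{a}$, $\bm{v}-\tilde{\bm{\mu}}_{(2,2)}=(s-1)\bm{b}-\bm{a}$ together with the law of cosines, squaring is reversible and, after dividing by the positive quantities $\lambda_{21}\pm\epsilon_{21}$ and $\lambda_{22}\mp\epsilon_{22}$, produces precisely the two stated inequalities. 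The red edge $\mathsf{S}=\{(1,1),(2,1)\}$ is handled by the symmetric calculation with the roles of $\mathcal{E}_{1}$ and $\mathcal{E}_{2}$ interchanged, yielding the symmetric conditions and the value $\eta_{12}+\frac{1}{2}(\eta_{11}+\eta_{21}+\lambda_{21})$.

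For the ``blue'' edges, apply the same parametrization to $\mathsf{S}=\{(2,1),(2,2)\}$ (and symmetrically to $\{(1,2),(2,2)\}$). The equality $\varphi_{(2,1)}(\bm{v})=\varphi_{(2,2)}(\bm{v})$ yields the same parameter value $t=(\lambda_{22}-\epsilon_{22})/(2\lambda_{22})$, but now the competing outcomes $(1,1)$ and $(1,2)$ carry the \emph{larger} priors. The inequality $\varphi_{(2,1)}(\bm{v})\geqslant\varphi_{(1,1)}(\bm{v})$ reads $t\lambda_{22}-\epsilon_{21}\geqslant\|\bm{v}-\tilde{\bm{\mu}}_{(1,1)}\|$: either its left-hand side is negative and it fails at once, or squaring forces $\cos\Phi$ to lie below a strictly \emph{negative} threshold. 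The inequality $\varphi_{(2,1)}(\bm{v})\geqslant\varphi_{(1,2)}(\bm{v})$, i.e.\ $-\epsilon_{21}+\frac{1}{2}(\lambda_{22}+\epsilon_{22})\geqslant\|\bm{v}-\tilde{\bm{\mu}}_{(1,2)}\|$, analogously forces $\cos\Phi$ to exceed a strictly \emph{positive} threshold. The two demands cannot hold simultaneously, so $\mathcal{Z}_{\{(2,1),(2,2)\}}=\varnothing$; the case $\{(1,2),(2,2)\}$ is identical by symmetry.

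The main obstacle is verifying that the blue-edge inequalities really yield oppositely signed bounds on $\cos\Phi$ after squaring: this reduces to confirming that the numerators emerging from the algebra have definite signs, which follows from $\epsilon_{2b}<\lambda_{2b}$ (Corollary~\ref{cor:nscpe}) together with the non-negativity of $\epsilon_{21}$ and $\epsilon_{22}$. The remaining red-edge manipulations are routine law-of-cosines bookkeeping, and the explicit value of $p_{\rm guess}^{\rm post}$ in each nonempty case is read off from Lemma~\ref{lem:optmea} by substituting the computed $\bm{v}$ into $\varphi_{(1,1)}$.
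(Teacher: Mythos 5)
Your proposal is correct and follows essentially the same route as the paper's proof: the paper's sets $\mathcal{X}_{\mathsf{S}}$ and $\mathcal{Y}_{\mathsf{S}}$ formalize exactly your two steps of locating the unique equal-$\varphi$ point on each edge (your $s=(\lambda_{22}-\epsilon_{22})/(2\lambda_{22})$ reproduces the paper's $\bm{v}_{11},\ldots,\bm{v}_{22}$) and then testing the remaining $\varphi$-inequalities, which the paper rationalizes in Eq.~\eqref{eq:vvvvc2} rather than squaring but with identical algebra. Your sign contradiction for the two far edges (one inequality forcing $\cos\Phi<0$, the other $\cos\Phi>0$) is the same observation the paper makes about the left- and right-hand sides of \eqref{eq:z2122} and \eqref{eq:z1222}.
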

%%%%%%%%%%%%%%%%%%%%%%%%%

%%%%%%%%%%%%%%%%%%%%%%%%%
\indent When $\mathcal{Z}_{\{(1,1),(1,2)\}}$ is nonempty,  
$p_{\rm guess}^{\rm prior}-p_{\rm guess}^{\rm post}=\frac{1}{2}(\lambda_{21}-\epsilon_{21})$
which is a finite gap between the guessing probability and the greatest prior probability of $\mathcal{E}_{1}$.
In this case, $p_{\rm guess}^{\rm post}$ can be obtained by ME of $\mathcal{E}_{2}$.
More explicitly, we perform the ME measurement of $\mathcal{E}_{2}$ before obtaining the classical information $b$ about the prepared subensemble.
If $b=1$, we ignore the result and guess the prepared state as $\rho_{11}$, whereas,
if $b=2$, we accept the measurement result as it is.
Then, the average probability of correctly guessing the prepared state becomes $p_{\rm guess}^{\rm post}$.
Similarly, MEPI of $\mathcal{E}$ with $\mathcal{Z}_{\{(1,1),(2,1)\}}\neq\varnothing$ can be achieved with ME of $\mathcal{E}_{1}$.\\
%%%%%%%%%%%%%%%%%%%%%%%%%
\begin{figure*}[!tt]
\centerline{
\includegraphics*[bb=0 0 1630 440,scale=0.3]{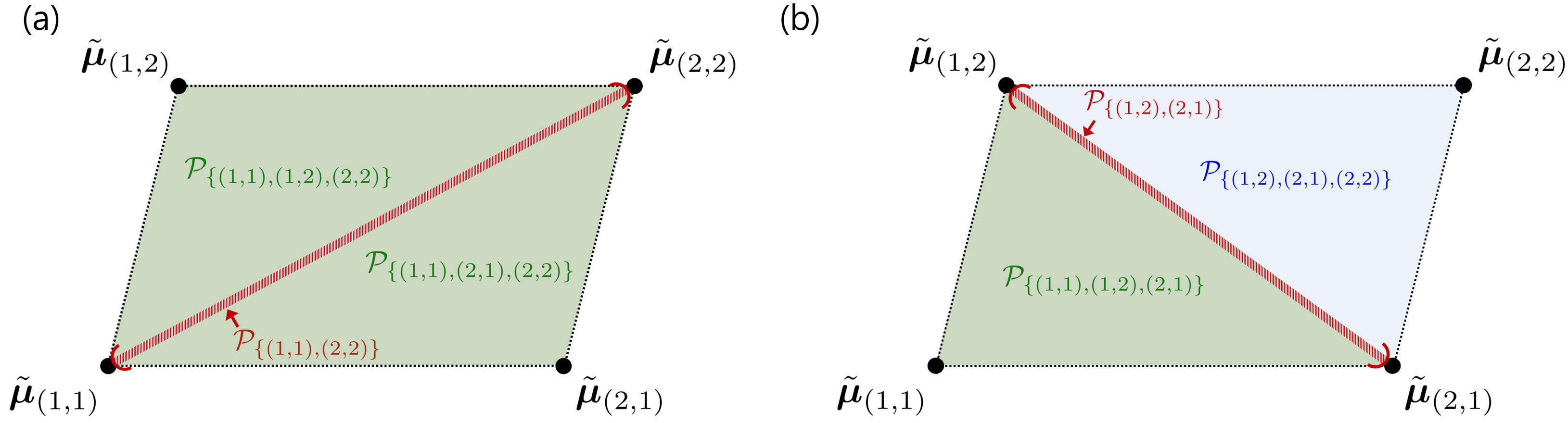}
}
\caption{When $m=n_{1}=n_{2}=2$ and $p_{\rm guess}^{\rm post}<p_{\rm guess}^{\rm prior}$, if $\mathcal{Z}\subseteq\mathcal{P}_{\Omega}$, the single element of $\mathcal{Z}$ exists in one of
the interiors of two line segments, $\mathcal{P}_{\{(1,1),(2,2)\}}$(red in (a)) and $\mathcal{P}_{\{(1,2),(2,1)\}}$(red in (b)),
or in one of the interiors of four triangles, $\mathcal{P}_{\{(1,1),(1,2),(2,2)\}}$, $\mathcal{P}_{\{(1,2),(2,1),(2,2)\}}$(green in (a)), $\mathcal{P}_{\{(1,1),(1,2),(2,1)\}}$(green in (b)), and $\mathcal{P}_{\{(1,2),(2,1),(2,2)\}}$(blue in (b)), but it cannot be an element of $\mathcal{Z}_{\{(1,2),(2,1),(2,2)\}}$.
}\label{fig:para2}
\end{figure*}
%%%%%%%%%%%%%%%%%%%%%%%%%
\indent If $\mathcal{Z}\subseteq\mathcal{P}_{\Omega}$ equivalent to
\begin{equation}
\mathcal{Z}_{\{(1,1),(1,2)\}}=\mathcal{Z}_{\{(1,1),(2,1)\}}=\varnothing
\end{equation}
from Theorem~\ref{thm:twom}, then $\mathcal{Z}\subseteq\mathcal{P}_{\mathsf{S}}$ for some $\mathsf{S}$ in the following seven $\mathsf{S}$'s:
\begin{equation}\label{eq:zinp}
\begin{array}{r}
\Omega,\{(1,1),(2,2)\},\,\{(1,1),(1,2),(2,2)\},\,\{(1,1),(2,1),(2,2)\},\\
\{(1,2),(2,1)\},\,\{(1,1),(1,2),(2,1)\},\,\{(1,2),(2,1),(2,2)\}.
\end{array}
\end{equation}
In other words, the single element of $\mathcal{Z}$ exists in one of the interiors of line segments or triangles illustrated in Fig.~\ref{fig:para2}.
Thus, $\mathcal{Z}_{\mathsf{S}}$ is nonempty for some $\mathsf{S}$ in \eqref{eq:zinp}.
The following theorem provides a necessary and sufficient condition that $\mathcal{Z}_{\mathsf{S}}\neq\varnothing$ for one $\mathsf{S}$ in \eqref{eq:zinp} when $m=n_{1}=n_{2}=2$, $p_{\rm guess}^{\rm post}<p_{\rm guess}^{\rm prior}$, and $\mathcal{Z}\subseteq\mathcal{P}_{\Omega}$.
The proof of Theorem~\ref{thm:prfthr} is given in Appendix~\ref{app:thmp}. 
%%%%%%%%%%%%%%%%%%%%%%%%%

%%%%%%%%%%%%%%%%%%%%%%%%%
\begin{theorem}\label{thm:prfthr}
For MEPI of $\mathcal{E}$ with
$m=n_{1}=n_{2}=2$ and $p_{\rm guess}^{\rm post}<p_{\rm guess}^{\rm prior}$, the followings are true.
\begin{enumerate}[label={\rm(\alph*)}]
\item We have
\begin{equation}\label{eq:nscthrpm}
\begin{array}{lclll}
\mathcal{Z}_{\{(1,1),(2,2)\}}\neq\varnothing&\ \Leftrightarrow\ &\alpha\geqslant|\beta_{+}|/\gamma_{+},\\[1.3mm]
\mathcal{Z}_{\{(1,2),(2,1)\}}\neq\varnothing&\ \Leftrightarrow\ &\alpha\leqslant-|\beta_{-}|/\gamma_{-}, 
\end{array}
\end{equation}
and
\begin{equation}\label{eq:mspth}
p_{\rm guess}^{\rm post}=\left\{
\begin{array}{cclc}
\frac{1}{2}(1+\gamma_{+})&,&\mathcal{Z}_{\{(1,1),(2,2)\}}\neq\varnothing,\\[1mm]
\frac{1}{2}(1+\gamma_{-})&,&\mathcal{Z}_{\{(1,2),(2,1)\}}\neq\varnothing,\\[1mm]
\end{array}
\right.
\end{equation}
where
\begin{eqnarray}\label{eq:glpmrpm}
\alpha&=&\lambda_{21}\lambda_{22}\cos\Phi-\epsilon_{21}\epsilon_{22},\nonumber\\
\beta_{\pm}&=&\epsilon_{21}\lambda_{22}^{2}\mp \epsilon_{22}\lambda_{21}^{2}\pm (\epsilon_{21}\mp \epsilon_{22})\lambda_{21}\lambda_{22}\cos\Phi,\nonumber\\
\gamma_{\pm}&=&\sqrt{\lambda_{21}^{2}+\lambda_{22}^{2}\pm 2\lambda_{21}\lambda_{22}\cos\Phi}.
\end{eqnarray}
Here, $\epsilon_{ib}$ and $\lambda_{ib}$ are defined in \eqref{eq:eiblib}.
\item We have
\begin{equation}\label{eq:nscthrpm1}
\begin{array}{lclll}
\mathcal{Z}_{\{(1,1),(1,2),(2,1)\}}\neq\varnothing&\ \Leftrightarrow\ &
\left\{
\begin{array}{l}
\frac{\epsilon_{21}+\lambda_{21}\cos\Phi}{\lambda_{21}+\epsilon_{21}}<\frac{\lambda_{21}-\epsilon_{21}}{\lambda_{22}-\epsilon_{22}},\\[1.5mm]
\frac{\epsilon_{22}+\lambda_{22}\cos\Phi}{\lambda_{22}+\epsilon_{22}}<\frac{\lambda_{22}-\epsilon_{22}}{\lambda_{21}-\epsilon_{21}},\\[1.5mm]
0\geqslant\alpha>-\beta_{-}/\gamma_{-},
\end{array}
\right.\\[8mm]
\mathcal{Z}_{\{(1,1),(1,2),(2,2)\}}\neq\varnothing&\ \Leftrightarrow\ &
\left\{
\begin{array}{l}
\frac{\epsilon_{21}-\lambda_{21}\cos\Phi}{\lambda_{21}-\epsilon_{21}}<\frac{\lambda_{21}+\epsilon_{21}}{\lambda_{22}+\epsilon_{22}},\\[1.5mm]
\frac{\epsilon_{22}+\lambda_{22}\cos\Phi}{\lambda_{22}+\epsilon_{22}}>-\frac{\lambda_{22}-\epsilon_{22}}{\lambda_{21}-\epsilon_{21}},\\[1.5mm]
0\leqslant\alpha<-\beta_{+}/\gamma_{+},
\end{array}
\right.\\[8mm]
\mathcal{Z}_{\{(1,1),(2,1),(2,2)\}}\neq\varnothing&\ \Leftrightarrow\ &
\left\{
\begin{array}{l}
\frac{\epsilon_{22}-\lambda_{22}\cos\Phi}{\lambda_{22}-\epsilon_{22}}<\frac{\lambda_{22}+\epsilon_{22}}{\lambda_{21}+\epsilon_{21}},\\[1.5mm]
\frac{\epsilon_{21}+\lambda_{21}\cos\Phi}{\lambda_{21}+\epsilon_{21}}>-\frac{\lambda_{21}-\epsilon_{21}}{\lambda_{22}-\epsilon_{22}},\\[1.5mm]
0\leqslant\alpha<\beta_{+}/\gamma_{+},
\end{array}
\right.\\[8mm]
\mathcal{Z}_{\{(1,2),(2,1),(2,2)\}}=\varnothing,&&
\end{array}
\end{equation}
and
\begin{equation}\label{eq:mspth1}
p_{\rm guess}^{\rm post}=\left\{
\begin{array}{cclc}
\tilde{\eta}_{(1,1)}+\frac{\lambda_{21}^{2}-\epsilon_{21}^{2}}{2[\lambda_{21}\cos(\Theta_{-}-\Xi_{-})+\epsilon_{21}]}&,&\mathcal{Z}_{\{(1,1),(1,2),(2,1)\}}\neq\varnothing,\\[1mm]
\tilde{\eta}_{(1,2)}-\frac{\lambda_{21}^{2}-\epsilon_{21}^{2}}{2[\lambda_{21}\cos(\Theta_{+}+\Xi_{+})-\epsilon_{21}]}&,&\mathcal{Z}_{\{(1,1),(1,2),(2,2)\}}\neq\varnothing,\\[1mm]
\tilde{\eta}_{(2,1)}+\frac{\lambda_{21}^{2}-\epsilon_{21}^{2}}{2[\lambda_{21}\cos(\Theta_{+}-\Xi_{+})-\epsilon_{21}]}&,&\mathcal{Z}_{\{(1,1),(2,1),(2,2)\}}\neq\varnothing,
\end{array}
\right.
\end{equation}
where
\begin{equation}\label{eq:chigamthe}
\begin{array}{l}
\Theta_{\pm}=\arccos
\frac{\epsilon_{22}(\lambda_{21}^{2}-\epsilon_{21}^{2})\pm \epsilon_{21}(\lambda_{22}^{2}-\epsilon_{22}^{2})}{\sqrt{\lambda_{21}^{2}(\lambda_{22}^{2}-\epsilon_{22}^{2})^{2}+\lambda_{22}^{2}(\lambda_{21}^{2}-\epsilon_{21}^{2})^{2}\pm 2\lambda_{21}\lambda_{22}(\lambda_{21}^{2}-\epsilon_{21}^{2})(\lambda_{22}^{2}-\epsilon_{22}^{2})\cos\Phi}},\\[2mm]
\Xi_{\pm}=\arccos
\frac{\lambda_{21}(\lambda_{22}^{2}-\epsilon_{22}^{2})\pm \lambda_{22}(\lambda_{21}^{2}-\epsilon_{21}^{2})\cos\Phi}{\sqrt{\lambda_{21}^{2}(\lambda_{22}^{2}-\epsilon_{22}^{2})^{2}+\lambda_{22}^{2}(\lambda_{21}^{2}-\epsilon_{21}^{2})^{2}\pm 2\lambda_{21}\lambda_{22}(\lambda_{21}^{2}-\epsilon_{21}^{2})(\lambda_{22}^{2}-\epsilon_{22}^{2})\cos\Phi}}.
\end{array}
\end{equation}
\item When $\mathcal{Z}\subseteq\mathcal{P}_{\Omega}$, a non-null MEPI measurement exists if and only if $\alpha$ in Eq. \eqref{eq:glpmrpm} is zero.
\end{enumerate}
\end{theorem}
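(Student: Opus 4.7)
The plan is to reduce everything to planar Euclidean geometry and analyze the conditions $\varphi_{\bm{\omega}}=\varphi_{\bm{\omega}'}$ as Apollonius-type distance-difference equations. By \eqref{eq:mn1n2d2}, the four points $\tilde{\bm{\mu}}_{\bm{\omega}}$ lie in a 2D affine subspace of $\mathbb{R}^{3}$; working with natural planar coordinates
\begin{equation*}
\tilde{\bm{\mu}}_{(1,1)}=(0,0),\ \tilde{\bm{\mu}}_{(2,1)}=(\lambda_{21},0),\ \tilde{\bm{\mu}}_{(1,2)}=\lambda_{22}(\cos\Phi,\sin\Phi),\ \tilde{\bm{\mu}}_{(2,2)}=\tilde{\bm{\mu}}_{(2,1)}+\tilde{\bm{\mu}}_{(1,2)},
\end{equation*}
and writing $r_{\bm{\omega}}(\bm v)=\|\tilde{\bm{\mu}}_{\bm{\omega}}-\bm v\|$, each equality $\varphi_{\bm{\omega}}(\bm v)=\varphi_{\bm{\omega}'}(\bm v)$ from \eqref{eq:defvps} becomes $r_{\bm{\omega}'}-r_{\bm{\omega}}=\tilde{\eta}_{\bm{\omega}}-\tilde{\eta}_{\bm{\omega}'}$, collapsing for the four-point problem into the three differences $r_{21}-r_{11}=\epsilon_{21}$, $r_{12}-r_{11}=\epsilon_{22}$, and $r_{22}-r_{11}=\epsilon_{21}+\epsilon_{22}$.

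For part (a), I parameterize $\bm v=t\,\tilde{\bm{\mu}}_{(2,2)}$ along the $(1,1)$--$(2,2)$ diagonal, so $r_{11}=t\gamma_{+}$ and $r_{22}=(1-t)\gamma_{+}$. The sole equality pins $t=\tfrac{1}{2}-(\epsilon_{21}+\epsilon_{22})/(2\gamma_{+})$, and substitution yields $p_{\rm guess}^{\rm post}=\tilde{\eta}_{(1,1)}+t\gamma_{+}=(1+\gamma_{+})/2$ after collecting all prior probabilities. Expanding $r_{21}^{2}$ and $r_{12}^{2}$ at this point and squaring the two dominance inequalities $r_{21}\ge r_{11}+\epsilon_{21}$ and $r_{12}\ge r_{11}+\epsilon_{22}$ both collapse, after straightforward simplification, to $\alpha\ge|\beta_{+}|/\gamma_{+}$; the opposite diagonal follows by the symmetry $(2,1)\leftrightarrow(1,2)$ that sends $\gamma_{+}\to\gamma_{-}$ and $\beta_{+}\to\beta_{-}$.

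For part (b), each triangular $\mathcal{P}_{\mathsf{S}}$ produces two independent hyperbola equalities whose intersection is generically a point. For $\mathsf{S}=\{(1,1),(1,2),(2,1)\}$, I solve the pair in polar coordinates $(r_{11},\theta)$ about $\tilde{\bm{\mu}}_{(1,1)}$: the law of cosines delivers the intersection angle $\theta=\Theta_{-}-\Xi_{-}$, and the distance equation then determines $r_{11}$, directly giving the first line of \eqref{eq:mspth1}. Barycentric positivity of $\bm v$ inside the triangle and the dominance $\varphi_{(1,1)}\ge\varphi_{(2,2)}$ translate, after algebraic manipulation, to the two ratio inequalities and the $\alpha$-range of \eqref{eq:nscthrpm1}, respectively; the two analogous nonempty triangles follow by index permutation. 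For $\mathsf{S}=\{(1,2),(2,1),(2,2)\}$, the same construction places the candidate intersection at a point where the required dominance $\varphi_{(1,2)}\ge\varphi_{(1,1)}$ combined with the hyperbola equations contradicts $\tilde{\eta}_{(1,1)}\ge\tilde{\eta}_{\bm{\omega}}$ from \eqref{eq:apgte1}, giving $\mathcal{Z}_{\{(1,2),(2,1),(2,2)\}}=\varnothing$.

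Part (c) is the slickest. For any $\bm v$ in the plane the parallelogram identity
\begin{equation*}
r_{11}^{2}+r_{22}^{2}-r_{21}^{2}-r_{12}^{2}=2\lambda_{21}\lambda_{22}\cos\Phi
\end{equation*}
holds (a direct consequence of the median formula applied to both diagonals $\gamma_{\pm}$). If $\mathcal{Z}_{\Omega}\neq\varnothing$, substituting $r_{21}=r_{11}+\epsilon_{21}$, $r_{12}=r_{11}+\epsilon_{22}$, and $r_{22}=r_{11}+\epsilon_{21}+\epsilon_{22}$ into the left-hand side telescopes it to $2\epsilon_{21}\epsilon_{22}$, forcing $\alpha=0$. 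Conversely, when $\alpha=0$ the same identity shows that the three hyperbola equations are mutually consistent and admit a common planar solution $\bm v$; combined with the standing assumption $\mathcal{Z}\subseteq\mathcal{P}_{\Omega}$ and the single-element property of $\mathcal{Z}$ from Lemma~\ref{lem:optmea}, this common solution must coincide with the unique element of $\mathcal{Z}$ and hence lie in $\mathcal{P}_{\Omega}$, witnessing $\mathcal{Z}_{\Omega}\neq\varnothing$. The principal obstacle across the theorem is the case analysis in part (b): selecting the correct branch of each hyperbola so that the intersection lies inside the designated triangle, and massaging the barycentric-and-dominance constraints into the compact $\Theta_{\pm},\Xi_{\pm}$ closed forms each require careful but standard applications of the law of cosines and squared-distance expansions.
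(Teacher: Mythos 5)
Your overall strategy for (a) and (b) --- pin down the candidate point from the equality constraints along each diagonal or inside each triangle, then impose the dominance inequalities --- is the same as the paper's, just with the $\mathcal{X}_{\mathsf{S}}/\mathcal{Y}_{\mathsf{S}}$ bookkeeping replaced by explicit planar coordinates; the computations you defer (the polar-coordinate solution giving $\Theta_{\pm},\Xi_{\pm}$, the squared-distance expansions) are exactly the ones the paper carries out. Part (c), however, is genuinely different and nicer: the paper proves the forward direction by a case analysis over which $\mathcal{Z}_{\mathsf{S}}$ is nonempty, using the explicit $\varphi$-difference formulas in each case, whereas your parallelogram identity $r_{11}^{2}+r_{22}^{2}-r_{21}^{2}-r_{12}^{2}=2\lambda_{21}\lambda_{22}\cos\Phi$ kills all cases at once; and for the converse, starting from the element of $\mathcal{Z}$ guaranteed by Corollary~\ref{cor:null} (rather than ``constructing'' a common solution, which is how you phrased it), two equalities plus the identity with $\alpha=0$ force the remaining ones, even in the $|\mathsf{S}|=2$ case where one equality plus two dominance inequalities saturate via the identity. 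That argument is worth keeping.

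Three concrete problems. First, your dominance inequalities in (a) point the wrong way: for $\mathsf{S}=\{(1,1),(2,2)\}$ the requirement is $\varphi_{(1,1)}(\bm v)\geqslant\varphi_{(2,1)}(\bm v)$, i.e.\ $\tilde{\eta}_{(1,1)}+r_{11}\geqslant\tilde{\eta}_{(2,1)}+r_{21}$, which is $r_{21}\leqslant r_{11}+\epsilon_{21}$, not $r_{21}\geqslant r_{11}+\epsilon_{21}$; carried through literally, your squaring would produce $\alpha\leqslant -|\beta_{+}|/\gamma_{+}$ rather than the stated condition, so this must be fixed (it is presumably a transcription slip, since both sides are nonnegative and squaring is legitimate in either direction). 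Second, the $\Leftarrow$ direction of \eqref{eq:nscthrpm} requires the candidate $t=\tfrac12-(\epsilon_{21}+\epsilon_{22})/(2\gamma_{+})$ to lie in $(0,1)$, i.e.\ $\gamma_{+}>\epsilon_{21}+\epsilon_{22}$; this is not automatic but follows from $\gamma_{+}^{2}-(\epsilon_{21}+\epsilon_{22})^{2}=(\lambda_{21}^{2}-\epsilon_{21}^{2})+(\lambda_{22}^{2}-\epsilon_{22}^{2})+2\alpha$ together with $\lambda_{2b}>\epsilon_{2b}$ from Corollary~\ref{cor:nscpe} and $\alpha\geqslant0$ --- you omit this check entirely. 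Third, your argument that $\mathcal{Z}_{\{(1,2),(2,1),(2,2)\}}=\varnothing$ is not a proof: $\varphi_{(1,2)}(\bm v)\geqslant\varphi_{(1,1)}(\bm v)$ only says $r_{12}-r_{11}\geqslant\epsilon_{22}$, which does not by itself contradict \eqref{eq:apgte1}. The actual obstruction is quantitative: the necessary conditions force $\alpha\leqslant0$ and $\beta_{-}<0$ simultaneously, which is impossible because $\alpha(\epsilon_{21}+\epsilon_{22})+\beta_{-}=\epsilon_{21}(\lambda_{22}^{2}-\epsilon_{22}^{2})+\epsilon_{22}(\lambda_{21}^{2}-\epsilon_{21}^{2})>0$ (the paper's inequality \eqref{eq:aebel}). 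You need that identity, or an equivalent, to close this case.
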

%%%%%%%%%%%%%%%%%%%%%%%%%

%%%%%%%%%%%%%%%%%%%%%%%%%
\indent Finally, we mention the location of the single element of $\mathcal{Z}$ characterizing all MEPI measurements for $p_{\rm guess}^{\rm post}>\tilde{\eta}_{\bm{1}}$.
When $\bm{z}\in\mathcal{Z}_{\mathsf{S}}$ for some $\mathsf{S}\subseteq\Omega$,
the definition of $\mathcal{Z}_{\mathsf{S}}$ in Eq.~\eqref{eq:dzss} implies that
two points $\bm{z}$ and $\tilde{\bm{\mu}}_{\bm{\omega}}$ are separated by $p_{\rm guess}^{\rm post}-\tilde{\eta}_{\bm{\omega}}$ for each $\bm{\omega}\in\mathsf{S}$. Thus, the location of $\bm{z}$ is easily determined by the value of $p_{\rm guess}^{\rm post}$ if the affine dimension of $\{\tilde{\bm{\mu}}_{\bm{\omega}}\}_{\bm{\omega}\in\mathsf{S}}$ is one(that is, $\{\tilde{\bm{\mu}}_{\bm{\omega}}\}_{\bm{\omega}\in\mathsf{S}}$ form a line segment).
On the other hand, if the affine dimension of $\{\tilde{\bm{\mu}}_{\bm{\omega}}\}_{\bm{\omega}\in\mathsf{S}}$ is two or three(that is, $\{\tilde{\bm{\mu}}_{\bm{\omega}}\}_{\bm{\omega}\in\mathsf{S}}$ form a polygon or polyhedron), it is difficult to determine the location of $\bm{z}$ only with the value of $p_{\rm guess}^{\rm post}$, and more information about $\bm{z}$ is required.
For example, $\Theta_{\pm}$ and $\Xi_{\pm}$ in Eq.~\eqref{eq:chigamthe} are used to express the specific angles 
that determine $\bm{z}$ when  $m=n_{1}=n_{2}=2$, $p_{\rm guess}^{\rm post}<p_{\rm guess}^{\rm prior}$, and $\mathcal{Z}_{\mathsf{S}}\neq\varnothing$ for some $\mathsf{S}\subseteq\Omega$ with $|\mathsf{S}|=3$.
The details are given in Appendix~\ref{app:thmp}.
%%%%%%%%%%%%%%%%%%%%%%%%%

%%%%%%%%%%%%%%%%%%%%%%%%%
%        Section        %
%%%%%%%%%%%%%%%%%%%%%%%%%
\section{Conclusion}\label{sec:conc}
We have analyzed MEPI of qubit state ensemble consisting of $m$ subensembles with $n_{1},\ldots,n_{m}$ states 
by considering the modified problem, that is, ME of $\prod_{b=1}^{m}n_{b}$ qubit states,
and have characterized all optimal measurements.
For the case where MEPI is impossible without the help of measurement, we have 
provided an analytic structure to characterize optimal measurements with fixed non-null outcomes.
We have also shown that a null optimal measurement always exists for any MEPI of qubit states.
In addition, we have analytically provided a necessary and sufficient condition that pre-measurement information is strictly more favorable than post-measurement information(that is, $p_{\rm guess}^{\rm post}<p_{\rm guess}^{\rm prior}$)
when $n_{b}=2$ for all $b=1,\ldots,m$.
For the case of $m=n_{1}=n_{2}=2$ and $p_{\rm guess}^{\rm post}<p_{\rm guess}^{\rm prior}$, we have analytically provided the optimal probability of correct guessing.
In this case, we have also shown that the uniqueness of optimal measurement is equivalent to the non-existence of non-null MEPI measurement. Whereas the existing results of MEPI only deal with some special cases of state ensembles consisting of equiprobable states or symmetric states\cite{ref:ball2008,ref:gopa2010,ref:carm2018}, our results consider the general case of qubit state ensembles having four states and provide the analytic solutions. \\
%%%%%%%%%%%%%%%%%%%%%%%%%
\indent Our results are closely related to the incompatibility of measurements\cite{ref:carm2018,ref:hein2016,ref:carm2019,ref:skrz20191,ref:uola2019,ref:skrz20192}. 
Recently, it has been found that 
the gap between the two optimal probabilities of correct guessing with pre- and post-measurement information, $p_{\rm guess}^{\rm prior}$ and $p_{\rm guess}^{\rm post}$, is a witness of the incompatibility of measurements\cite{ref:carm2019}.
Moreover, the incompatibility robustness\cite{ref:skrz20191,ref:uola2019,ref:skrz20192} 
of a given set of measurements $\{\mathcal{M}_{b}\}_{b\in\mathsf{B}}$
can be written by $p_{\rm guess}^{\rm post}$ as
\begin{equation}\label{eq:irmb}
I_{\rm R}(\{\mathcal{M}_{b}\}_{b\in\mathsf{B}})=\max_{\mathcal{E}}\frac{p_{\rm corr}^{\rm prior}(\mathcal{E},\{\mathcal{M}_{b}\}_{b\in\mathsf{B}})}{p_{\rm guess}^{\rm post}(\mathcal{E})}-1,
\end{equation}
where the maximization is over all qubit state ensemble $\mathcal{E}$ such as Eq.~\eqref{eq:mens} and
$p_{\rm corr}^{\rm prior}(\mathcal{E},\{\mathcal{M}_{b}\}_{b\in\mathsf{B}})$ is the average probability of correct guessing that can be obtained by performing $\mathcal{M}_{b}$ according to the pre-measurement information $b$. 
As quantum incompatibility is related with quantum resource theory\cite{ref:chit2019}
which is one of the most general frame work in quantum information processing,
it is an interesting future work to obtain an analytic form of Eq.~\eqref{eq:irmb}.
%%%%%%%%%%%%%%%%%%%%%%%%%

%%%%%%%%%%%%%%%%%%%%%%%%%
%    Acknowledgements   %
%%%%%%%%%%%%%%%%%%%%%%%%%
\section*{Acknowledgements}
We would like to thank Professor Joonwoo Bae for his contribution to the early stage of this work. We are grateful to Jaehee Shin and Jihwan Kim for useful comments.
This work is supported by the Basic Science Research Program through
the National Research Foundation of Korea(NRF) funded by the Ministry of
Education, Science and Technology (NRF2018R1D1A1B07049420) and Institute of Information \& communications Technology Planning \& Evaluation(IITP) grant funded by the Korea government(MSIT) (No.2020001343, Artificial Intelligence Convergence Research Center(Hanyang University ERICA).
D.H. and J.S.K. acknowledge support from the National Research Foundation of Korea(NRF)
grant funded by the Korea government(Ministry of Science and ICT)(NRF2020M3E4A1080088).
%%%%%%%%%%%%%%%%%%%%%%%%%

%%%%%%%%%%%%%%%%%%%%%%%%%
%        Appendix       %
%%%%%%%%%%%%%%%%%%%%%%%%%
\appendix
%%%%%%%%%%%%%%%%%%%%%%%%%
%        Section        %
%%%%%%%%%%%%%%%%%%%%%%%%%
\section{Proofs of Propositions}\label{app:hnome}
\indent In the context of convex optimization\cite{ref:boyd2004}, ME of $\mathcal{E}$ is a convex optimization problem that maximizes $\sum_{i\in\Omega}\eta_{i}{\rm Tr}[\rho_{i}M_{i}]$ subject to the following POVM constraints:
\begin{equation}\label{eq:ppcs}
\begin{array}{ll}
M_{i}\succeq 0\,\forall i\in\Omega,& (\mbox{positive-semidefiniteness})\\[1mm]
\sum_{i\in\Omega}M_{i}=\mathbbm{1}.& (\mbox{completeness})
\end{array}
\end{equation}
The Lagrange dual problem is to minimize ${\rm Tr}K$ subject to 
\begin{equation}\label{eq:ldpcs}
\begin{array}{ll}
W_{i}\succeq 0\,\forall i\in\Omega,& (\mbox{positive-semidefiniteness})\\[1mm]
K=\eta_{i}\rho_{i}+W_{i}\,\forall i\in\Omega, & (\mbox{Lagrangian stability})
\end{array}
\end{equation}
where $W_{i}$ and $K$ are Lagrange multipliers of $M_{i}\succeq 0$ and $\sum_{i\in\Omega}M_{i}=\mathbbm{1}$, respectively\cite{ref:elda20031,ref:bae20132}.\\
%%%%%%%%%%%%%%%%%%%%%%%%%
\indent As the primal and dual problems have the same optimal value\cite{ref:elda20031}, primal and dual feasible variables are optimal if and only if they satisfy
\begin{equation}\label{eq:cmsn}
{\rm Tr}[M_{i}W_{i}]=0\,\forall i\in\Omega.\ (\mbox{complementary slackness})
\end{equation}
Therefore, ME of $\mathcal{E}$ is equivalent to finding 
a set of primal and dual variables,
$\{M_{i}\}_{i\in\Omega}\cup\{W_{i}\}_{i\in\Omega}\cup\{K\}$, that satisfies the so-called \emph{Karush--Kuhn--Tucker}(KKT) condition consisting of Conditions \eqref{eq:ppcs}, \eqref{eq:ldpcs}, and \eqref{eq:cmsn}\cite{ref:bae20132}.\\
%%%%%%%%%%%%%%%%%%%%%%%%%
\indent The primal and dual variables can be expressed as
\begin{equation}\label{eq:povmgr}
M_{i}=p_{i}(\mathbbm{1}+\bm{u}_{i}\cdot\bm{\sigma}),\
W_{i}=\frac{r_{i}}{2}(\mathbbm{1}+\bm{w}_{i}\cdot\bm{\sigma}),\ 
K=\frac{1}{2}(s\mathbbm{1}+{\bm v}\cdot\bm{\sigma}).
\end{equation}
using $p_{i},r_{i},s\in\mathbb{R}$ and $\bm{u}_{i},\bm{w}_{i},{\bm v}\in\mathbb{R}^{3}$.
From the expression of \eqref{eq:povmgr}, KKT condition can be expressed as follows:
\begin{equation}\label{eq:mdgeo}
\begin{array}{lll}
\mbox{\bf(P0)}\ p_{i}\geqslant0,\ \ \|\bm{u}_{i}\|\leqslant1\ \ \forall i\in\Omega,\ \,\mbox{\bf(P1)}\ \sum_{i\in\Omega}p_{i}=1,&\mbox{\bf(P2)}\ \sum_{i\in\Omega}p_{i}{\bm u}_{i}={\bm 0},\\
&&\\
\mbox{\bf(D0)}\ r_{i}\geqslant0,\ \ \|\bm{w}_{i}\|\leqslant1\ \ \forall i\in\Omega,\ \mbox{\bf(D1)}\ s=\eta_{i}+r_{i}\ \forall i\in\Omega,&\mbox{\bf(D2)}\ {\bm v}=\eta_{i}{\bm \nu}_{i}+r_{i}{\bm w}_{i}\ \forall i\in\Omega,\\
&&\\
\mbox{\bf(C0)}\ p_{i}r_{i}=0\ \ \mbox{or}\ \ {\bm u}_{i}\cdot{\bm w}_{i}=-1\ \ \forall i\in\Omega.&
\end{array}
\end{equation}
%%%%%%%%%%%%%%%%%%%%%%%%%
\indent Now, we prove Propositions~\ref{pro:nomec} and  \ref{pro:htp} by using KKT condition \eqref{eq:mdgeo}
with $\{p_{i},{\bm u}_{i}\}_{i\in\Omega}\cup\{r_{i},{\bm w}_{i}\}_{i\in\Omega}\cup\{s,{\bm v}\}$ as variables.
We use the superscript ${}^\star$ to express the optimal variables.
Note that $s^{\star}$, ${\bm v}^{\star}$, and $r_{i}^{\star}$ are always unique, whereas $p_{i}^{\star}$, $\bm{u}_{i}^{\star}$, and $\bm{w}_{i}^{\star}$ may not\cite{ref:bae20132}.
We also note that $s^{\star}=p_{\rm guess}$.
%%%%%%%%%%%%%%%%%%%%%%%%%

%%%%%%%%%%%%%%%%%%%%%%%%%
%      Subsection       %
%%%%%%%%%%%%%%%%%%%%%%%%%
\subsection*{Proof of Proposition~\ref{pro:nomec}}
\renewcommand*{\proofname}{Proof of {\rm(a)}}
\begin{proof}
For the proof of Proposition~\ref{pro:nomec}{\rm(a)}($\Rightarrow$), assume $p_{\rm guess}=\eta_{1}$. 
From {\bf(D1)} and {\bf(D2)}  of \eqref{eq:mdgeo},
this assumption implies $\eta_{1}>\eta_{i}$ for all $i\neq1$
because $\eta_{j}=\eta_{1}$ means $\bm{\nu}_{j}=\bm{\nu}_{1}$. Moreover,
\begin{equation}
r_{i}^{\star}=\eta_{1}-\eta_{i},\, \ \bm{w}_{i}^{\star}=\frac{\eta_{1}\bm{\nu}_{1}-\eta_{i}\bm{\nu}_{i}}{\eta_{1}-\eta_{i}}\ \ \forall i\neq1.
\end{equation}
Thus,
\begin{equation}
\begin{array}{rcl}
\epsilon_{i}=\eta_{1}-\eta_{i}\geqslant
\|\bm{w}_{i}^{\star}\|(\eta_{1}-\eta_{i})
=\|\eta_{1}\bm{\nu}_{1}-\eta_{i}\bm{\nu}_{i}\|=\lambda_{i}\ \forall i\in\Omega,
\end{array}
\end{equation}
where the inequality is due to {\bf(D0)}  of \eqref{eq:mdgeo}.\\
%%%%%%%%%%%%%%%%%%%%%%%%%
\indent For the proof of Proposition~\ref{pro:nomec}{\rm(a)}($\Leftarrow$), suppose that $\epsilon_{i}\geqslant\lambda_{i}$ for all $i\in\Omega$.
Since $\eta_{j}=\eta_{1}$ means $\bm{\nu}_{j}=\bm{\nu}_{1}$ by $\epsilon_{j}\geqslant\lambda_{j}$,
it follows that $\eta_{1}>\eta_{i}$ for all $i\neq1$.
Thus, $p_{\rm guess}=\eta_{1}$ because \eqref{eq:mdgeo} holds for 
\begin{equation}\label{eq:pidui0}
\begin{array}{lll}
p_{i}=\delta_{1i},&\bm{u}_{i}=\bm{0},\\[1mm]
r_{i}=\eta_{1}-\eta_{i},&\bm{w}_{i}=
\Bigg\{
\begin{array}{ccl}
\bm{0}&\mbox{for}&i=1,\\
\frac{\eta_{1}\bm{\nu}_{1}-\eta_{i}\bm{\nu}_{i}}{\eta_{1}-\eta_{i}}&\mbox{for}&i\neq1,
\end{array}
\\[1mm]
s=\eta_{1},&\bm{v}=\eta_{1}\bm{\nu}_{1},
\end{array}
\end{equation}
where $\delta_{ij}$ is the Kronecker delta. 
This completes our proof of Proposition~\ref{pro:nomec}(a).
We also note that
$\epsilon_{i}\geqslant\lambda_{i}$ for all $i\in\Omega$
is equivalent to $\mathcal{Z}_{\{1\}}\neq\varnothing$ since $\mathcal{P}_{\{1\}}$ has only one element $\eta_{1}\bm{\nu}_{1}$.
\end{proof}
\renewcommand*{\proofname}{Proof}
%%%%%%%%%%%%%%%%%%%%%%%%%
\renewcommand*{\proofname}{Proof of {\rm(b)}}
\begin{proof}
For the proof of Proposition~\ref{pro:nomec}{\rm(b)}($\Rightarrow$), we assume that $p_{\rm guess}=\eta_{1}$ and $\mathcal{M}$ is an optimal POVM expressed by $\{p_{i},\bm{u}_{i}\}_{i\in\Omega}$. 
Then, there is $\{r_{i},\bm{w}_{i}\}_{i\in\Omega}\cup\{s,{\bm v}\}$ such that
\eqref{eq:mdgeo} hold for $\{p_{i},\bm{u}_{i}\}_{i\in\Omega}\cup\{r_{i},\bm{w}_{i}\}_{i\in\Omega}\cup\{s,{\bm v}\}$.\\
\indent Let us consider a non-null outcome $k\neq1$.
Since $s=p_{\rm guess}=\eta_{1}$, it follows from {\bf(D1)} and {\bf(D2)}  of \eqref{eq:mdgeo} that 
\begin{equation}
r_{k}=\eta_{1}-\eta_{k},\ r_{k}\bm{w}_{k}=\eta_{1}\bm{\nu}_{1}-\eta_{k}\bm{\nu}_{k}.
\end{equation}
$r_{k}$ is nonzero because $\eta_{k}=\eta_{1}$ means ${\bm \nu}_{k}={\bm \nu}_{1}$ by $\epsilon_{k}\geqslant\lambda_{k}$, therefore,
\begin{equation}
\bm{w}_{k}=\frac{\eta_{1}\bm{\nu}_{1}-\eta_{k}\bm{\nu}_{k}}{\eta_{1}-\eta_{k}}.
\end{equation}
Since $p_{k}$ and $r_{k}$ are nonzero, it follows from {\bf(C0)} of \eqref{eq:mdgeo}  that $\bm{u}_{k}$ and $\bm{w}_{k}$ are unit vectors satisfying $\bm{u}_{k}\cdot\bm{w}_{k}=-1$.
Thus,
\begin{equation}
\epsilon_{k}=\epsilon_{k}\|{\bm w}_{k}\|=\lambda_{k}
\end{equation}
and
\begin{equation}
\bm{u}_{k}=\frac{\eta_{k}\bm{\nu}_{k}-\eta_{1}\bm{\nu}_{1}}{\|\eta_{k}\bm{\nu}_{k}-\eta_{1}\bm{\nu}_{1}\|}, 
\end{equation}
that is, Eq.~\eqref{eq:trime} holds  for $i=k$. 
We also note that Eq.~\eqref{eq:trime} is obviously satisfied for every null outcome.
Therefore, $\mathcal{M}$ satisfies Eq.~\eqref{eq:trime}.\\ 
%%%%%%%%%%%%%%%%%%%%%%%%%
\indent For the proof of Proposition~\ref{pro:nomec}{\rm(b)}($\Leftarrow$), suppose that $p_{\rm guess}=\eta_{1}$ and $\mathcal{M}$ is a POVM satisfying Eq.~\eqref{eq:trime}.
Let $\{p_{i},\bm{u}_{i}\}_{i\in\Omega}$ be a set of primal variables expressing $\mathcal{M}$.
Also, let $\{r_{i},\bm{w}_{i}\}_{i\in\Omega}\cup\{s,\bm{v}\}$ be a set of dual variables such that
\begin{equation}
\begin{array}{ll}
r_{i}=\eta_{1}-\eta_{i},&\bm{w}_{i}=
\Bigg\{
\begin{array}{ccl}
\bm{0}&\mbox{for}&i=1,\\
\frac{\eta_{1}\bm{\nu}_{1}-\eta_{i}\bm{\nu}_{i}}{\eta_{1}-\eta_{i}}&\mbox{for}&i\neq1,
\end{array}
\\
s=\eta_{1},& \bm{v}=\eta_{1}\bm{v}_{1}.
\end{array}
\end{equation}
Then, \eqref{eq:mdgeo} holds for $\{p_{i},\bm{u}_{i}\}_{i\in\Omega}\cup\{r_{i},\bm{w}_{i}\}_{i\in\Omega}\cup\{s,\bm{v}\}$.
Therefore, $\mathcal{M}$ is optimal.
\end{proof}
\renewcommand*{\proofname}{Proof}
%%%%%%%%%%%%%%%%%%%%%%%%%

%%%%%%%%%%%%%%%%%%%%%%%%%
%      Subsection       %
%%%%%%%%%%%%%%%%%%%%%%%%%
\subsection*{Proof of Proposition~\ref{pro:htp}}
\renewcommand*{\proofname}{Proof of {\rm (a)}, {\rm (b)($\Leftarrow$)}, and {\rm (c)($\Leftarrow$)}}
\begin{proof}
Let us assume that $\mathsf{S}$ is a subset of $\Omega$ with $\bm{v}\in\mathcal{Z}_{\mathsf{S}}$. The definition of $\mathcal{Z}_{\mathsf{S}}$ implies that there exists a real number $s$ fulfilling
$\varphi_{i}(\bm{v})=s$ for all $i\in\mathsf{S}$. 
We will show that $s=s^{\star}$ and ${\bm v}={\bm v}^{\star}$.
Since $s^{\star}=p_{\rm guess}$ and $\bm{v}^{\star}$ is unique, 
if $s=s^{\star}$ and ${\bm v}={\bm v}^{\star}$, 
then $p_{\rm guess}=\varphi_{i}(\bm{v})$ for all $i\in\mathsf{S}$
and $\mathcal{Z}_{\mathsf{S}}=\{\bm{v}^{\star}\}$ which implies that $\mathcal{Z}$ is a single-element set having only one element $\bm{v}$, that is, $\mathcal{Z}=\{\bm{v}\}$.
\\
%%%%%%%%%%%%%%%%%%%%%%%%%
\indent First, let us consider the case of $\eta_{k}=s$ for some $k\in\Omega$.
Since $s\geqslant\varphi_{i}(\bm{v})$ for all $i\in\Omega$ from the definition of $\mathcal{Z}_{\mathsf{S}}$ and $\varphi_{j}(\bm{v})=s$ for all $j\in\mathsf{S}$, we have
\begin{equation}
\|\eta_{k}\bm{\nu}_{k}-\bm{v}\|=\varphi_{k}(\bm{v})-\eta_{k}\leqslant s-\eta_{k}=0,
\end{equation}
which means $\bm{v}=\eta_{k}\bm{\nu}_{k}$. Thus, $k=1$ because
\begin{equation}
\eta_{k}=s\geqslant\varphi_{i}(\eta_{k}\bm{\nu}_{k})=\eta_{i}+\|\eta_{i}\bm{\nu}_{i}-\eta_{k}\bm{\nu}_{k}\|>\eta_{i}\ \forall i\neq k.
\end{equation}
\eqref{eq:mdgeo} holds for the primal and dual variables satisfying Eq.~\eqref{eq:pidui0}; therefore, $s=s^{\star}$ and ${\bm v}={\bm v}^{\star}$.\\
%%%%%%%%%%%%%%%%%%%%%%%%%
\indent Now, let us consider the case of $\eta_{i}\neq s$ for all $i\in\Omega$.
In this case, $\eta_{i}\bm{\nu}_{i}\neq\bm{v}$ for all $i\in\mathsf{S}$
because otherwise 
\begin{equation}
s=\varphi_{j}(\bm{v})=\eta_{j}+\|\eta_{j}\bm{\nu}_{j}-\bm{v}\|=\eta_{j}
\end{equation}
for some $j\in\mathsf{S}$ with $\eta_{j}\bm{\nu}_{j}=\bm{v}$.
Since ${\bm v}\in\mathcal{P}_{\mathsf{S}}$, there exists $\{c_{i}\}_{i\in\Omega}\subseteq\mathbb{R}$ such that 
\begin{equation}
c_{i}>0\ \forall i\in\mathsf{S},\ \
c_{i}=0\ \forall i\not\in\mathsf{S},\ \
\sum_{i\in\mathsf{S}}c_{i}=1,\ \
\sum_{i\in\mathsf{S}}c_{i}(\eta_{i}\bm{\nu}_{i})={\bm v}.
\end{equation}
Thus, $s=s^{\star}$ and ${\bm v}={\bm v}^{\star}$ because \eqref{eq:mdgeo} holds for $\{s,{\bm v}\}$ along with $\{p_{i},\bm{u}_{i}\}_{i\in\Omega}\cup\{r_{i},\bm{w}_{i}\}_{i\in\Omega}$ satisfying
\begin{equation}\label{eq:pruwi}
\begin{array}{rclrcl}
p_{i}&=&
\frac{c_{i}\|\eta_{i}\bm{\nu}_{i}-{\bm v}\|}{\sum_{j\in\mathsf{S}}c_{j}\|\eta_{j}\bm{\nu}_{j}-{\bm v}\|},
&
\quad r_{i}&=&s-\eta_{i},
\\
\bm{u}_{i}&=&
\left\{
\begin{array}{ccl}
\frac{\eta_{i}\bm{\nu}_{i}-{\bm v}}{\|\eta_{i}\bm{\nu}_{i}-{\bm v}\|}&,&i\in\mathsf{S},\\
\bm{0}&,&i\not\in\mathsf{S},
\end{array}
\right.
&\bm{w}_{i}&=&
\left\{
\begin{array}{ccl}
\frac{{\bm v}-\eta_{i}\bm{\nu}_{i}}{\|{\bm v}-\eta_{i}\bm{\nu}_{i}\|}&,&i\in\mathsf{S},\\
\frac{{\bm v}-\eta_{i}\bm{\nu}_{i}}{s-\eta_{i}}&,&i\not\in\mathsf{S}.
\end{array}
\right.
\end{array}
\end{equation}
This completes our proof of Proposition~\ref{pro:htp}(a).\\
%%%%%%%%%%%%%%%%%%%%%%%%%
\indent Moreover, the optimal POVM corresponding to $\{p_{i},\bm{u}_{i}\}_{i\in\Omega}$ satisfying Eq.~\eqref{eq:pruwi} has $\mathsf{S}$ as the set of all non-null outcomes.
Thus, Proposition~\ref{pro:htp}(b)($\Leftarrow$) is proved. We also note that
every POVM $\mathcal{M}$ fulfilling Condition \eqref{eq:mem} with $M_{i}=0$ for all $i\notin\mathsf{S}$ is optimal because \eqref{eq:mdgeo} holds for
$\{s,\bm{v}\}$ along with $\{p_{i},\bm{u}_{i}\}_{i\in\Omega}$ expressing $\mathcal{M}$ 
and $\{r_{i},\bm{w}_{i}\}_{i\in\Omega}$ satisfying \eqref{eq:pruwi};
therefore, Proposition~\ref{pro:htp}(c)($\Leftarrow$) is also proved.
\end{proof}
\renewcommand*{\proofname}{Proof}
%%%%%%%%%%%%%%%%%%%%%%%%%
\renewcommand*{\proofname}{Proof of {\rm(b)($\Rightarrow$)} and {\rm(c)($\Rightarrow$)}}
\begin{proof}
Suppose that $p_{\rm guess}>\eta_{1}$ and $\mathcal{M}$ is an optimal POVM for ME of $\mathcal{E}$ having $\mathsf{S}$ as the set of all non-null outcomes.
We also assume that $\{p_{i},\bm{u}_{i}\}_{i\in\Omega}$ is a set of primal variables expressing $\mathcal{M}$. Then, there exists a set of dual variables $\{r_{i},\bm{w}_{i}\}_{i\in\Omega}\cup\{s,{\bm v}\}$ such that \eqref{eq:mdgeo} holds for 
$\{p_{i},\bm{u}_{i}\}_{i\in\Omega}\cup\{r_{i},\bm{w}_{i}\}_{i\in\Omega}\cup\{s,{\bm v}\}$.
Since $p_{\rm guess}>\eta_{1}$, it follows from {\bf(D1)} of \eqref{eq:mdgeo} that $r_{i}$ is nonzero for all $i\in\Omega$. 
Furthermore, {\bf(C0)} of \eqref{eq:mdgeo}
implies $\|\bm{u}_{j}\|=\|\bm{w}_{j}\|=1$ and $\bm{u}_{j}\cdot\bm{w}_{j}=-1$ for each $j\in\mathsf{S}$ because both $p_{j}$ and $r_{j}$ are nonzero.
Thus, from {\bf(D2)} of \eqref{eq:mdgeo}, we have
\begin{equation}\label{eq:rjuj}
\begin{array}{ll}
r_{j}=\|\eta_{j}\bm{\nu}_{j}-{\bm v}\|,\ \bm{u}_{j}=\frac{\eta_{j}\bm{\nu}_{j}-{\bm v}}{\|\eta_{j}\bm{\nu}_{j}-{\bm v}\|}\ \forall j\in\mathsf{S}.
\end{array}
\end{equation}
The right equality in \eqref{eq:rjuj} is equivalent to Condition \eqref{eq:mem};
therefore, Proposition~\ref{pro:htp}(c)($\Rightarrow$) is proved. \\
%%%%%%%%%%%%%%%%%%%%%%%%%
\indent Now, we show $\mathcal{Z}_{\mathsf{S}}\neq\varnothing$ to prove Proposition~\ref{pro:htp}(b)($\Rightarrow$).
From {\bf(D0)} and {\bf(D2)} of \eqref{eq:mdgeo}, we can see that
\begin{equation}\label{eq:ririwi}
r_{i}\geqslant\|r_{i}{\bm w}_{i}\|=\|\eta_{i}\bm{\nu}_{i}-{\bm v}\|\ \forall i\in\Omega.
\end{equation}
Thus,
\begin{equation}\label{eq:vpvph}
\begin{array}{rcl}
\varphi_{i}({\bm v})&=&q_{i}+\|\eta_{i}\bm{\nu}_{i}-{\bm v}\|
=q_{i}+r_{i}=q_{j}+r_{j}\\
&\geqslant&q_{j}+\|\eta_{j}\bm{\nu}_{j}-{\bm v}\|=\varphi_{j}({\bm v})\ \forall i\in\mathsf{S},\, j\in\Omega,
\end{array}
\end{equation}
where the second equality is due to Eq.~\eqref{eq:rjuj} and
the third equality is from {\bf(D1)} of \eqref{eq:mdgeo}.
Since the inequality in \eqref{eq:vpvph} for $j\in\mathsf{S}$ becomes an equality by Eq.~\eqref{eq:rjuj},
$\mathcal{Z}_{\mathsf{S}}$ is a nonempty set with ${\bm v}$. Note that  ${\bm v}\in\mathcal{P}_{\mathsf{S}}$ because
\begin{equation}\label{eq:cietiv}
\sum_{i\in\mathsf{S}}c_{i}(\eta_{i}\bm{\nu}_{i})=
\sum_{i\in\mathsf{S}}c_{i}(\bm{v}-r_{i}\bm{w}_{i})
=\bm{v}-\frac{\sum_{i\in\mathsf{S}}(p_{i}\bm{w}_{i})}{\sum_{j\in\mathsf{S}}(p_{j}/r_{j})}
=\bm{v}+\frac{\sum_{i\in\mathsf{S}}(p_{i}\bm{u}_{i})}{\sum_{j\in\mathsf{S}}(p_{j}/r_{j})}={\bm v},
\end{equation}
where 
\begin{equation}
c_{i}=\frac{(p_{i}/r_{i})}{\sum_{j\in\mathsf{S}}(p_{j}/r_{j})}.
\end{equation}
The first equality of \eqref{eq:cietiv} is due to {\bf(D2)} of \eqref{eq:mdgeo}, 
the second equality is from $\sum_{i\in\mathsf{S}}c_{i}=1$, 
and the third and last equality follow from {\bf(C0)} and {\bf(P2)}, respectively.
Therefore, Proposition~\ref{pro:htp}(b)($\Rightarrow$) is proved.
\end{proof}
\renewcommand*{\proofname}{Proof}
%%%%%%%%%%%%%%%%%%%%%%%%%

%%%%%%%%%%%%%%%%%%%%%%%%%
%        Section        %
%%%%%%%%%%%%%%%%%%%%%%%%%
\section{Proofs of Theorems}\label{app:thmp}
In this section, we prove Theorems~\ref{thm:twom} and \ref{thm:prfthr} using $\mathcal{X}_{\mathsf{S}'}^{\mathsf{S}}$ and $\mathcal{Y}_{\mathsf{S}'}^{\mathsf{S}}$($\mathsf{S}'\subseteq\mathsf{S}\subseteq\Omega$) defined as
\begin{equation}\label{eq:xysd}
\begin{array}{rcl}
\mathcal{X}_{\mathsf{S}'}^{\mathsf{S}}:=\{\bm{v}\in\mathcal{P}_{\mathsf{S}'}&:&\varphi_{\bm{\omega}}(\bm{v})=\varphi_{\bm{\omega}'}(\bm{v})\ \forall(\bm{\omega},\bm{\omega}')\in\mathsf{S}'\times\mathsf{S}',\\
&&\varphi_{\bm{\omega}}(\bm{v})\geqslant\varphi_{\bm{\omega}'}(\bm{v})\ \forall(\bm{\omega},\bm{\omega}')\in\mathsf{S}'\times(\mathsf{S}-\mathsf{S}')\},\\[2mm]
\mathcal{Y}_{\mathsf{S}'}^{\mathsf{S}}:=\{{\bm v}\in\mathbb{R}^{3}\ &:&
\varphi_{\bm{\omega}}(\bm{v})\geqslant\varphi_{\bm{\omega}'}(\bm{v})\ \forall(\bm{\omega},\bm{\omega}')\in\mathsf{S}'\times(\Omega-\mathsf{S})\},
\end{array}
\end{equation}
where we denote 
\begin{equation}\label{eq:xyssx}
\mathcal{X}_{\mathsf{S}}:=\mathcal{X}_{\mathsf{S}}^{\mathsf{S}},\
\mathcal{Y}_{\mathsf{S}}:=\mathcal{Y}_{\mathsf{S}}^{\mathsf{S}}.
\end{equation}
Note that $\mathcal{X}_{\mathsf{S}'}^{\mathsf{S}}\cap\mathcal{Y}_{\mathsf{S}'}^{\mathsf{S}}=\mathcal{Z}_{\mathsf{S}'}$ for any $\mathsf{S}'\subseteq\mathsf{S}\subseteq\Omega$.
We also note that $\bigcup_{\mathsf{S}'\subseteq\mathsf{S}}
\mathcal{X}_{\mathsf{S}'}^{\mathsf{S}}$
is a single-element set for any $\mathsf{S}\subseteq\Omega$ in terms of ME of $\{\tilde{\eta}_{\bm{\omega}},\tilde{\rho}_{\bm{\omega}}\}_{\bm{\omega}\in\mathsf{S}}$; thus, at least one of $\mathcal{X}_{\mathsf{S}'}^{\mathsf{S}}$ and $\mathcal{X}_{\mathsf{S}''}^{\mathsf{S}}$ is empty 
for any $\mathsf{S}',\mathsf{S}''\subseteq\mathsf{S}$ with $\mathcal{P}_{\mathsf{S}'}\cap\mathcal{P}_{\mathsf{S}''}=\varnothing$. In particular, we use the following two lemmas 
when proving Theorem~\ref{thm:prfthr}.
%%%%%%%%%%%%%%%%%%%%%%%%%

%%%%%%%%%%%%%%%%%%%%%%%%%
\begin{lemma}\label{lem:xnep}
When $m=n_{1}=n_{2}=2$ and $p_{\rm guess}^{\rm post}<p_{\rm guess}^{\rm prior}$, if $\mathsf{S}$ is a proper subset of $\Omega$ with $\mathcal{X}_{\mathsf{S}}\neq\varnothing$
and $\mathsf{S}'$ is a proper subset of $\mathsf{S}$
satisfying $\mathsf{S}-\mathsf{S}'=\{\bm{\omega}\}$ and $\mathcal{X}_{\mathsf{S}'}=\{\bm{v}\}$, then
\begin{equation}\label{eq:vwpv}
\begin{array}{lclcl}
\varphi_{\bm{\omega}'}(\bm{v})<\varphi_{\bm{\omega}}(\bm{v})\ \forall\bm{\omega}'\in\mathsf{S}'.
\end{array}
\end{equation}
\end{lemma}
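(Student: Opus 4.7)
The plan is to argue by contradiction, leveraging the minimax reformulation that underlies the KKT analysis of Appendix~\ref{app:hnome}. For ME of any subensemble indexed by $\mathsf{T}\subseteq\Omega$, the parameterization \eqref{eq:povmgr} shows that a feasible dual $K=\tfrac{1}{2}(s\mathbbm{1}+\bm{x}\cdot\bm{\sigma})$ must satisfy $s\geqslant\max_{\bm{\eta}\in\mathsf{T}}\varphi_{\bm{\eta}}(\bm{x})$, and by strong duality $p_{\rm guess}^{(\mathsf{T})}=\min_{\bm{x}\in\mathbb{R}^{3}}\max_{\bm{\eta}\in\mathsf{T}}\varphi_{\bm{\eta}}(\bm{x})$. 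Applying this to $\mathsf{S}$ and $\mathsf{S}'$, and using that $\varphi_{\bm{\eta}}$ is constant on $\mathcal{X}_{\mathsf{T}}$, I set $\tau=\varphi_{\bm{\eta}}(\bm{u})=p_{\rm guess}^{(\mathsf{S})}$ for any $\bm{u}\in\mathcal{X}_{\mathsf{S}}$ and $\tau'=\varphi_{\bm{\omega}'}(\bm{v})=p_{\rm guess}^{(\mathsf{S}')}$ for $\bm{\omega}'\in\mathsf{S}'$. Since the inner max over a larger set is pointwise larger, $\tau\geqslant\tau'$.

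Now suppose, toward contradiction, that some $\bm{\omega}'\in\mathsf{S}'$ satisfies $\varphi_{\bm{\omega}'}(\bm{v})\geqslant\varphi_{\bm{\omega}}(\bm{v})$. Constancy of $\varphi_{\bm{\omega}'}$ on $\mathsf{S}'$ at $\bm{v}$ forces $\max_{\bm{\eta}\in\mathsf{S}}\varphi_{\bm{\eta}}(\bm{v})=\tau'$, so $\bm{v}$ also attains the outer minimum for the ME on $\mathsf{S}$ and $\tau\leqslant\tau'$, hence $\tau=\tau'$. By Proposition~\ref{pro:htp}(a) applied to the ME restricted to $\mathsf{S}$, the dual optimum is unique, so $\bm{v}=\bm{u}$.

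To conclude I derive a geometric contradiction from $\bm{u}\in\mathcal{P}_{\mathsf{S}}$ and $\bm{v}\in\mathcal{P}_{\mathsf{S}'}$. Under the hypotheses $m=n_{1}=n_{2}=2$ and $p_{\rm guess}^{\rm post}<p_{\rm guess}^{\rm prior}$, Corollary~\ref{cor:nscpe} gives $\hat{\bm{\mu}}_{21}\times\hat{\bm{\mu}}_{22}\neq\bm{0}$, so the parallelogram $\{\tilde{\bm{\mu}}_{\bm{\omega}}\}_{\bm{\omega}\in\Omega}$ of \eqref{eq:mn1n2d2} is non-degenerate; therefore for the proper subset $\mathsf{S}\subsetneq\Omega$, the family $\{\tilde{\bm{\mu}}_{\bm{\eta}}\}_{\bm{\eta}\in\mathsf{S}}$ is affinely independent, giving unique barycentric coordinates on $\mathcal{P}_{\mathsf{S}}$. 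A common point of $\mathcal{P}_{\mathsf{S}}$ and $\mathcal{P}_{\mathsf{S}'}$ would require the coefficient of $\tilde{\bm{\mu}}_{\bm{\omega}}$ to be simultaneously strictly positive (from the $\mathcal{P}_{\mathsf{S}}$ expansion) and zero (from the $\mathcal{P}_{\mathsf{S}'}$ expansion), which is impossible. The main obstacle is this last, purely geometric step: absent the non-degeneracy supplied by Corollary~\ref{cor:nscpe}, $\tilde{\bm{\mu}}_{\bm{\omega}}$ could lie in the affine hull of $\{\tilde{\bm{\mu}}_{\bm{\eta}}\}_{\bm{\eta}\in\mathsf{S}'}$ and the two relative interiors could overlap, collapsing the contradiction.
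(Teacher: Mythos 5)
Your proof is correct and follows essentially the same route as the paper's: both arguments reduce the claim to the fact that the optimal point of the ME restricted to $\mathsf{S}$ is unique (which you extract via the minimax dual and the uniqueness of $\bm{v}^{\star}$, and which the paper packages as the singleton property of $\bigcup_{\mathsf{S}'\subseteq\mathsf{S}}\mathcal{X}_{\mathsf{S}'}^{\mathsf{S}}$), and then derive the contradiction from the disjointness of $\mathcal{P}_{\mathsf{S}}$ and $\mathcal{P}_{\mathsf{S}'}$ for a non-degenerate parallelogram. Your final geometric step via unique barycentric coordinates is exactly the paper's disjointness observation, made explicit.
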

%%%%%%%%%%%%%%%%%%%%%%%%%
\begin{proof}
$\mathcal{P}_{\mathsf{S}}$ and $\mathcal{P}_{\mathsf{S}'}$ are disjoint for any $\mathsf{S'}\subsetneq\mathsf{S}\subsetneq\Omega$ because $\{\tilde{\bm{\mu}}_{\bm{\omega}}\}_{\bm{\omega}\in\Omega}$
forms a parallelogram with nonempty interior
when $m=n_{1}=n_{2}=2$ and $p_{\rm guess}^{\rm post}<p_{\rm guess}^{\rm prior}$.
Thus, if $\mathcal{X}_{\mathsf{S}}$ is nonempty for some $\mathsf{S}\subsetneq\Omega$, then 
$\mathcal{X}_{\mathsf{S}'}^{\mathsf{S}}$ is empty
for all $\mathsf{S}'\subsetneq\mathsf{S}$.
Now, we assume that $\mathsf{S}$ is a proper subset of $\Omega$ with $\mathcal{X}_{\mathsf{S}}\neq\varnothing$
and $\mathsf{S}'$ is a proper subset of $\mathsf{S}$
satisfying $\mathsf{S}-\mathsf{S}'=\{\bm{\omega}\}$ and $\mathcal{X}_{\mathsf{S}'}=\{\bm{v}\}$. If $\varphi_{\bm{\omega}'}(\bm{v})\geqslant\varphi_{\bm{\omega}}(\bm{v})$ for some $\bm{\omega}'\in\mathsf{S}'$, then $\mathcal{X}_{\mathsf{S}'}^{\mathsf{S}}$ becomes a nonempty set with $\bm{v}$ as a element.
Thus, $\mathcal{X}_{\mathsf{S}'}^{\mathsf{S}}=\varnothing$ implies \eqref{eq:vwpv}, which completes our proof.
\end{proof}
%%%%%%%%%%%%%%%%%%%%%%%%%

%%%%%%%%%%%%%%%%%%%%%%%%%
\begin{lemma}\label{lem:sufc}
When $m=n_{1}=n_{2}=2$ and $p_{\rm guess}^{\rm post}<p_{\rm guess}^{\rm prior}$, $\mathcal{X}_{\mathsf{S}}$ is nonempty for $\mathsf{S}\subsetneq\Omega$
if $\mathcal{X}_{\mathsf{S}'}\neq\varnothing$ and $\mathcal{X}_{\mathsf{S}'}^{\mathsf{S}}=\varnothing$ for all $\mathsf{S}'\subsetneq\mathsf{S}$ with $|\mathsf{S}-\mathsf{S}'|=1$.
\end{lemma}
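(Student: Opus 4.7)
The plan is to invoke the single-element characterisation of $\bigcup_{\mathsf{S}'\subseteq\mathsf{S}}\mathcal{X}_{\mathsf{S}'}^{\mathsf{S}}$ noted immediately after \eqref{eq:xyssx} (which is just Proposition~\ref{pro:htp}(a) applied to the restricted qubit ensemble $\{\tilde{\eta}_{\bm{\omega}},\tilde{\rho}_{\bm{\omega}}\}_{\bm{\omega}\in\mathsf{S}}$), and then rule out the unique element landing on a proper subset of $\mathsf{S}$. Concretely, this characterisation supplies a unique pair $(\bm{v}^{*},\mathsf{S}^{*})$ with $\mathsf{S}^{*}\subseteq\mathsf{S}$ and $\bm{v}^{*}\in\mathcal{X}_{\mathsf{S}^{*}}^{\mathsf{S}}$; since $\mathcal{X}_{\mathsf{S}}^{\mathsf{S}}=\mathcal{X}_{\mathsf{S}}$ by the convention in \eqref{eq:xyssx}, it is enough to prove $\mathsf{S}^{*}=\mathsf{S}$.

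The case $|\mathsf{S}|=1$ is trivial because $\mathcal{X}_{\mathsf{S}}=\mathcal{P}_{\mathsf{S}}\neq\varnothing$, so assume $|\mathsf{S}|\in\{2,3\}$ and suppose for contradiction that $\mathsf{S}^{*}\subsetneq\mathsf{S}$. When $|\mathsf{S}-\mathsf{S}^{*}|=1$ the lemma's hypothesis forces $\mathcal{X}_{\mathsf{S}^{*}}^{\mathsf{S}}=\varnothing$, which immediately contradicts $\bm{v}^{*}\in\mathcal{X}_{\mathsf{S}^{*}}^{\mathsf{S}}$. Observe also that $\mathsf{S}^{*}=\varnothing$ is impossible since $\mathcal{P}_{\varnothing}=\varnothing$. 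The only remaining possibility is $|\mathsf{S}|=3$ with $|\mathsf{S}^{*}|=1$, and I expect this to be the main obstacle: the hypothesis does not control $\mathsf{S}^{*}$ directly, so the contradiction has to be routed through an intermediate two-element subset that does lie at distance one from $\mathsf{S}$.

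To settle that remaining case, write $\mathsf{S}^{*}=\{\bm{\omega}\}$, so $\mathcal{P}_{\mathsf{S}^{*}}=\{\tilde{\bm{\mu}}_{\bm{\omega}}\}$ forces $\bm{v}^{*}=\tilde{\bm{\mu}}_{\bm{\omega}}$. The defining inequalities of $\mathcal{X}_{\mathsf{S}^{*}}^{\mathsf{S}}$ then read
\begin{equation}
\tilde{\eta}_{\bm{\omega}}-\tilde{\eta}_{\bm{\omega}''}\geqslant\|\tilde{\bm{\mu}}_{\bm{\omega}''}-\tilde{\bm{\mu}}_{\bm{\omega}}\|\qquad\forall\,\bm{\omega}''\in\mathsf{S}-\{\bm{\omega}\}.
\end{equation}
Now pick any such $\bm{\omega}''$ and put $\mathsf{S}''=\{\bm{\omega},\bm{\omega}''\}$; since $|\mathsf{S}-\mathsf{S}''|=1$, the hypothesis gives $\mathcal{X}_{\mathsf{S}''}\neq\varnothing$, i.e.\ the equation $\varphi_{\bm{\omega}}(\bm{v})=\varphi_{\bm{\omega}''}(\bm{v})$ admits a solution in the relative interior of the segment joining $\tilde{\bm{\mu}}_{\bm{\omega}}$ and $\tilde{\bm{\mu}}_{\bm{\omega}''}$. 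Parametrising $\bm{v}(t)=(1-t)\tilde{\bm{\mu}}_{\bm{\omega}}+t\tilde{\bm{\mu}}_{\bm{\omega}''}$ and using $\|\bm{v}(t)-\tilde{\bm{\mu}}_{\bm{\omega}}\|=t\|\tilde{\bm{\mu}}_{\bm{\omega}''}-\tilde{\bm{\mu}}_{\bm{\omega}}\|$, the equality collapses to a linear equation in $t$ whose unique root lies in $(0,1)$ if and only if
\begin{equation}
|\tilde{\eta}_{\bm{\omega}}-\tilde{\eta}_{\bm{\omega}''}|<\|\tilde{\bm{\mu}}_{\bm{\omega}''}-\tilde{\bm{\mu}}_{\bm{\omega}}\|,
\end{equation}
which directly contradicts the previous display. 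Therefore $\mathsf{S}^{*}=\mathsf{S}$, and $\mathcal{X}_{\mathsf{S}}\neq\varnothing$ as required.
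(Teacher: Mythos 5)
Your proof is correct and follows essentially the same route as the paper: both arguments invoke the fact that $\bigcup_{\mathsf{S}'\subseteq\mathsf{S}}\mathcal{X}_{\mathsf{S}'}^{\mathsf{S}}$ is a single-element set (with the $\mathcal{P}_{\mathsf{S}'}$ pairwise disjoint), dispose of the codimension-one faces directly from the hypothesis, and rule out the singleton faces of a three-element $\mathsf{S}$ by playing them off against the nonemptiness of $\mathcal{X}_{\{\bm{\omega},\bm{\omega}''\}}$. The only cosmetic difference is that where the paper argues abstractly via the containment $\mathcal{X}_{\mathsf{S}''}^{\mathsf{S}}\subseteq\mathcal{X}_{\mathsf{S}''}^{\mathsf{S}'}$ and a second application of the single-element property, you verify the same implication by an explicit parametrization of the segment, which is equally valid.
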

%%%%%%%%%%%%%%%%%%%%%%%%%
\begin{proof}
For each $\mathsf{S}''\subsetneq\mathsf{S}'\subsetneq\mathsf{S}$ with $|\mathsf{S}-\mathsf{S}'|=1$, $\mathcal{X}_{\mathsf{S}''}^{\mathsf{S}'}$ is empty because $\bigcup_{\mathsf{S}''\subseteq\mathsf{S}'}
\mathcal{X}_{\mathsf{S}''}^{\mathsf{S}'}$ is a single-element set and $\mathcal{X}_{\mathsf{S}'}$ is a nonempty set.
Note that $\mathcal{P}_{\mathsf{S}'}$ and $\mathcal{P}_{\mathsf{S}''}$ are disjoint for any $\mathsf{S}''\subsetneq\mathsf{S}'\subsetneq\mathsf{S}$ since $\{\tilde{\bm{\mu}}_{\bm{\omega}}\}_{\bm{\omega}\in\Omega}$
forms a parallelogram with nonempty interior
when $m=n_{1}=n_{2}=2$ and $p_{\rm guess}^{\rm post}<p_{\rm guess}^{\rm prior}$. For $\mathsf{S}''\subsetneq\mathsf{S}'\subsetneq\mathsf{S}$,
emptiness of $\mathcal{X}_{\mathsf{S}''}^{\mathsf{S}'}$ implies
emptiness of $\mathcal{X}_{\mathsf{S}''}^{\mathsf{S}}$
as $\mathcal{X}_{\mathsf{S}''}^{\mathsf{S}}\subseteq\mathcal{X}_{\mathsf{S}''}^{\mathsf{S}'}$; therefore, $\mathcal{X}_{\mathsf{S}'}^{\mathsf{S}}$ is empty for all $\mathsf{S}'\subsetneq\mathsf{S}$.
Since $\bigcup_{\mathsf{S}'\subseteq\mathsf{S}}
\mathcal{X}_{\mathsf{S}'}^{\mathsf{S}}$ is a single-element set,
$\mathcal{X}_{\mathsf{S}}$ is nonempty if 
$\mathcal{X}_{\mathsf{S}'}^{\mathsf{S}}$ is empty for all $\mathsf{S}'\subsetneq\mathsf{S}$.
Thus, $\mathcal{X}_{\mathsf{S}}$ is nonempty.
\end{proof}
%%%%%%%%%%%%%%%%%%%%%%%%%

%%%%%%%%%%%%%%%%%%%%%%%%%
%      Subsection       %
%%%%%%%%%%%%%%%%%%%%%%%%%
\subsection*{Proof of Theorem~\ref{thm:twom}}
\begin{proof}
Suppose that $m=n_{1}=n_{2}=2$ and $p_{\rm guess}^{\rm post}<p_{\rm guess}^{\rm prior}$.
We can easily verify that
\begin{equation}\label{eq:xvvv}
\begin{array}{lcl}
\mathcal{X}_{\{(1,1),(1,2)\}}=\{{\bm v}_{11}\},\ \ \varphi_{(1,1)}(\bm{v}_{11})+\varphi_{(1,2)}(\bm{v}_{11})=2\eta_{11}+\eta_{12}+\eta_{22}+\lambda_{22},\\[1.5mm]
\mathcal{X}_{\{(1,1),(2,1)\}}=\{{\bm v}_{12}\},\ \ \varphi_{(1,1)}(\bm{v}_{12})+\varphi_{(2,1)}(\bm{v}_{12})=2\eta_{12}+\eta_{11}+\eta_{21}+\lambda_{21},\\[1.5mm]
\mathcal{X}_{\{(2,1),(2,2)\}}=\{{\bm v}_{21}\},\ \ \varphi_{(2,1)}(\bm{v}_{21})+\varphi_{(2,2)}(\bm{v}_{21})=2\eta_{21}+\eta_{12}+\eta_{22}+\lambda_{22},\\[1.5mm]
\mathcal{X}_{\{(1,2),(2,2)\}}=\{{\bm v}_{22}\},\ \ \varphi_{(1,2)}(\bm{v}_{22})+\varphi_{(2,2)}(\bm{v}_{22})=2\eta_{22}+\eta_{11}+\eta_{21}+\lambda_{21},
\end{array}
\end{equation}
where
\begin{equation}
\begin{array}{ll}
{\bm v}_{11}
=\Big(\frac{\lambda_{22}+\epsilon_{22}}{2\lambda_{22}}\Big)\tilde{\bm{\mu}}_{(1,1)}+\Big(\frac{\lambda_{22}-\epsilon_{22}}{2\lambda_{22}}\Big)\tilde{\bm{\mu}}_{(1,2)},&
{\bm v}_{12}
=\Big(\frac{\lambda_{21}+\epsilon_{21}}{2\lambda_{21}}\Big)\tilde{\bm{\mu}}_{(1,1)}+\Big(\frac{\lambda_{21}-\epsilon_{21}}{2\lambda_{21}}\Big)\tilde{\bm{\mu}}_{(2,1)},\\[1.5mm]
{\bm v}_{21}
=\Big(\frac{\lambda_{22}+\epsilon_{22}}{2\lambda_{22}}\Big)\tilde{\bm{\mu}}_{(2,1)}+\Big(\frac{\lambda_{22}-\epsilon_{22}}{2\lambda_{22}}\Big)\tilde{\bm{\mu}}_{(2,2)},&
{\bm v}_{22}
=\Big(\frac{\lambda_{21}+\epsilon_{21}}{2\lambda_{21}}\Big)\tilde{\bm{\mu}}_{(1,2)}+\Big(\frac{\lambda_{21}-\epsilon_{21}}{2\lambda_{21}}\Big)\tilde{\bm{\mu}}_{(2,2)}.
\end{array}
\end{equation}
Since $\mathcal{X}_{\mathsf{S}}\cap\mathcal{Y}_{\mathsf{S}}=\mathcal{Z}_{\mathsf{S}}$ for all $\mathsf{S}\subseteq\Omega$, it follows from Eq.~\eqref{eq:xvvv} that
\begin{equation}
\begin{array}{lclcl}
\mathcal{Z}_{\{(1,1),(1,2)\}}\neq\varnothing&\ \Leftrightarrow\ &
{\bm v}_{11}\in\mathcal{Y}_{\{(1,1),(1,2)\}},\\[1mm]
\mathcal{Z}_{\{(1,1),(2,1)\}}\neq\varnothing&\ \Leftrightarrow\ &
{\bm v}_{12}\in\mathcal{Y}_{\{(1,1),(2,1)\}},\\[1mm]
\mathcal{Z}_{\{(2,1),(2,2)\}}\neq\varnothing&\ \Leftrightarrow\ &
{\bm v}_{21}\in\mathcal{Y}_{\{(2,1),(2,2)\}},\\[1mm]
\mathcal{Z}_{\{(1,2),(2,2)\}}\neq\varnothing&\ \Leftrightarrow\ &
{\bm v}_{22}\in\mathcal{Y}_{\{(1,2),(2,2)\}},
\end{array}
\end{equation}
and
\begin{equation}
p_{\rm guess}^{\rm post}=\left\{
\begin{array}{cc}
\eta_{11}+\frac{1}{2}(\eta_{12}+\eta_{22}+\lambda_{22}),&\mathcal{Z}_{\{(1,1),(1,2)\}}\neq\varnothing,\\[1mm]
\eta_{12}+\frac{1}{2}(\eta_{11}+\eta_{21}+\lambda_{21}),&\mathcal{Z}_{\{(1,1),(2,1)\}}\neq\varnothing,\\[1mm]
\eta_{21}+\frac{1}{2}(\eta_{12}+\eta_{22}+\lambda_{22}),&\mathcal{Z}_{\{(2,1),(2,2)\}}\neq\varnothing,\\[1mm]
\eta_{22}+\frac{1}{2}(\eta_{11}+\eta_{21}+\lambda_{21}),&\mathcal{Z}_{\{(1,2),(2,2)\}}\neq\varnothing.\\[1mm]
\end{array}
\right.
\end{equation}
%%%%%%%%%%%%%%%%%%%%%%%%%
\indent From definition of $\mathcal{Y}_{\mathsf{S}}$, we can see that
\begin{equation}
\begin{array}{lcr}
\bm{v}_{11}\in\mathcal{Y}_{\{(1,1),(1,2)\}} &\ \Leftrightarrow\ 
&\varphi_{(1,1)}({\bm v}_{11})\geqslant\varphi_{(2,1)}({\bm v}_{11}),\ \ 
\varphi_{(1,2)}({\bm v}_{11})\geqslant\varphi_{(2,2)}({\bm v}_{11}),\\[1.5mm]
\bm{v}_{12}\in\mathcal{Y}_{\{(1,1),(2,1)\}} &\ \Leftrightarrow\ 
&\varphi_{(1,1)}({\bm v}_{12})\geqslant\varphi_{(1,2)}({\bm v}_{12}),\ \ 
\varphi_{(2,1)}({\bm v}_{12})\geqslant\varphi_{(2,2)}({\bm v}_{12}),\\[1.5mm]
\bm{v}_{21}\in\mathcal{Y}_{\{(2,1),(2,2)\}} &\ \Leftrightarrow\ 
&\varphi_{(1,1)}({\bm v}_{21})\leqslant\varphi_{(2,1)}({\bm v}_{21}),\ \ 
\varphi_{(1,2)}({\bm v}_{21})\leqslant\varphi_{(2,2)}({\bm v}_{21}),\\[1.5mm]
\bm{v}_{22}\in\mathcal{Y}_{\{(1,2),(2,2)\}} &\ \Leftrightarrow\ 
&\varphi_{(1,1)}({\bm v}_{22})\leqslant\varphi_{(1,2)}({\bm v}_{22}),\ \ 
\varphi_{(2,1)}({\bm v}_{22})\leqslant\varphi_{(2,2)}({\bm v}_{22}).\\[1.5mm]
\end{array}
\end{equation}
It is straightforward to verify that
\begin{equation}\label{eq:vvvvc2}
\begin{array}{rcl}
\varphi_{(1,1)}({\bm v}_{11})-\varphi_{(2,1)}({\bm v}_{11})
&=&\frac{(\epsilon_{21}+\lambda_{21}\cos\Phi)(\lambda_{22}-\epsilon_{22})-(\lambda_{21}^{2}-\epsilon_{21}^{2})}{\epsilon_{21}+\|\tilde{\bm{\mu}}_{(1,1)}-\bm{v}_{11}\|+\|\tilde{\bm{\mu}}_{(1,2)}-\bm{v}_{11}\|},\\[1.5mm]
\varphi_{(1,2)}({\bm v}_{11})-\varphi_{(2,2)}({\bm v}_{11})
&=&\frac{(\epsilon_{21}-\lambda_{21}\cos\Phi)(\lambda_{22}+\epsilon_{22})-(\lambda_{21}^{2}-\epsilon_{21}^{2})}{\epsilon_{21}+\|\tilde{\bm{\mu}}_{(1,2)}-\bm{v}_{11}\|+\|\tilde{\bm{\mu}}_{(2,2)}-\bm{v}_{11}\|},\\[3mm]
\varphi_{(1,1)}({\bm v}_{12})-\varphi_{(1,2)}({\bm v}_{12})
&=&\frac{(\epsilon_{22}+\lambda_{22}\cos\Phi)(\lambda_{21}-\epsilon_{21})-(\lambda_{22}^{2}-\epsilon_{22}^{2})}{\epsilon_{22}+\|\tilde{\bm{\mu}}_{(1,1)}-\bm{v}_{12}\|+\|\tilde{\bm{\mu}}_{(1,2)}-\bm{v}_{12}\|},\\[1.5mm]
\varphi_{(2,1)}({\bm v}_{12})-\varphi_{(2,2)}({\bm v}_{12})
&=&\frac{(\epsilon_{22}-\lambda_{22}\cos\Phi)(\lambda_{21}+\epsilon_{21})-(\lambda_{22}^{2}-\epsilon_{22}^{2})}{\epsilon_{22}+\|\tilde{\bm{\mu}}_{(2,1)}-\bm{v}_{12}\|+\|\tilde{\bm{\mu}}_{(2,2)}-\bm{v}_{12}\|},\\[3mm]
\varphi_{(1,1)}({\bm v}_{21})-\varphi_{(2,1)}({\bm v}_{21})
&=&\frac{(\epsilon_{21}+\lambda_{21}\cos\Phi)(\lambda_{22}-\epsilon_{22})+(\lambda_{21}^{2}-\epsilon_{21}^{2})}
{\|\tilde{\bm{\mu}}_{(2,1)}-\bm{v}_{21}\|+\|\tilde{\bm{\mu}}_{(1,1)}-\bm{v}_{21}\|-\epsilon_{21}},\\[1.5mm]
\varphi_{(1,2)}({\bm v}_{21})-\varphi_{(2,2)}({\bm v}_{21})
&=&\frac{(\epsilon_{21}-\lambda_{21}\cos\Phi)(\lambda_{22}+\epsilon_{22})+(\lambda_{21}^{2}-\epsilon_{21}^{2})}
{\|\tilde{\bm{\mu}}_{(2,2)}-\bm{v}_{21}\|+\|\tilde{\bm{\mu}}_{(1,2)}-\bm{v}_{21}\|-\epsilon_{21}},\\[3mm]
\varphi_{(1,1)}({\bm v}_{22})-\varphi_{(1,2)}({\bm v}_{22})
&=&\frac{(\epsilon_{22}+\lambda_{22}\cos\Phi)(\lambda_{21}-\epsilon_{21})+(\lambda_{22}^{2}-\epsilon_{22}^{2})}
{\|\tilde{\bm{\mu}}_{(1,2)}-\bm{v}_{22}\|+\|\tilde{\bm{\mu}}_{(1,1)}-\bm{v}_{22}\|-\epsilon_{22}},\\[1.5mm]
\varphi_{(2,1)}({\bm v}_{22})-\varphi_{(2,2)}({\bm v}_{22})
&=&\frac{(\epsilon_{22}-\lambda_{22}\cos\Phi)(\lambda_{21}+\epsilon_{21})+(\lambda_{22}^{2}-\epsilon_{22}^{2})}
{\|\tilde{\bm{\mu}}_{(2,2)}-\bm{v}_{22}\|+\|\tilde{\bm{\mu}}_{(2,1)}-\bm{v}_{22}\|-\epsilon_{22}}.
\end{array}
\end{equation}
Therefore,
\begin{subequations}
\begin{eqnarray}
\mathcal{Z}_{\{(1,1),(1,2)\}}\neq\varnothing &\Leftrightarrow& 
\mbox{$\frac{\epsilon_{21}+\lambda_{21}\cos\Phi}{\lambda_{21}+\epsilon_{21}}\geqslant\frac{\lambda_{21}-\epsilon_{21}}{\lambda_{22}-\epsilon_{22}},\quad
\frac{\epsilon_{21}-\lambda_{21}\cos\Phi}{\lambda_{21}-\epsilon_{21}}\geqslant\frac{\lambda_{21}+\epsilon_{21}}{\lambda_{22}+\epsilon_{22}},$}\\[1mm]
\mathcal{Z}_{\{(1,1),(2,1)\}}\neq\varnothing &\Leftrightarrow& 
\mbox{$\frac{\epsilon_{22}+\lambda_{22}\cos\Phi}{\lambda_{22}+\epsilon_{22}}\geqslant\frac{\lambda_{22}-\epsilon_{22}}{\lambda_{21}-\epsilon_{21}},\quad 
\frac{\epsilon_{22}-\lambda_{22}\cos\Phi}{\lambda_{22}-\epsilon_{22}}\geqslant\frac{\lambda_{22}+\epsilon_{22}}{\lambda_{21}+\epsilon_{21}}$},\\[1mm]
\mathcal{Z}_{\{(2,1),(2,2)\}}\neq\varnothing &\Leftrightarrow& 
\mbox{$\frac{\epsilon_{21}+\lambda_{21}\cos\Phi}{\lambda_{21}+\epsilon_{21}}\leqslant-\frac{\lambda_{21}-\epsilon_{21}}{\lambda_{22}-\epsilon_{22}},\,
\frac{\epsilon_{21}-\lambda_{21}\cos\Phi}{\lambda_{21}-\epsilon_{21}}\leqslant-\frac{\lambda_{21}+\epsilon_{21}}{\lambda_{22}+\epsilon_{22}},$}\label{eq:z2122}\\[1mm]
\mathcal{Z}_{\{(1,2),(2,2)\}}\neq\varnothing &\Leftrightarrow&
\mbox{$\frac{\epsilon_{22}+\lambda_{22}\cos\Phi}{\lambda_{22}+\epsilon_{22}}\leqslant-\frac{\lambda_{22}-\epsilon_{22}}{\lambda_{21}-\epsilon_{21}},\,
\frac{\epsilon_{22}-\lambda_{22}\cos\Phi}{\lambda_{22}-\epsilon_{22}}\leqslant-\frac{\lambda_{22}+\epsilon_{22}}{\lambda_{21}+\epsilon_{21}}.$}\label{eq:z1222}
\end{eqnarray}
\end{subequations}
Moreover, at least one of the two left-hand sides of Inequalities \eqref{eq:z2122} is positive, but the two right-hand sides are both negative; thus, $\mathcal{Z}_{\{(2,1),(2,2)\}}$ is always empty. For similar reasons, $\mathcal{Z}_{\{(1,2),(2,2)\}}$ is always empty.
\end{proof}
%%%%%%%%%%%%%%%%%%%%%%%%%

%%%%%%%%%%%%%%%%%%%%%%%%%
%      Subsection       %
%%%%%%%%%%%%%%%%%%%%%%%%%
\subsection*{Proof of Theorem~\ref{thm:prfthr}}
\renewcommand*{\proofname}{Proof of {\rm (a)}}
\begin{proof}
Suppose $m=n_{1}=n_{2}=2$ and $p_{\rm guess}^{\rm post}<p_{\rm guess}^{\rm prior}$.
Then, we can easily verify that
\begin{equation}\label{eq:cvp}
\begin{array}{lclcl}
\mathcal{X}_{\{(1,1),(2,2)\}}\neq\varnothing&\Rightarrow&
\mathcal{X}_{\{(1,1),(2,2)\}}=\{{\bm v}_{+}\},\ \ \varphi_{(1,1)}(\bm{v}_{+})+\varphi_{(2,2)}(\bm{v}_{+})=1+\gamma_{+},
\\[1mm]
\mathcal{X}_{\{(1,2),(2,1)\}}\neq\varnothing&\Rightarrow&
\mathcal{X}_{\{(1,2),(2,1)\}}=\{{\bm v}_{-}\},\ \ \varphi_{(1,2)}(\bm{v}_{-})+\varphi_{(2,1)}(\bm{v}_{-})=1+\gamma_{-},
\end{array}
\end{equation}
where
\begin{equation}
\begin{array}{rclcrcl}
{\bm v}_{+}&=&\Big(\frac{\gamma_{+}+\epsilon_{21}+\epsilon_{22}}{2\gamma_{+}}\Big)\tilde{\bm{\mu}}_{(1,1)}+\Big(\frac{\gamma_{+}-\epsilon_{21}-\epsilon_{22}}{2\gamma_{+}}\Big)\tilde{\bm{\mu}}_{(2,2)},\\[1.5mm]
{\bm v}_{-}&=&\Big(\frac{\gamma_{-}+\epsilon_{21}-\epsilon_{22}}{2\gamma_{-}}\Big)\tilde{\bm{\mu}}_{(1,2)}+\Big(\frac{\gamma_{-}-\epsilon_{21}+\epsilon_{22}}{2\gamma_{-}}\Big)\tilde{\bm{\mu}}_{(2,1)}.
\end{array}
\end{equation}
Since $\mathcal{X}_{\mathsf{S}}\cap\mathcal{Y}_{\mathsf{S}}=\mathcal{Z}_{\mathsf{S}}$ for all $\mathsf{S}\subseteq\Omega$, Eq.~\eqref{eq:cvp} implies
\begin{equation}\label{eq:zvy}
\begin{array}{lclcl}
\mathcal{Z}_{\{(1,1),(2,2)\}}\neq\varnothing&\Rightarrow&
{\bm v}_{+}\in\mathcal{Y}_{\{(1,1),(2,2)\}},\\[1mm]
\mathcal{Z}_{\{(1,2),(2,1)\}}\neq\varnothing&\Rightarrow&
{\bm v}_{-}\in\mathcal{Y}_{\{(1,2),(2,1)\}},
\end{array}
\end{equation}
as well as the validity of Eq.~\eqref{eq:mspth}.\\
%%%%%%%%%%%%%%%%%%%%%%%%%
\indent From definition of $\mathcal{Y}_{\mathsf{S}}$, we have
\begin{equation}\label{eq:vyv}
\begin{array}{rcl}
\bm{v}_{+}\in\mathcal{Y}_{\{(1,1),(2,2)\}} &\Leftrightarrow 
&\varphi_{(1,1)}({\bm v}_{+})\geqslant\varphi_{(1,2)}({\bm v}_{+}),\ 
\varphi_{(1,1)}({\bm v}_{+})\geqslant\varphi_{(2,1)}({\bm v}_{+}),\\[1.5mm]
\bm{v}_{-}\in\mathcal{Y}_{\{(1,2),(2,1)\}} &\Leftrightarrow 
&\varphi_{(1,1)}({\bm v}_{-})\leqslant\varphi_{(1,2)}({\bm v}_{-}),\ 
\varphi_{(2,2)}({\bm v}_{-})\leqslant\varphi_{(1,2)}({\bm v}_{-}).
\end{array}
\end{equation}
It is straightforward to show that
\begin{equation}\label{eq:vvrab}
\begin{array}{rcl}
\varphi_{(1,1)}({\bm v}_{+})-\varphi_{(1,2)}({\bm v}_{+})
&=&\frac{\gamma_{+}\alpha-\beta_{+}}
{\gamma_{+}(\epsilon_{21}+\|\tilde{\bm{\mu}}_{(1,1)}-\bm{v}_{+}\|+\|\tilde{\bm{\mu}}_{(1,2)}-\bm{v}_{+}\|)},\\[1.5mm]
\varphi_{(1,1)}({\bm v}_{+})-\varphi_{(2,1)}({\bm v}_{+})
&=&\frac{\gamma_{+}\alpha+\beta_{+}}
{\gamma_{+}(\epsilon_{21}+\|\tilde{\bm{\mu}}_{(1,1)}-\bm{v}_{+}\|+\|\tilde{\bm{\mu}}_{(2,1)}-\bm{v}_{+}\|)},\\[1.5mm]
\varphi_{(1,1)}({\bm v}_{-})-\varphi_{(1,2)}({\bm v}_{-})
&=&\frac{\gamma_{-}\alpha+\beta_{-}}
{\gamma_{-}(\|\tilde{\bm{\mu}}_{(1,2)}-\bm{v}_{-}\|+\|\tilde{\bm{\mu}}_{(1,1)}-\bm{v}_{-}\|-\epsilon_{22})},\\[1.5mm]
\varphi_{(1,1)}({\bm v}_{-})-\varphi_{(2,1)}({\bm v}_{-})
&=&\frac{\gamma_{-}\alpha-\beta_{-}}
{\gamma_{-}(\epsilon_{21}+\|\tilde{\bm{\mu}}_{(1,2)}-\bm{v}_{-}\|+\|\tilde{\bm{\mu}}_{(2,2)}-\bm{v}_{-}\|)}.
\end{array}
\end{equation}
By applying Eq.~\eqref{eq:vvrab} to \eqref{eq:vyv} and \eqref{eq:vyv} to \eqref{eq:zvy},
we can show that \eqref{eq:nscthrpm}($\Rightarrow$) is true.\\
%%%%%%%%%%%%%%%%%%%%%%%%%
\indent Now, we prove that \eqref{eq:nscthrpm}($\Leftarrow$) is true.
Because non-negativity and non-positivity of $\alpha$ imply $\gamma_{+}^{2}\geqslant(\epsilon_{21}+\epsilon_{22})^{2}$ and $\gamma_{-}^{2}\geqslant(\epsilon_{21}-\epsilon_{22})^{2}$, respectively,
\begin{equation}\label{eq:axvpm}
\begin{array}{lclcl}
\alpha\geqslant0&\Rightarrow&\mathcal{X}_{\{(1,1),(2,2)\}}=\{\bm{v}_{+}\},\\[1.5mm]
\alpha\leqslant0&\Rightarrow&\mathcal{X}_{\{(1,2),(2,1)\}}=\{\bm{v}_{-}\}.
\end{array}
\end{equation}
Thus, both $\mathcal{X}_{\{(1,1),(2,2)\}}$ and $\mathcal{Y}_{\{(1,1),(2,2)\}}$ have $\bm{v}_{+}$ by Eq.~\eqref{eq:vvrab} and \eqref{eq:axvpm} if $\alpha\geqslant|\beta_{+}|/\gamma_{+}$.
Similarly, both $\mathcal{X}_{\{(1,2),(2,1)\}}$ and $\mathcal{Y}_{\{(1,2),(2,1)\}}$ have $\bm{v}_{-}$
if $\alpha\leqslant-|\beta_{-}|/\gamma_{-}$. 
Therefore, \eqref{eq:nscthrpm}($\Leftarrow$) is also proved.
\end{proof}
\renewcommand*{\proofname}{Proof}
%%%%%%%%%%%%%%%%%%%%%%%%%
\renewcommand*{\proofname}{Proof of {\rm(b)}}
\begin{proof}
Assume $m=n_{1}=n_{2}=2$ and $p_{\rm guess}^{\rm post}<p_{\rm guess}^{\rm prior}$.
We can see from Lemma~\ref{lem:xnep} and Eq.~\eqref{eq:xvvv} that
\begin{equation}\label{eq:xvpv}
\begin{array}{ccc}
\mathcal{X}_{\{(1,1),(1,2),(2,1)\}}\neq\varnothing&\Rightarrow&
\varphi_{(1,1)}({\bm v}_{11})<\varphi_{(2,1)}({\bm v}_{11}),\
\varphi_{(1,1)}({\bm v}_{12})<\varphi_{(1,2)}({\bm v}_{12}),
\\[1.5mm]
\mathcal{X}_{\{(1,1),(1,2),(2,2)\}}\neq\varnothing&\Rightarrow&
\varphi_{(1,2)}({\bm v}_{11})<\varphi_{(2,2)}({\bm v}_{11}),\ 
\varphi_{(1,2)}({\bm v}_{22})<\varphi_{(1,1)}({\bm v}_{22}),
\\[1.5mm]
\mathcal{X}_{\{(1,1),(2,1),(2,2)\}}\neq\varnothing&\Rightarrow&
\varphi_{(2,1)}({\bm v}_{12})<\varphi_{(2,2)}({\bm v}_{12}),\ 
\varphi_{(2,1)}({\bm v}_{21})<\varphi_{(1,1)}({\bm v}_{21}),
\\[1.5mm]
\mathcal{X}_{\{(1,2),(2,1),(2,2)\}}\neq\varnothing&\Rightarrow&
\varphi_{(2,2)}({\bm v}_{21})<\varphi_{(1,2)}({\bm v}_{21}),\
\varphi_{(2,2)}({\bm v}_{22})<\varphi_{(2,1)}({\bm v}_{22}).
\end{array}
\end{equation}
By applying Eq.~\eqref{eq:vvvvc2} to \eqref{eq:xvpv},
we have
\begin{equation}\label{eq:ww}
\begin{array}{ccl}
\mathcal{X}_{\{(1,1),(1,2),(2,1)\}}\neq\varnothing&\Rightarrow&
\frac{\epsilon_{21}+\lambda_{21}\cos\Phi}{\lambda_{21}+\epsilon_{21}}<\frac{\lambda_{21}-\epsilon_{21}}{\lambda_{22}-\epsilon_{22}},\quad
\frac{\epsilon_{22}+\lambda_{22}\cos\Phi}{\lambda_{22}+\epsilon_{22}}<\frac{\lambda_{22}-\epsilon_{22}}{\lambda_{21}-\epsilon_{21}},
\\[1mm]
\mathcal{X}_{\{(1,1),(1,2),(2,2)\}}\neq\varnothing&\Rightarrow&
\frac{\epsilon_{21}-\lambda_{21}\cos\Phi}{\lambda_{21}-\epsilon_{21}}<\frac{\lambda_{21}+\epsilon_{21}}{\lambda_{22}+\epsilon_{22}},\quad
\frac{\epsilon_{22}+\lambda_{22}\cos\Phi}{\lambda_{22}+\epsilon_{22}}>-\frac{\lambda_{22}-\epsilon_{22}}{\lambda_{21}-\epsilon_{21}},
\\[1mm]
\mathcal{X}_{\{(1,1),(2,1),(2,2)\}}\neq\varnothing&\Rightarrow&
\frac{\epsilon_{22}-\lambda_{22}\cos\Phi}{\lambda_{22}-\epsilon_{22}}<\frac{\lambda_{22}+\epsilon_{22}}{\lambda_{21}+\epsilon_{21}},\quad
\frac{\epsilon_{21}+\lambda_{21}\cos\Phi}{\lambda_{21}+\epsilon_{21}}>-\frac{\lambda_{21}-\epsilon_{21}}{\lambda_{22}-\epsilon_{22}},
\\[1mm]
\mathcal{X}_{\{(1,2),(2,1),(2,2)\}}\neq\varnothing&\Rightarrow&
\frac{\epsilon_{21}-\lambda_{21}\cos\Phi}{\lambda_{21}-\epsilon_{21}}>-\frac{\lambda_{21}+\epsilon_{21}}{\lambda_{22}+\epsilon_{22}},\
\frac{\epsilon_{22}-\lambda_{22}\cos\Phi}{\lambda_{22}-\epsilon_{22}}>-\frac{\lambda_{22}+\epsilon_{22}}{\lambda_{21}+\epsilon_{21}}.
\end{array}
\end{equation}
%%%%%%%%%%%%%%%%%%%%%%%%%
\indent Since the inequalities given in \eqref{eq:ww} can also be expressed as
\begin{equation}
\begin{array}{lcl}
\frac{\epsilon_{22}\mp\lambda_{22}\cos\Phi}{\lambda_{22}\mp\epsilon_{22}}<\frac{\lambda_{22}\pm\epsilon_{22}}{\lambda_{21}\pm\epsilon_{21}}
&\Leftrightarrow&\Theta_{\pm}>\Xi_{\pm},
\\[1mm]
\frac{\epsilon_{21}\mp\lambda_{21}\cos\Phi}{\lambda_{21}\mp\epsilon_{21}}<\frac{\lambda_{21}\pm\epsilon_{21}}{\lambda_{22}\pm\epsilon_{22}}
&\Leftrightarrow&\mp\Theta_{\pm}<\Phi\mp\Xi_{\pm}<\pi\mp\pi\pm\Theta_{\pm},
\\[1mm]
\frac{\epsilon_{22}\pm\lambda_{22}\cos\Phi}{\lambda_{22}\pm\epsilon_{22}}>-\frac{\lambda_{22}\mp\epsilon_{22}}{\lambda_{21}\mp\epsilon_{21}}
&\Leftrightarrow&\Theta_{\pm}<\pi-\Xi_{\pm},
\\[1mm]
\frac{\epsilon_{21}\pm\lambda_{21}\cos\Phi}{\lambda_{21}\pm\epsilon_{21}}>-\frac{\lambda_{21}\mp\epsilon_{21}}{\lambda_{22}\mp\epsilon_{22}}
&\Leftrightarrow&\mp\pi\pm\Theta_{\pm}<\Phi\mp\Xi_{\pm}<\pi\mp\Theta_{\pm},
\end{array}
\end{equation}
a lengthy calculation show that
\begin{equation}\label{eq:vxuv}
\begin{array}{ccccc}
\bm{v}\in\mathcal{X}_{\{(1,1),(1,2),(2,1)\}}&\Rightarrow&
\left\{
\begin{array}{ccc}
\angle\bm{v}\tilde{\bm{\mu}}_{(1,1)}\tilde{\bm{\mu}}_{(2,1)}&=&\Theta_{-}-\Xi_{-},\\[1.5mm]
\|\tilde{\bm{\mu}}_{(1,1)}-\bm{v}\|
&=&\frac{\lambda_{21}^{2}-\epsilon_{21}^{2}}{2[\lambda_{21}\cos(\Theta_{-}-\Xi_{-})+\epsilon_{21}]}
,\\[1.5mm]
\varphi_{(1,1)}({\bm v})-\varphi_{(2,2)}({\bm v})&=&\frac{-2\alpha}{\|\tilde{\mu}_{(1,1)}-\bm{v}\|+\|\tilde{\mu}_{(2,2)}-\bm{v}\|+\epsilon_{21}+\epsilon_{22}},
\end{array}
\right.\\[8mm]
\bm{v}\in\mathcal{X}_{\{(1,1),(1,2),(2,2)\}}&\Rightarrow&
\left\{
\begin{array}{ccc}
\angle\bm{v}\tilde{\bm{\mu}}_{(1,2)}\tilde{\bm{\mu}}_{(2,2)}&=&\pi-\Theta_{+}-\Xi_{+},\\[1.5mm]
\|\tilde{\bm{\mu}}_{(1,2)}-\bm{v}\|
&=&\frac{\lambda_{21}^{2}-\epsilon_{21}^{2}}{2[\lambda_{21}\cos(\pi-\Theta_{+}-\Xi_{+})+\epsilon_{21}]}
,\\[1.5mm]
\varphi_{(1,2)}({\bm v})-\varphi_{(2,1)}({\bm v})&=&\frac{+2\alpha}{\|\tilde{\bm{\mu}}_{(1,2)}-\bm{v}\|+\|\tilde{\bm{\mu}}_{(2,1)}-\bm{v}\|+\epsilon_{21}-\epsilon_{22}},
\end{array}
\right.\\[8mm]
\bm{v}\in\mathcal{X}_{\{(1,1),(2,1),(2,2)\}}&\Rightarrow&
\left\{
\begin{array}{ccc}
\angle\bm{v}\tilde{\bm{\mu}}_{(2,1)}\tilde{\bm{\mu}}_{(1,1)}&=&\Theta_{+}-\Xi_{+},\\[1.5mm]
\|\tilde{\bm{\mu}}_{(2,1)}-\bm{v}\|
&=&\frac{\lambda_{21}^{2}-\epsilon_{21}^{2}}{2[\lambda_{21}\cos(\Theta_{+}-\Xi_{+})-\epsilon_{21}]}
,\\[1.5mm]
\varphi_{(2,1)}({\bm v})-\varphi_{(1,2)}({\bm v})&=&\frac{+2\alpha}{\|\tilde{\bm{\mu}}_{(2,1)}-\bm{v}\|+\|\tilde{\bm{\mu}}_{(1,2)}-\bm{v}\|-\epsilon_{21}+\epsilon_{22}},
\end{array}
\right.\\[8mm]
\bm{v}\in\mathcal{X}_{\{(1,2),(2,1),(2,2)\}}&\Rightarrow&
\left\{
\begin{array}{ccc}
\angle\bm{v}\tilde{\bm{\mu}}_{(2,2)}\tilde{\bm{\mu}}_{(1,2)}&=&\pi-\Theta_{-}-\Xi_{-},\\[1.5mm]
\|\tilde{\bm{\mu}}_{(2,2)}-\bm{v}\|
&=&\frac{\lambda_{21}^{2}-\epsilon_{21}^{2}}{2[\lambda_{21}\cos(\pi-\Theta_{-}-\Xi_{-})-\epsilon_{21}]}
,\\[1.5mm]
\varphi_{(2,2)}({\bm v})-\varphi_{(1,1)}({\bm v})&=&
\frac{-2\alpha}{\|\tilde{\bm{\mu}}_{(2,2)}-\bm{v}\|+\|\tilde{\bm{\mu}}_{(1,1)}-\bm{v}\|
-\epsilon_{21}-\epsilon_{22}}.
\end{array}
\right.
\end{array}
\end{equation}
Here, $\angle\bm{v}\tilde{\bm{\mu}}_{\bm{\omega}}\tilde{\bm{\mu}}_{\bm{\omega}'}$
is the internal angle between two line segments formed by $\{\bm{v},\tilde{\bm{\mu}}_{\bm{\omega}}\}$ and $\{\tilde{\bm{\mu}}_{\bm{\omega}},\tilde{\bm{\mu}}_{\bm{\omega}'}\}$, respectively.\\
%%%%%%%%%%%%%%%%%%%%%%%%%
\indent As $\mathcal{X}_{\mathsf{S}}\cap\mathcal{Y}_{\mathsf{S}}=\mathcal{Z}_{\mathsf{S}}$ for all $\mathsf{S}\subseteq\Omega$,
Eq.~\eqref{eq:mspth1} is true from Lemma~\ref{lem:optmea} and Eq.~\eqref{eq:vxuv}. 
Moreover, Eq. \eqref{eq:vxuv} also implies
\begin{equation}\label{eq:zza0}
\begin{array}{ccc}
\mathcal{Z}_{\{(1,1),(1,2),(2,1)\}}\neq\varnothing
&\Leftrightarrow&
\mathcal{X}_{\{(1,1),(1,2),(2,1)\}}\neq\varnothing,\
\alpha\leqslant0,
\\[1mm]
\mathcal{Z}_{\{(1,1),(1,2),(2,2)\}}\neq\varnothing
&\Leftrightarrow&
\mathcal{X}_{\{(1,1),(1,2),(2,2)\}}\neq\varnothing,\ 
\alpha\geqslant0,
\\[1mm]
\mathcal{Z}_{\{(1,1),(2,1),(2,2)\}}\neq\varnothing
&\Leftrightarrow&
\mathcal{X}_{\{(1,1),(2,1),(2,2)\}}\neq\varnothing,\ 
\alpha\geqslant0,
\\[1mm]
\mathcal{Z}_{\{(1,2),(2,1),(2,2)\}}\neq\varnothing
&\Leftrightarrow&
\mathcal{X}_{\{(1,2),(2,1),(2,2)\}}\neq\varnothing,\ 
\alpha\leqslant0,
\end{array}
\end{equation}
Now, \eqref{eq:axvpm} and \eqref{eq:zza0} lead us to
\begin{equation}\label{eq:zzxv}
\begin{array}{ccc}
\mathcal{Z}_{\{(1,1),(1,2),(2,1)\}}\neq\varnothing
\ \mbox{or}\ 
\mathcal{Z}_{\{(1,2),(2,1),(2,2)\}}\neq\varnothing
&\Rightarrow&
\mathcal{X}_{\{(1,2),(2,1)\}}=\{\bm{v}_{-}\},
\\[1mm]
\mathcal{Z}_{\{(1,1),(1,2),(2,2)\}}\neq\varnothing
\ \mbox{or}\ 
\mathcal{Z}_{\{(1,1),(2,1),(2,2)\}}\neq\varnothing
&\Rightarrow&
\mathcal{X}_{\{(1,2),(2,1)\}}=\{\bm{v}_{+}\}.
\end{array}
\end{equation}
Thus, we can show from Eq.~\eqref{eq:vvrab}, \eqref{eq:zzxv}, and Lemma~\ref{lem:xnep} that
\begin{equation}\label{eq:zepa}
\begin{array}{rclcccc}
\mathcal{Z}_{\{(1,1),(1,2),(2,1)\}}\neq\varnothing&\Rightarrow&
\alpha>-\beta_{-}/\gamma_{-},
\\[1mm]
\mathcal{Z}_{\{(1,1),(1,2),(2,2)\}}\neq\varnothing&\Rightarrow&
\alpha<-\beta_{+}/\gamma_{+},\\[1mm]
\mathcal{Z}_{\{(1,1),(2,1),(2,2)\}}\neq\varnothing&\Rightarrow&
\alpha<\beta_{+}/\gamma_{+},
\\[1mm]
\mathcal{Z}_{\{(1,2),(2,1),(2,2)\}}\neq\varnothing&\Rightarrow&
\alpha>\beta_{-}/\gamma_{-}.
\end{array}
\end{equation}
Up to this point, \eqref{eq:nscthrpm1}($\Rightarrow$) 
for $\mathsf{S}\neq\{(1,2),(2,1),(2,2)\}$ is proved by
\eqref{eq:ww}, \eqref{eq:zza0}, and \eqref{eq:zepa}.\\
%%%%%%%%%%%%%%%%%%%%%%%%%
\indent Since $\beta_{-}/\gamma_{-}<\alpha\leqslant0$ 
is a necessary condition for $\mathcal{Z}_{\{(1,2),(2,1),(2,2)\}}\neq\varnothing$ from \eqref{eq:zza0} and \eqref{eq:zepa}, $\mathcal{Z}_{\{(1,2),(2,1),(2,2)\}}\neq\varnothing$ requires non-positivity of $\alpha$ and $\beta_{-}$. 
However, both $\alpha$ and $\beta_{-}$ cannot be non-positive because
\begin{equation}\label{eq:aebel}
\alpha(\epsilon_{21}+\epsilon_{22})+\beta_{-}
=\epsilon_{21}(\lambda_{22}^{2}-\epsilon_{22}^{2})
+\epsilon_{22}(\lambda_{21}^{2}-\epsilon_{21}^{2})>0.
\end{equation}
Thus, $\mathcal{Z}_{\{(1,2),(2,1),(2,2)\}}$ is always empty.\\
%%%%%%%%%%%%%%%%%%%%%%%%%
\indent Now, for the proof of \eqref{eq:nscthrpm1}($\Leftarrow$),
let $\mathsf{S}$ be a proper subset of $\Omega$ 
satisfying the necessary condition of \eqref{eq:nscthrpm1}. 
Then, $\mathcal{X}_{\mathsf{S}'}$ is nonempty 
for all $\mathsf{S}'\subsetneq\mathsf{S}$ with $|\mathsf{S}'|=2$ from \eqref{eq:xvvv} and \eqref{eq:axvpm}. Thus, we can see from Lemma~\ref{lem:sufc} and \eqref{eq:zza0} that $\mathcal{Z}_{\mathsf{S}}$ is nonempty if $\mathcal{X}_{\mathsf{S}'}^{\mathsf{S}}$ is empty for all $\mathsf{S}'\subsetneq\mathsf{S}$ with $|\mathsf{S}'|=2$.
Using \eqref{eq:vvvvc2} and \eqref{eq:vvrab}, we can easily show that
$\mathcal{X}_{\mathsf{S}'}^{\mathsf{S}}=\varnothing$ for all $\mathsf{S}'\subsetneq\mathsf{S}$ with $|\mathsf{S}'|=2$. Therefore, $\mathcal{Z}_{\mathsf{S}}\neq\varnothing$. That is, \eqref{eq:nscthrpm1}($\Leftarrow$) is proved.
\end{proof}
\renewcommand*{\proofname}{Proof}
%%%%%%%%%%%%%%%%%%%%%%%%%
\renewcommand*{\proofname}{Proof of {\rm (c)}}
\begin{proof}
\indent Let us assume $\mathcal{Z}\subseteq\mathcal{P}_{\Omega}$.
Since a null MEPI measurement exists from Corollary~\ref{cor:null},
$\mathcal{Z}_{\mathsf{S}}\neq\varnothing$ for some $\mathsf{S}$ in \eqref{eq:zinp} with $\mathsf{S}\neq\Omega$.
When $\bm{v}$ is the single element  in $\mathcal{Z}_{\mathsf{S}}$,
we can see from the definition of $\mathcal{Z}_{\Omega}$ that $\mathcal{Z}_{\Omega}\neq\varnothing$ if and only if $\varphi_{\bm{\omega}}(\bm{v})=\varphi_{\bm{\omega}'}(\bm{v})$ for all $(\bm{\omega},\bm{\omega}')\in\mathsf{S}\times(\Omega-\mathsf{S})$. Thus,
when $|\mathsf{S}|=3$, we can show from \eqref{eq:vxuv} that $\mathcal{Z}_{\Omega}\neq\varnothing$ is equivalent to $\alpha=0$.\\
%%%%%%%%%%%%%%%%%%%%%%%%%
\indent When $|\mathsf{S}|=2$, we can verify from Eqs.~\eqref{eq:cvp} and \eqref{eq:vvrab} that $\mathcal{Z}_{\Omega}\neq\varnothing$ is equivalent to $\alpha=0$. Note that $\alpha=0$ means $\beta_{+}=0$
when $\mathsf{S}=\{(1,1),(2,2)\}$
because $\alpha\geqslant|\beta_{+}|/\gamma_{+}$ is a necessary condition for $\mathcal{Z}_{\{(1,1),(2,2)\}}\neq\varnothing$.
We also note that $\alpha=0$ implies $\beta_{-}=0$
when $\mathsf{S}=\{(1,2),(2,1)\}$
because $\alpha\leqslant-|\beta_{-}|/\gamma_{-}$ is necessary for $\mathcal{Z}_{\{(1,2),(2,1)\}}\neq\varnothing$, therefore $\alpha\neq0$ and $\mathcal{Z}_{\Omega}=\varnothing$
when $\mathsf{S}=\{(1,2),(2,1)\}$ since both $\alpha$ and $\beta_{-}$ cannot be zero by Inequality \eqref{eq:aebel}.
Thus, $\alpha=0$ is a necessary and sufficient condition for $\mathcal{Z}_{\Omega}\neq\varnothing$ when $\mathcal{Z}\subseteq\mathcal{P}_{\Omega}$. 
This completes our proof of Theorem~\ref{thm:prfthr}(c).
\end{proof}
\renewcommand*{\proofname}{Proof}
%%%%%%%%%%%%%%%%%%%%%%%%%

%%%%%%%%%%%%%%%%%%%%%%%%%
%       Reference       %
%%%%%%%%%%%%%%%%%%%%%%%%%

\end{document}